\def \E{\mathbb E}
\newcommand*{\email}[1]{
    \normalsize\href{mailto:#1}{\texttt{#1}}
    }
\newtheorem{theorem}{Theorem}
\newtheorem{remark}{Remark}
\newtheorem{lemma}{Lemma}
\newtheorem{definition}{Definition}
\DeclareMathOperator*{\argmin}{argmin}
\title{Discrete-Valued Latent Preference Matrix Estimation\\ with Graph Side Information}
\date{University of Wisconsin-Madison}
\author{
  Changhun Jo \\ \email{cjo4@wisc.edu} \and   Kangwook Lee \\ \email{kangwook.lee@wisc.edu}
}
\begin{document}

\maketitle

\begin{abstract}
Incorporating graph side information into recommender systems has been widely used to better predict ratings, but relatively few works have focused on theoretical guarantees. Ahn et al. (2018) firstly characterized the optimal sample complexity in the presence of graph side information, but the results are limited due to strict, unrealistic assumptions made on the unknown latent preference matrix and the structure of user clusters. In this work, we propose a new model in which 1) the unknown latent preference matrix can have any discrete values, and 2) users can be clustered into multiple clusters, thereby relaxing the assumptions made in prior work. Under this new model, we fully characterize the optimal sample complexity and develop a computationally-efficient algorithm that matches the optimal sample complexity. Our algorithm is robust to model errors and outperforms the existing algorithms in terms of prediction performance on both synthetic and real data.
\end{abstract}

\section{Introduction}
\label{sec:1}

Recommender systems provide suggestions for items based on users' decisions such as ratings given to those items.
Collaborative filtering is a popular approach to designing recommender systems~\cite{Herlocker_1999,Sarwar_2001, Linden_2003, Rennie_2005, Salakhutdinov_2007, Salakhutdinov_2008, Agarwal_2010, Davenport_2014}.
However, collaborative filtering suffers from the well-known cold start problem since it relies only on past interactions between users and items. With the exponential growth of social media, recommender systems have started to use a social graph to resolve the cold start problem. For instance, \citet{Jamali_2010} provide an algorithm that handles the cold start problem by exploiting social graph information.

While a lot of works have improved the performance of algorithms by incorporating graph side information into recommender systems~\cite{Jamali_2009_1, Jamali_2009_2, Jamali_2010, Cai_2011, Ma_2011, Yang_2012, Yang_2013, Kalofolias2014MC}, relatively few works have focused on justifying theoretical guarantees of the performance~\cite{Chiang_2015, Rao_2015, Ahn_2018}. One notable exception is the recent work of \citet{Ahn_2018}, which finds the minimum number of observed ratings for reliable recovery of the latent preference matrix with social graph information and partial observation of the rating matrix. They also provide an efficient algorithm with low computational complexity.
However, the assumptions made in this work are too strong to reflect the real-world data. In specific, they assume that each user rates each item either $+1$ (like) or $-1$ (dislike), and that the observations are noisy so that they can be flipped with probability $\theta \in (0,\frac{1}{2})$. This assumption can be interpreted as each user rates each item $+1$ with probability $1-\theta$ or $\theta$. 
Note that this parametric model is very limited, so the discrepancy between the model and the real world occurs; if a user likes item a, b and c with probability $\sfrac{1}{4}, \sfrac{1}{3}$ and $\sfrac{3}{4}$ respectively, then the model cannot represent this case well (see Remark \ref{rmk:3} for a detailed description).

\begin{figure}[t]
    \centering
    \begin{subfigure}[b]{0.49\columnwidth}
        \includegraphics[width=\textwidth]{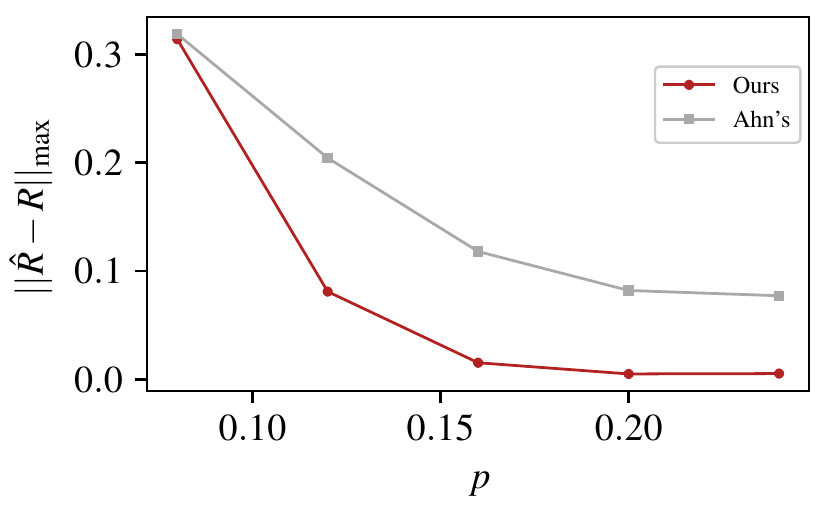}
        \caption{Synthetic $N^\Omega$ + synthetic $G$}
        \label{fig:1a}
    \end{subfigure}
    \begin{subfigure}[b]{0.49\columnwidth}
        \includegraphics[width=\textwidth]{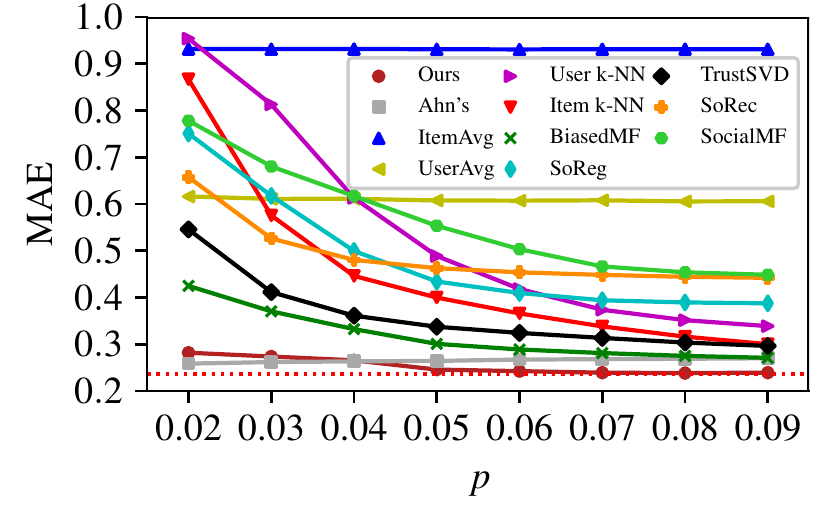}
        \caption{Synthetic $N^\Omega$ + real $G$~\cite{51inAhn}}
        \label{fig:1b}
    \end{subfigure}
    \caption{Performance comparison of various algorithms for latent preference estimation with graph side information. The $x$-axis is the probability of observing each rating ($p$), and the $y$-axis is the estimation error measured in $\|\cdot\|_\text{max}$ or the mean absolute error (MAE).
    (a) Our algorithm vs \cite{Ahn_2018} where $d=2$, $p_1 = 0.3$, $p_2 = 0.62$. \cite{Ahn_2018} performs badly due to the asymmetry of latent preference levels. (b) Our algorithm vs various algorithms proposed in the literature on real graph data and synthetic ratings. Observe that ours strictly outperforms all the existing algorithms for almost all tested values of $p$.}
    \label{fig:1}
\end{figure}

\begin{table*}[t]
\caption{MAE comparison with other algorithms on real $N^{\Omega}$ + real $G$~\cite{Massa_2007, Massa_2008}}
\label{table:1}
\vspace{-0.15in}
\begin{center}
\begin{scriptsize}
\begin{sc}
\begin{tabular}{ccccccccccc}
\toprule
ItemAvg & UserAvg & User k-NN & Item k-NN & BiasedMF & SocialMF & SoRec & SoReg & TrustSVD & Ahn's & Ours\\
\midrule
0.547 & 0.731 & 0.614 & 0.664 & 0.592 & 0.591 & 0.592 & 0.576 & 0.567 & 0.567 & $\mathbf{0.540}$\\
\bottomrule
\end{tabular}
\end{sc}
\end{scriptsize}
\end{center}
\vskip -0.1in
\end{table*}
This motivates us to propose a general model that better represents real data.
Specifically, we assume that user $i$ likes item $j$ with probability $R_{ij}$, which we call user $i$'s latent preference level on item $j$, and $R_{ij}$ belongs to the discrete set $\{p_1, \dots, p_d\}$ where $d\ge 1$ and $0< p_1 < \dots < p_d < 1$.    
As $d$ can be any positive integer, our generalized model can reflect various preference levels on different items.    
In addition to that, we assume that the social graph information follows the Stochastic Block Model (SBM)~\cite{Holland_1983}, and the social graph is correlated with the latent preference matrix $R$ in a specific way, which we will detail in Sec.~\ref{sec:3}.
Under this highly generalized model, we fully characterize the optimal sample complexity required for estimating the latent preference matrix $R$.
To the best of our knowledge, this work is the first theoretical work that shows the optimal sample complexity of latent preference estimation with graph side information without making strict assumptions on the rating generation model, made in all the prior work~\cite{Ahn_2018,8636058,Elmahdy2020Hierarchical,changhonew}.
We also develop an algorithm with low computational complexity, and our algorithm is shown to consistently outperform all the proposed algorithms in the literature including those of~\cite{Ahn_2018} on synthetic/real data.

To further highlight the limitation of the proposed algorithms developed under the strict assumptions used in the literature, we present various experimental results in Fig.~\ref{fig:1}.
(We will revisit the experimental setting in Sec.~\ref{sec:6}.)
In Fig.~\ref{fig:1a}, we compare our algorithm with that of~\cite{Ahn_2018} on synthetic rating $N^\Omega$ and synthetic graph $G$. 
Here, we set $d=2, p_1 = 0.3,p_2 = 0.62$, i.e., the symmetry assumption $p_1 + p_2 = 1$ does not hold anymore.
We can see that our algorithm significantly outperforms the algorithm of~\cite{Ahn_2018} in terms of the estimation error for all tested values of $p$, where $p$ denotes the probability of observing each rating.
This clearly shows that their algorithm quickly breaks down even when the modeling assumption is just slightly off.
Shown in Fig.~\ref{fig:1b} is the performance of various algorithms on synthetic rating/real graph, and we observe that the estimation error of~\cite{Ahn_2018} increases as the observation rate $p$ increases unlike all the other algorithms. 
(We discuss why this unexpected phenomenon happens in more details in Sec.~\ref{sec:6}.)
On the other hand, our algorithm outperforms all the existing baseline algorithms for almost all tested values of $p$ and does not exhibit any unexpected phenomenon. In Table~\ref{table:1}, we observe that our algorithm outperforms all the other algorithms even on real rating/real graph data,
although the improvement is not significant than the one for synthetic rating/real graph data.
These results demonstrate the practicality of our new algorithm, which is developed under a more realistic model without limiting assumptions.

This paper is organized as follows. 
Related works are given in Sec.~\ref{sec:2}. We propose a generalized problem formulation for a recommender system with social graph information in Sec.~\ref{sec:3}. Sec.~\ref{sec:4} characterizes the optimal sample complexity with main theorems. In Sec.~\ref{sec:5}, we propose an algorithm with low time complexity and provide a theoretical performance guarantee. In Sec.~\ref{sec:6}, experiments are conducted on synthetic and real data to compare the performance between our algorithm and existing algorithms in the literature. Finally, we discuss our results in Sec.~\ref{sec:7}.
All the proofs and experimental details are given in the appendix.
\subsection{Notation}

Let $[n]=\{1,2,\dots,n\}$ where $n$ is a positive integer, and let $\mathbb{1}(\cdot )$ be the indicator function. An undirected graph $G$ is a pair $(V,E)$ where $V$ is a set of vertices and $E$ is a set of edges. For two subsets $X$ and $Y$ of the vertex set $V$, $e(X,Y)$ denotes the number of edges between $X$ and $Y$.

\section{Related Work}\label{sec:2}

Collaborative filtering has been widely used to design recommender systems. There are two types of methods commonly used in collaborative filtering; neighborhood-based method and matrix factorization-based method. The neighborhood-based approach predicts users' ratings by finding similarity between users~\cite{Herlocker_1999}, or by finding similarity between items~\cite{Sarwar_2001, Linden_2003}. In the matrix factorization-based approach, it assumes users' latent preference matrix is of a certain structure, e.g., low rank, so the latent preference matrix can be decomposed into two matrices of low dimension~\cite{Rennie_2005, Salakhutdinov_2007, Salakhutdinov_2008, Agarwal_2010}. In particular, \citet{Davenport_2014} consider binary (1-bit) matrix completion and show that the maximum likelihood estimate is accurate under suitable conditions.

Since collaborative filtering relies solely on past interactions between users and items, it suffers from the cold start problem; collaborative filtering cannot provide a recommendation for new users since the system does not have enough information. A lot of works have been done to resolve this issue by incorporating social graph information into recommender systems. In specific, the social graph helps neighborhood-based method to find better neighborhood~\cite{Jamali_2009_1, Jamali_2009_2, Yang_2012, Yang_2013}. Some works add social regularization terms to the matrix factorization method to improve the performance~\cite{Cai_2011, Jamali_2010, Ma_2011, Kalofolias2014MC}.

Few works have been conducted to provide theoretical guarantees of their models that consider graph side information. \citet{Chiang_2015} consider a model that incorporates general side information into matrix completion, and provide statistical guarantees. \citet{Rao_2015} derive consistency guarantees for graph regularized matrix completion.

Recently, several works have studied the binary rating estimation problem with the aid of social graph information~\cite{Ahn_2018,8636058,Zhang_2020,Elmahdy2020Hierarchical,changhonew}.
These works characterize the optimal sample complexity as the minimum number of observed ratings for reliable recovery of a latent preference matrix under various settings, and find how much the social graph information reduces the optimal sample complexity. 
In specific, \citet{Ahn_2018} study the case where users are clustered in two equal-sized groups, and \citet{8636058} generalize the results of \cite{Ahn_2018} to the multi-cluster case. 
\citet{Zhang_2020, changhonew} study the problem where both user-to-user and item-to-item similarity graphs are available. Lastly, \citet{Elmahdy2020Hierarchical} adopt the hierarchical stochastic block model to handle the case where each cluster can be grouped into sub-clusters.
However, all of these works require strict assumptions on the rating generation model, which is too limited to well capture the real-world data.

Our problem can also be viewed as ``node label inference on SBM,'' where nodes are users, edges are for social connections, node labels are $m$-dimensional rating vectors (consisting of $-1, 0, 1$), and node label distributions are determined by the latent preference matrix. Various works have studied recovery of clusters in SBM in the presence of node labels~\cite{J.Yang_2013, Saad_2018} or edge labels~\cite{Heimlicher_2012, Jog_2015, Yun_2016}. While their goal is recovery of clusters, \citet{Jiaming_2014} study ``edge label inference on SBM'' whose goal is to recover edge label distributions as well as clusters.

\begin{remark}
\emph{While our problem shares high similarities with ``edge label'' inference on SBM, studied in~\cite{Jiaming_2014}, there exist some critical differences. 
To see the difference, consider a very sparse graph where many nodes are isolated. Edge label inference is impossible in this regime since there is no observed information about those isolated nodes (see Thm.~2 in~\cite{Jiaming_2014} for more details). 
On the other hand, in node labelled cases, we still observe information about isolated nodes from their node labels, so it is possible to infer node label distributions as long as we observe enough number of node labels.}
\end{remark}

\section{Problem Formulation}\label{sec:3}
Let $[n]$ be the set of users, and let $[m]$ be the set of items where $m$ can scale with $n$. For $i\in [n]$ and $j\in [m]$, $R_{ij}$ denotes user $i$'s \emph{latent preference level} on item $j$, that is, user $i$'s rating on item $j$ is $+1$ (like) with probability $R_{ij}$ or $-1$ (dislike) with probability $1-R_{ij}$. We assume that latent preference levels take values in the discrete set $\{p_1, p_2, \dots , p_d\}$ where $d\ge 1$ and $0< p_1 < \dots < p_d < 1$. The \emph{latent preference matrix} $R$ is the $n\times m$ matrix whose $(i,j)$-th entry is $R_{ij}$. The \emph{latent preference vector} of user $i$ is the $i$-th row of $R$.

We further assume that $n$ users are clustered into $K$ clusters, and the users in the same cluster have the same latent preference vector. More precisely, let $C:[n]\rightarrow[K]$ be the cluster assignment function where $C(i)=k$ if user $i$ belongs to the $k$-th cluster. The inverse image $C^{-1}(\{k\})$ is the set of users whose cluster assignment is $k$, so the users in $C^{-1}(\{k\})$ have the same latent preference vector by the assumption. We denote the latent preference vector of the users in $C^{-1}(\{k\})$ by $u_k$ for $k\in [K]$. Note that the latent preference matrix $R$ can be completely recovered with the cluster assignment function $C:[n]\rightarrow[K]$ and the corresponding preference vectors $u_1, \dots, u_K$.

As the latent preference vector and the cluster assignment function are generally unknown in the real world, we estimate them with observed ratings on items and the social graph.

\paragraph{Observed rating matrix $N^\Omega$} We assume that we observe binary ratings of users independently with probability $p$ where $p\in [0,1]$. We denote a set of observed entries by $\Omega$ which is a subset of $[n]\times [m]$.
Then, the $(i,j)$-th entry of the observed rating matrix $N^{\Omega}$ is defined by user $i$'s rating on item $j$ if $(i,j)\in\Omega$ and $0$ otherwise.
That is, $(N^{\Omega})_{ij} \overset{\text{iid}}{\sim} \text{Bern}(p)\cdot (2 \text{Bern}(R_{ij})-1)$.

\begin{figure*}[t]
	\includegraphics[width=0.95\textwidth]{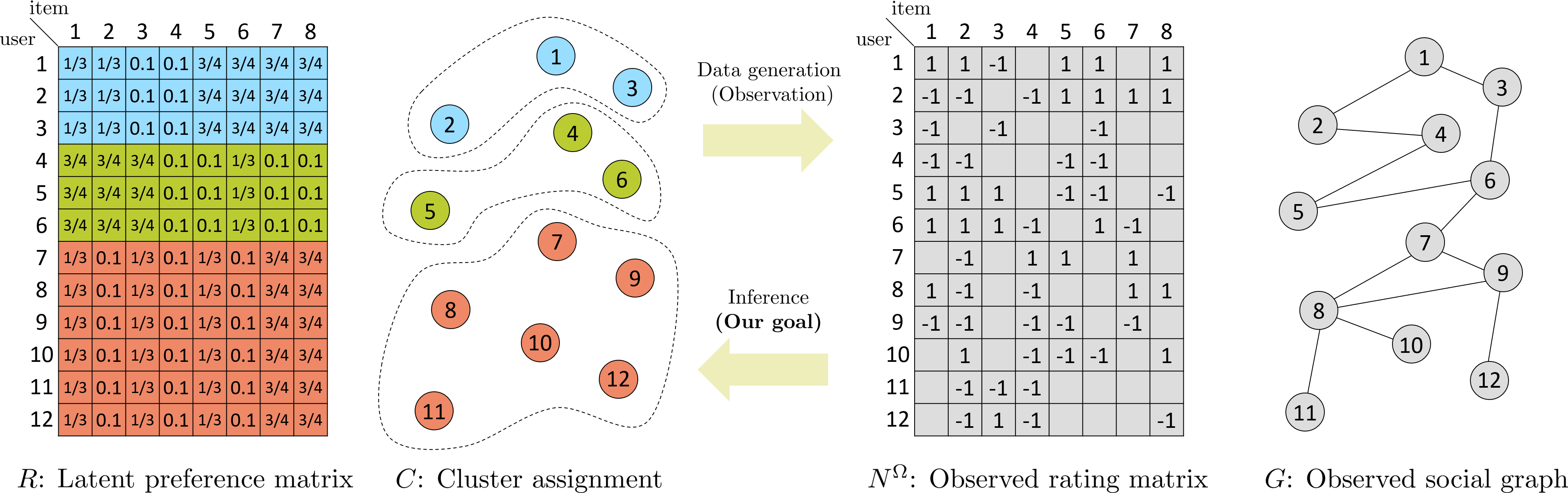}
    \centering
    \caption{A toy example of our model where $n=12, m=8, d=3, p_1 = 0.1, p_2 = \frac{1}{3}, p_3 = \frac{3}{4}, K=3, C^{-1}(\{1\})=\{1,2,3\}, C^{-1}(\{2\})=\{4,5,6\}, C^{-1}(\{3\})=\{7,8,9,10,11,12\}, p=0.5, \alpha = 0.6, \beta = 0.1$.}
    \label{fig:model}
\end{figure*}

\paragraph{Observed social graph $G$} We observe the social graph $G=([n], E)$ on $n$ users, and we further assume that the graph is generated as per the stochastic block model (SBM)~\cite{Holland_1983}. Specifically, we consider the symmetric SBM. If two users $i$ and $j$ are from the same cluster, an edge between them is placed with probability $\alpha$, independently of the others. If they are from the different clusters, the probability of having an edge between them is $\beta$, where $\alpha \ge \beta$.

Fig.~\ref{fig:model} provides a toy example that visualizes how our observation model is realized by the latent preference matrix and the cluster assignment. Given this observation model, the goal of latent preference estimation with graph side information is to find an estimator $\psi(N^{\Omega}, G)$ that estimates the latent preference matrix $R$.

\begin{remark}
\emph{(Why binary rating?) Binary rating has its critical applications such as click/impression-based advertisement recommendation, in which only $-1$ (shown, not clicked), $0$ (not shown), $1$ (shown, clicked) information is available. Moreover, binary rating is gaining increasing interests in the  industry due to its simplicity and robustness. This is precisely why Youtube and Netflix, two of the largest media recommendation systems, have discarded their ``star rating systems'' and employed binary ratings in 2009~\cite{Youtube_2009} and in 2017~\cite{Netflix_2017}, respectively.}
\end{remark}

\begin{remark}\label{rmk:3}
\emph{\citet{Ahn_2018} assume that each user rates each item either $+1$ (like) or $-1$ (dislike), and that the observations are noisy so that they can be flipped with probability $\theta \in (0,\frac{1}{2})$. 
This assumption can be interpreted as each user rates each item $+1$ with probability $1-\theta$ (when the user's true rating is $+1$) or $\theta$ (when the user's true rating is $-1$). 
Therefore, our model reduces to the model of~\cite{Ahn_2018} by setting $d=2, p_1=\theta, p_2=1-\theta, K=2, |C^{-1}(\{1\})|=|C^{-1}(\{2\})|=\frac{n}{2}$.
As mentioned in Sec.~\ref{sec:1}, the parametric model used in \cite{Ahn_2018} is very limited.
For example, consider the following two latent preference matrices $R_1 =\begin{bmatrix}
\sfrac{1}{4} & \sfrac{1}{4} & \sfrac{3}{4} & \sfrac{3}{4}\\
\sfrac{1}{4} & \sfrac{1}{4} & \sfrac{1}{4} & \sfrac{1}{4}
\end{bmatrix}, R_2 =\begin{bmatrix}
\sfrac{1}{3} & \sfrac{1}{4} & \sfrac{3}{4} & \sfrac{3}{4}\\
\sfrac{1}{3} & \sfrac{1}{4} & \sfrac{1}{4} & \sfrac{1}{4}
\end{bmatrix}$ where $n=2, m=4$. Then $R_1$ can be represented by the model used in [Ahn et al., 2018] with $\theta = \frac{1}{4}$, but $R_2$ cannot be handled by their model with any choice of $\theta$ since there are more than two latent preference levels in $R_2$.}
\end{remark}

\begin{remark}
\emph{Without graph observation, our observation model reduces to a special case of the observation model for the binary ($1$-bit) matrix completion shown in Sec. 2.1. of~\cite{Davenport_2014}.}
\end{remark}

\section{Fundamental Limit on Sample Complexity}\label{sec:4}
We now characterize the fundamental limit on the sample complexity. We first focus on the two equal-sized clusters case (i.e., $K=2, |C^{-1}(\{1\})|=|C^{-1}(\{2\})|=\frac{n}{2}$) and will extend the results to the multi-cluster case.
We use $A_R$ and $B_R$ for the ground-truth clusters and $u_R$ and $v_R$ for the corresponding latent preference vectors, respectively.
We define the worst-case error probability as follows. 

\begin{definition}\label{def:1}
(Worst-case probability of error for two equal-sized clusters) Let $\gamma$ be a fixed number in $(0,1)$ and $\psi$ be an estimator that outputs a latent preference matrix in $\{p_1,p_2, \dots, p_d\}^{n\times m}$ based on $N^{\Omega}$ and $G$. 
We define the worst-case probability of error $P_e^{\gamma}(\psi):= \max\big\{\Pr(\psi(N^\Omega, G)\neq R) : R\in \{p_1, p_2, \dots , p_d\}^{n\times m}, \|u_R-v_R\|_0=\lceil \gamma m \rceil \big\}$ where $\|\cdot\|_0$ is the hamming distance.
\end{definition}

A latent preference level $p_i \in [p_1, \dots, p_d]$ implies that the probability of choosing $(+1, -1)$ is $(p_i, 1-p_i)$ respectively, so it corresponds to a discrete probability distribution $(p_i, 1-p_i)$.
For two latent preference levels $p_i, p_j \in [p_1, \dots, p_d]$,
\emph{the Hellinger distance} between two discrete probability distributions $(p_i, 1-p_i)$ and $(p_j, 1-p_j)$, denoted $d_H(p_i, p_j)$, is
$$\frac{1}{\sqrt{2}}\sqrt{(\sqrt{p_i}-\sqrt{p_j})^2+(\sqrt{1-p_i}-\sqrt{1-p_j})^2}.$$
Then, \emph{the minimum Hellinger distance} of the set of discrete-valued latent preference levels $\{p_1, \dots, p_d\}$, denoted $d^{\min}_H$, is 
$$\min \{d_H(p_i, p_j) : i \neq j \in [d]\}. $$
Below is our main theorem that characterizes a sharp threshold of $p$, the probability of observing each rating of users, for reliable recovery as a function of $n, m, \gamma, \alpha, \beta, d^{\min}_H$.

\begin{theorem}\label{thm:1}
Let $K=2, |C^{-1}(\{1\})|=|C^{-1}(\{2\})|=\frac{n}{2}, \gamma \in (0,1)$, $m=\omega (\log n)$, $\log m=o(n)$, 
$I_s\footnote{\citet{Ahn_2018} made implicit assumptions that $\alpha,\beta\rightarrow 0$ and $\frac{\alpha}{\beta}\rightarrow 1$ as $n\rightarrow \infty$. These assumptions are used when they approximate $-2\log ( 1- d^2_H(\alpha, \beta) )=(1+o(1))(\sqrt{\alpha}-\sqrt{\beta})^2$. The approximation does not hold without above assumptions, in explicit, $-2\log ( 1- d^2_H(\alpha, \beta) )=(\sqrt{\alpha}-\sqrt{\beta})^2 \left\{\frac{(\sqrt{\alpha}+\sqrt{\beta})^2}{4\beta(1-\beta)}+o(1)\right\}$ (see the appendix for the derivation). 
The MLE achievability part of our theorem does not make any implicit assumptions, and the results hold for any $\alpha$ and $\beta$ with our modified definition of $I_s:=-2\log ( 1- d^2_H(\alpha, \beta) )$.}:=-2\log \big( 1- d^2_H(\alpha, \beta) \big).$
Then, the following holds for arbitrary $\epsilon>0$.

\noindent (I) if $p\ge \frac{1}{(d^{\min}_H)^2} \max\left\{\frac{(1+\epsilon) \log n - \frac{n}{2}I_s}{\gamma m},\frac{(1+\epsilon)2\log m}{n}\right\}$,
then there exists an estimator $\psi$ that outputs a latent preference matrix in $\{p_1,p_2, \dots, p_d\}^{n\times m}$ based on $N^{\Omega}$ and $G$ such that $P_e^{\gamma}(\psi)\rightarrow 0$ as $n \rightarrow \infty$.

\noindent (II) if $p\le \frac{1}{(d^{\min}_H)^2} \max\left\{\frac{(1-\epsilon) \log n - \frac{n}{2}I_s}{\gamma m},\frac{(1-\epsilon)2\log m}{n}\right\}$ and $\alpha = O(\frac{\log n}{n})$, then $P_e^{\gamma}(\psi)\nrightarrow 0$ as $n \rightarrow \infty$ for any $\psi$.
\end{theorem}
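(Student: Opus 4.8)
The plan is to prove the two directions of Theorem~\ref{thm:1} separately, following the information-theoretic template: an achievability result via a carefully analyzed Maximum Likelihood Estimator (MLE), and a converse via a genie-aided reduction to a hypothesis-testing lower bound. The key conceptual step is to observe that the problem decouples into two subproblems that run ``in parallel'': (i) recovering the cluster structure (a community-detection problem on the SBM $G$, aided by the correlated ratings), and (ii) recovering the $d$-valued latent preference levels of each cluster (a collection of per-entry discrete hypothesis tests, where the effective noise is governed by $d^{\min}_H$ and the effective sample size by $p$ times the cluster size). The two terms in the $\max\{\cdot,\cdot\}$ in the threshold correspond exactly to the bottleneck of each subproblem: the first term $\frac{(1\pm\epsilon)\log n - \frac n2 I_s}{\gamma m}$ governs cluster recovery (the $-\frac n2 I_s$ is precisely the ``help'' that $\frac n2$ graph neighbors provide, via Chernoff/Hellinger exponents on the SBM edges), while the second term $\frac{(1\pm\epsilon)2\log m}{n}$ governs recovering the $\lceil\gamma m\rceil$ coordinates on which $u_R$ and $v_R$ differ, each coordinate being estimated from $\approx \frac n2$ users' observed ratings.

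\textbf{Achievability (part I).} I would analyze the MLE $\hat\psi$, which jointly maximizes the likelihood of $(N^\Omega,G)$ over all cluster assignments into two equal halves and all pairs $(u,v)\in\{p_1,\dots,p_d\}^m$ with $\|u-v\|_0=\lceil\gamma m\rceil$. The analysis proceeds by a union bound over ``error events'' indexed by how the MLE output differs from the truth. I would stratify these events: first, conditioned on the correct clustering, a wrong preference matrix requires some cluster to have at least one misestimated coordinate; the probability that the per-coordinate MLE (over $d$ candidate Bernoulli parameters, from $\mathrm{Binomial}(\approx p\cdot\frac n2, p_\ell)$ data) errs is controlled by a Chernoff bound with exponent $\Theta((d^{\min}_H)^2 \cdot p \cdot \frac n2)$, and requiring this to beat $m$ (union bound over coordinates) plus a $\gamma$-factor gives the second term. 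Second, an error that also corrupts the clustering: moving $t$ users across the cluster boundary changes the graph-likelihood by a term with Chernoff exponent $\approx t\cdot\frac n2\cdot I_s$ and changes the rating-likelihood on the $\approx\gamma m$ differing coordinates by an exponent $\approx t\cdot (d^{\min}_H)^2\cdot p\cdot\gamma m$; balancing against the $\binom{n}{t}\le n^t$ enumeration cost yields the first term. Summing these tail bounds and using $m=\omega(\log n)$, $\log m = o(n)$ to kill lower-order terms gives $P_e^\gamma(\hat\psi)\to 0$. The footnote's exact (non-approximated) formula for $I_s$ will be used here so that no $\alpha,\beta\to 0$ assumption is needed on the achievability side.

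\textbf{Converse (part II).} I would use the standard Fano / genie-argument: to lower bound the worst-case error of \emph{any} estimator, reveal extra information to the estimator and show that even then it fails. For the second term, fix the true clustering and the value of $u_R$, and consider the family of $R$'s obtained by flipping a single coordinate (among the $\lceil\gamma m\rceil$ active ones) of $v_R$ to an adjacent preference level; with the clustering known, each such coordinate is an independent binary hypothesis test from $\approx \mathrm{Binomial}(p\frac n2,\cdot)$ samples, and a coupon-collector / maximal-inequality argument shows that if $p\frac n2 \cdot (d^{\min}_H)^2 \le (1-\epsilon)\cdot 2\log m$ then with probability bounded away from zero at least one coordinate is wrong. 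For the first term, reveal all ratings \emph{and} all of $G$ except the edges incident to one uncertain user $i$, whose cluster membership must then be decided from $\approx\frac n2$ SBM edges (exponent $\frac n2 I_s$) together with that user's $\approx p\gamma m$ observed ratings on the differing coordinates (exponent $p\gamma m (d^{\min}_H)^2$); if the sum of these two exponents is $\le (1-\epsilon)\log n$, a union/second-moment argument over the $n$ candidate users shows some user is misclassified whp. The assumption $\alpha=O(\frac{\log n}{n})$ is what makes the ``reveal all but one user's edges'' reduction tight (it keeps the per-user edge information at the $\log n$ scale).

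\textbf{Main obstacle.} The hardest part will be the converse's first term: cleanly combining the two heterogeneous sources of information about a single user's label --- the $\Theta(\log n)$-sized bundle of SBM edges and the $\Theta(p\gamma m)$-sized bundle of ratings on the active coordinates --- into one hypothesis test whose error exponent is exactly the \emph{sum} $\frac n2 I_s + p\gamma m(d^{\min}_H)^2$, and then showing via a (correlated) second-moment or Suen-type inequality that \emph{some} user among the $\Theta(n)$ ambiguous ones is misclassified when this sum dips below $(1-\epsilon)\log n$. Getting the constant and the $\epsilon$ to land exactly, while handling the dependence introduced because flipping a user's label also changes which coordinates are ``active,'' is the delicate step; the rest (the per-coordinate Binomial Chernoff estimates, the union bounds, and the multi-cluster extension by reducing to pairs of clusters) is routine.
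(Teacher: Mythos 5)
Your achievability argument is essentially the paper's: the paper analyzes the MLE via a union bound over candidate matrices stratified by the tuple $(k,a_1,a_2,b_1,b_2)$ recording how many users are swapped across the two clusters and how many coordinates of each preference vector differ, bounds each term by a Hellinger-affinity Chernoff bound of the form $\{1-p(d^{\min}_H)^2\}^{D_X}\{1-d_H^2(\alpha,\beta)\}^{4(\frac n2-k)k}$, and splits the resulting sum into four regimes exactly along the lines of your two bottlenecks; your description matches this step for step. The converse also follows your genie-aided template --- first reduce to a constrained MLE over $\{p_{d_0},p_{d_0+1}\}$-valued matrices (the adjacent pair of levels achieving $d^{\min}_H$, which is how that constant enters the lower bound), then handle the rating bottleneck by independent single-coordinate flips and the clustering bottleneck by single-user swaps --- but at precisely the step you flag as the main obstacle the paper takes a simpler route than your second-moment/Suen proposal. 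Rather than arguing over all $\Theta(n)$ ambiguous users and controlling the correlations among their swap events, it first prunes the vertex set (its Lemma~6): with probability $1-o(1)$ under $\alpha=O(\frac{\log n}{n})$ one can extract subsets $A_P\subset A_R$ and $A_Q\subset B_R$ of size $\frac{n}{\log^3 n}$ with \emph{no edges} inside $A_P\cup A_Q$, which makes the single-user-swap likelihood events exactly mutually independent. A reverse Chernoff bound (its Lemma~5) shows each swap beats the truth with probability at least $\frac14 e^{-(1+o(1))\{(\frac n2-r)I_s+\gamma m I_r\}}$, so the probability that all swaps fail is at most $\exp\{-\Theta(n^{\epsilon/2}/\log^3 n)\}\rightarrow 0$ by a plain product, with a small monotonicity lemma (its Lemma~7) to recombine the $A_P$ and $A_Q$ sides. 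Sacrificing all but a $1/\log^3 n$ fraction of users is harmless because the exponent carries $(1-\epsilon)\log n$ slack, and this entirely sidesteps the delicate correlated second-moment computation you anticipate. Your route is plausible but strictly harder than necessary; if you carry out this proof, adopt the independent-set extraction for the clustering case of the converse.
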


\begin{remark}
\emph{We note that our technical contributions lie in the proof of Thm.~\ref{thm:1}. In specific, we find the upper bound of the probability of error in Lem.~3 by using the results of Lem.~1,~2, and we made nontrivial technical contributions as we need to handle a significantly larger set of candidate latent preference matrices.}
\end{remark}

Theorem~\ref{thm:1} shows that $\frac{1}{(d^{\min}_H)^2} \max\left\{\frac{\log n - \frac{n}{2}I_s}{\gamma m},\frac{2\log m}{n}\right\}$ can be used as a sharp threshold for reliable recovery of the latent preference matrix. As $nmp$ is the expected number of observed entries, we define the optimal sample complexity for two-cluster cases as follows.

\begin{definition}\label{def:2}
$p^*_{(\gamma)}:=\frac{1}{(d^{\min}_H)^2} \max\left\{\frac{\log n - \frac{n}{2}I_s}{\gamma m},\frac{2\log m}{n}\right\}$ denotes the optimal observation rate. Then $nmp^*_{(\gamma)} = \frac{1}{(d^{\min}_H)^2} \max\left\{\frac{1}{\gamma} (  n\log n - \frac{1}{2} n^2 I_s),2m\log m\right\}$ denotes the optimal sample complexity for two-cluster cases.
\end{definition}

The optimal sample complexity for two-cluster cases is written as a function of $p_1, ..., p_d$, so the dependency on $d$ is implicit. To see the dependency clearly, we can set $p_i = \frac{i}{d+1} $. This gives us $p^*_{(\gamma)} \approx 2d^2 \max \Big\{ \frac{\log n- \frac{n}{2} I_s}{\gamma m}, \frac{2\log m}{n} \Big\}$, and $p^*_{(\gamma)}$ increases as a quadratic function of $d$.

\begin{remark}
\emph{(How does the graph information reduce the optimal sample complexity?) One can observe that $I_s$ decreases as $\alpha$ and $\beta$ get closer to each other, and $I_s=0$ when $\alpha = \beta$. Hence $I_s$ measures the quality of the graph information. If we consider the case that does not employ the graph information, it is equivalent to the case of $\alpha = \beta$ ($I_s=0$) in our model, thereby getting the optimal sample complexity of $\frac{1}{(d^{\min}_H)^2} \max\left\{\frac{1}{\gamma} n\log n ,2m\log m\right\}$. Therefore, exploiting the graph information results in the reduction of the optimal sample complexity by $\frac{1}{(d^{\min}_H)^2} \frac{1}{2 \gamma}n^2 I_s$ provided that $\frac{1}{\gamma} n\log n > 2m\log m$. Note that the optimal sample complexity stops decreasing when $I_s$ is larger than a certain threshold which implies the gain is saturated.}
\end{remark}

\begin{remark}
\emph{If we set $d=2, p_1=\theta, p_2=1-\theta$, then $(d^{\min}_H)^2=1-2\sqrt{\theta (1-\theta)}=(\sqrt{1-\theta}-\sqrt{\theta})^2$. Plugging this into the result of Theorem~\ref{thm:1}, we get $p^*_{(\gamma)}=\frac{1}{(\sqrt{1-\theta}-\sqrt{\theta})^2}\max\left\{\frac{\log n - \frac{n}{2}I_s}{\gamma m},\frac{2\log m}{n}\right\}$, recovering the main theorem of \cite{Ahn_2018} as a special case of our result.}
\end{remark}

Our results can be extended to the case of multiple (possibly unequal-sized) clusters by combining the technique developed in Theorem~\ref{thm:1} and the technique of~\cite{8636058}. Suppose $d_H(p_i, p_j)$ achieves the minimum Hellinger distance when $p_i = p_{d_0}, p_j = p_{d_0 +1}$.
Define $\mathbb{p}:\{p_1, \dots, p_d\}^m \rightarrow \{p_{d_0}, p_{d_0+1}\}^m$ that maps a latent preference vector to a latent preference vector consisting of latent preference levels $\{p_{d_0}, p_{d_0+1}\}$. In explicit, $\mathbb{p}$ sends each coordinate $x_i$ of a latent preference vector to $p_{d_0}$ if $x_i \le p_{d_0}$; $p_{d_0+1}$ if $x_i \ge p_{d_0+1}$. We now present the extended result below, while deferring the the proof to the appendix.

\begin{theorem}\label{thm:2} Let $m=\omega (\log n)$, $\log m=o(n)$, $c_k = |C^{-1}(\{k\})|, c_{i,j}=\frac{c_i +c_j}{2}$, $ \underset{n\rightarrow \infty}{\lim\inf} \frac{c_k}{n}>0~\text{for all}~k\in [K]$, $ \underset{m\rightarrow \infty}{\lim\inf} \frac{\|\mathbb{p}(u_i)-\mathbb{p}(u_j)\|_0}{m}>0~\text{for all}~i\neq j \in[K]$. Then, the following holds for arbitrary $\epsilon>0$.

\noindent (I) (achievability)\\ If $p\ge$ $\frac{1}{(d^{\min}_H)^2} \max \Big\{\underset{i\neq j \in [K]}{\max}\Big\{\frac{(1+\epsilon)\log n - c_{i,j}I_s}{\|\mathbb{p}(u_i)-\mathbb{p}(u_j)\|_0}\Big\},\underset{k\in [K]}{\max}\Big\{\frac{(1+\epsilon)\log m}{c_k}\Big\}\Big\}$,
then there exists an estimator $\psi$ such that $\Pr(\psi(N^\Omega, G)\neq R)\rightarrow 0$ as $n \rightarrow \infty$.

\noindent (II) (impossibility)\\
Suppose $R\in \{p_{d_0},p_{d_0+1}\}^{n\times m}$, $\alpha = O(\frac{\log n}{n})$. If $p\le$ $\frac{1}{(d^{\min}_H)^2} \max \Big\{\underset{i\neq j \in [K]}{\max}\Big\{\frac{(1-\epsilon)\log n - c_{i,j}I_s}{\|\mathbb{p}(u_i)-\mathbb{p}(u_j)\|_0}\Big\}, \underset{k\in [K]}{\max}\Big\{\frac{(1-\epsilon)\log m}{c_k}\Big\}\Big\}$, then $\Pr(\psi(N^\Omega, G)\neq R)\nrightarrow 0$ as $n \rightarrow \infty$ for any $\psi$.
\end{theorem}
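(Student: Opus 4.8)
\noindent\textbf{Proof plan (Theorem~\ref{thm:2}).}
The plan is to use the maximum likelihood estimator over all configurations $(C',u'_1,\dots,u'_K)$ and to reduce the analysis to two ingredients already available to us: the two-cluster argument behind Theorem~\ref{thm:1} (which does the discrete-valued preference bookkeeping and introduces $d^{\min}_H$), and the multi-cluster SBM machinery of~\cite{8636058} (which handles unequal cluster sizes and arbitrary relabelings of the $K$ clusters). The one preliminary deterministic fact I would record is that $\mathbb{p}$ is well defined because $p_{d_0},p_{d_0+1}$ are consecutive levels, and that the Bernoulli Hellinger distance is monotone, so $a\le p_{d_0}<p_{d_0+1}\le b$ forces $d_H(a,b)\ge d_H(p_{d_0},p_{d_0+1})=d^{\min}_H$. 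Hence on each coordinate $\ell$ with $\mathbb{p}(u_i)_\ell\neq\mathbb{p}(u_j)_\ell$ the clusters $i$ and $j$ induce rating distributions that are Hellinger-separated by at least $d^{\min}_H$, so the ratings alone separate clusters $i,j$ with at least $\|\mathbb{p}(u_i)-\mathbb{p}(u_j)\|_0$ useful coordinates.

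\noindent\emph{Achievability (I).} I would bound the MLE error by the union of (a) ``the estimated clustering is wrong'' and (b) ``the clustering is correct but some $u'_k\neq u_k$''. For (b): conditioned on the correct clustering, each of the $m$ coordinates of $u_k$ is decided by an independent $d$-ary likelihood test from $\mathrm{Bin}(c_k,p)$ observed ratings, so a Bhattacharyya/Chernoff bound makes the per-coordinate error at most $\exp(-(1-o(1))c_k p\,(d^{\min}_H)^2)$, and a union over the $m$ items and the worst $k$ is $o(1)$ once $p\ge\max_k\frac{(1+\epsilon)\log m}{(d^{\min}_H)^2 c_k}$ --- the second term. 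For (a): I would first price the move of a single user off its true cluster $i$ onto $j$. The graph log-likelihood drops, with exponent $(1-o(1))c_{i,j}I_s$ (the standard sparse-SBM exact-recovery computation, carried out exactly as in Theorem~\ref{thm:1} and its footnote, with $\liminf c_k/n>0$ ensuring $c_{i,j}=\Theta(n)$); and independently the rating log-likelihood drops, with exponent $(1-o(1))p\,\|\mathbb{p}(u_i)-\mathbb{p}(u_j)\|_0(d^{\min}_H)^2$ by the monotonicity fact. So a fixed user is reassigned to $j$ with probability at most $\exp\!\big(-(1-o(1))[\,c_{i,j}I_s+p\,\|\mathbb{p}(u_i)-\mathbb{p}(u_j)\|_0(d^{\min}_H)^2\,]\big)\le n^{-(1+\epsilon')}$ under the hypothesis on $p$, and a union over $n$ users and the $\binom{K}{2}$ cluster pairs is $o(1)$. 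Upgrading ``no single user flips'' to ``the MLE clustering equals $C$'' is then the large-deviation argument of~\cite{8636058}: the likelihood deficit of a clustering at Hamming distance $t$ from $C$ is, to leading order, the sum of $t$ per-user deficits, so larger perturbations are even less likely; $\liminf c_k/n>0$ and $\liminf\|\mathbb{p}(u_i)-\mathbb{p}(u_j)\|_0/m>0$ are exactly what make the cross terms lower order.

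\noindent\emph{Impossibility (II).} On the sub-family $R\in\{p_{d_0},p_{d_0+1}\}^{n\times m}$ ($\mathbb{p}$ restricted there is the identity, and every coordinate test is between $\mathrm{Bern}(p_{d_0})$ and $\mathrm{Bern}(p_{d_0+1})$, at Hellinger distance $d^{\min}_H$), I would run two genie-aided reductions and keep whichever binds. Genie $1$ reveals $C$: recovering $u_k$ is then $m$ independent binary tests with $\mathrm{Bin}(c_k,p)$ observations each, and a two-point converse (Le~Cam / Bretagnolle--Huber, with the Chernoff/Hellinger exponent) makes each test fail with probability $\ge m^{-(1-\epsilon)}$ when $p\le\frac{(1-\epsilon)\log m}{(d^{\min}_H)^2 c_k}$; by independence some coordinate is wrong with probability $1-o(1)$ --- the $\log m$ term. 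Genie $2$ fixes the pair $(a,b)$ achieving the first maximum and reveals everything except the membership of one designated user, a priori uniform on $\{a,b\}$; the Bayes-optimal decision for that user sees only its incident edges (a test between two SBM connection profiles, exponent $(1+o(1))c_{a,b}I_s$, controlled with $\alpha=O(\log n/n)$) and its $\mathrm{Bin}(m,p)$ ratings (exponent $(1+o(1))p\,\|u_a-u_b\|_0(d^{\min}_H)^2$), so it errs with probability $\ge\tfrac14\exp\!\big(-(1+o(1))[\,c_{a,b}I_s+p\,\|u_a-u_b\|_0(d^{\min}_H)^2\,]\big)$; running this over the $c_a=\Theta(n)$ users of cluster $a$ and using near-independence (a Suen/second-moment argument as in Theorem~\ref{thm:1}), at least one of them is misclassified with probability bounded away from $0$ once $p\le\frac{(1-\epsilon)\log n-c_{a,b}I_s}{(d^{\min}_H)^2\|u_a-u_b\|_0}$. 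Below the stated threshold one of these two conditions holds, so $\Pr(\psi(N^\Omega,G)\neq R)\nrightarrow 0$ for every $\psi$.

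\noindent\emph{Main obstacle.} The delicate step is part~(a) of achievability: showing \emph{exact} clustering recovery by the MLE when it may relabel clusters and move many users at once, with the graph and rating likelihoods handled \emph{jointly}. One must verify that the leading-order likelihood deficit of an arbitrary clustering perturbation decomposes additively over the misassigned users with precisely the coefficient $c_{i,j}I_s+p\,\|\mathbb{p}(u_i)-\mathbb{p}(u_j)\|_0(d^{\min}_H)^2$, and that this survives being compared against a perturbed reference clustering rather than $C$ itself; this is where the two-cluster estimate of Theorem~\ref{thm:1} and the multi-cluster combinatorics of~\cite{8636058} have to be fused, and the unequal cluster sizes make the lower-order-term control --- hence the $\liminf c_k/n>0$ hypothesis --- indispensable.
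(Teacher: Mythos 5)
Your plan follows essentially the same route as the paper's proof: achievability via the joint MLE with a union bound over candidate configurations, using the Chernoff/Hellinger per-entry bound of exponent $p(d^{\min}_H)^2$ per disagreeing rating and $c_{i,j}I_s$ per relocated user, with the multi-cluster combinatorics delegated to~\cite{8636058}; and impossibility via restriction to $\{p_{d_0},p_{d_0+1}\}^{n\times m}$ and genie-aided reductions to a single-coordinate test (the $\log m$ term) and a single-user membership test between one pair of clusters (the $\log n$ term). The only cosmetic differences are that the paper organizes the achievability union bound by the full combinatorial type $\{C^X_{i,j}\}$ rather than your two-event split, and its converse decouples the per-user tests by exhibiting an explicit edge-free subset of users (using $\alpha=O(\log n/n)$) rather than a Suen/second-moment argument.
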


\begin{remark}
\emph{One can observe that Theorem~\ref{thm:1} is a special case of Theorem~\ref{thm:2} by setting $K=2, c_1=c_2=\frac{n}{2}, \|\mathbb{p}(u_1)-\mathbb{p}(u_2)\|_0=\lceil \gamma m \rceil$.}
\end{remark}

\begin{remark}
\emph{In light of Theorem 14 in~\cite{Abbe_2018}, we conjecture that our results can be extended to asymmetric SBMs with a new definition of $I_s$ involving Chernoff-Hellinger divergence.}
\end{remark}

\section{Our Proposed Algorithm}\label{sec:5}
In this section, we develop a computationally efficient algorithm that can recover the latent preference matrix $R$ without knowing the latent preference levels $\{p_1, \dots, p_d\}$. We then provide a theoretical guarantee that if $p\ge \frac{1}{(d^{\min}_H)^2} \max\left\{\frac{(1+\epsilon) \log n - \frac{n}{2}I_s}{\gamma m},\frac{(1+\epsilon)2\log m}{n}\right\}$ for some $\epsilon >0$, then the proposed algorithm recovers the latent preference matrix with high probability. Now we provide a high-level description of our algorithm while deferring the pseudocode to the appendix.
\\
\vspace{0.5in}

\noindent \textbf{Algorithm description}
\vspace{-0.08in}
\paragraph{Input:} $N^{\Omega}\in \{-1, 0, +1\}^{n\times m}$, $G=([n], E)$, $K$, $d$, $\ell_{\max}$
\vspace{-0.12in}    
\paragraph{Output:} Clusters of users $A_1^{(\ell_{\max})},\dots, A_K^{(\ell_{\max})}$, latent preference vectors $\hat{u_1}^{(\ell_{\max})},\dots, \hat{u_K}^{(\ell_{\max})}$
\vspace{-0.12in}    
\paragraph{Stage 1. Partial recovery of clusters} We run a spectral method~\cite{Gao_2017} on $G$ to get an initial clustering result $A_1^{(0)},\dots, A_K^{(0)}$. Unless $\alpha$ is too close to $\beta$, this stage will give us a reasonable clustering result, with which we can kick-start the entire estimation procedure. Other clustering algorithms~\cite{Abbe_2015, Chin_2015,Krzakala_2013,Lei_2015} can also be used for this stage.
\vspace{-0.12in}    
\paragraph{Stage 2} We iterate \textbf{Stage 2-(i)} and \textbf{Stage 2-(ii)} for $\ell= 1,\dots, \ell_{\max}$.
\vspace{-0.12in}
\paragraph{Stage 2-(i). Recovery of latent preference vectors} In the $\ell$-th iteration step, this stage takes the clustering result $A_1^{(\ell -1)},\dots, A_K^{(\ell -1)}$ and rating data $N^{\Omega}$ as input and outputs the estimation of latent preference vectors $\hat{u_1}^{(\ell)},\dots, \hat{u_K}^{(\ell)}$.
    
First, for each cluster $A_k^{(\ell -1)}$, we estimate the latent preference levels for $d\lceil \log m \rceil$ randomly chosen items with replacement. The estimation of a latent preference level can be easily done by computing the ratio of ``the number of $+1$ ratings'' to ``the number of observed ratings (i.e., nonzero ratings)'' for each item within the cluster $A_k^{(\ell -1)}$. Now we have $K d\lceil \log m \rceil$ number of estimations, and these estimations will be highly concentrated around the latent preference levels $p_1,\dots, p_d$ under our modeling assumptions (see the appendix for the mathematical justifications). After running a distance-based clustering algorithm (see the pseudocode for details), we take the average within each cluster to get the estimations $\hat{p_1}^{(\ell)},\dots, \hat{p_d}^{(\ell)}$.

Given the estimations $\hat{p_1}^{(\ell)},\dots, \hat{p_d}^{(\ell)}$ and the clustering result $A_1^{(\ell -1)},\dots, A_K^{(\ell -1)}$, we estimate latent preference vectors $\hat{u_1}^{(\ell)},\dots, \hat{u_K}^{(\ell)}$ by maximizing the likelihood of the observed rating matrix $N^{\Omega}$ and the observed social graph $G = ([n], E)$.
In specific, the $j$-th coordinate of $\hat{u_k}^{(\ell)}$ is $\underset{\hat{p_h}^{(\ell)}:h\in [d]}{\arg \min}~\hat{L}(\hat{p_h}^{(\ell)}; A^{(\ell-1)}_k, j)$ where $\hat{L}(\hat{p_h}^{(\ell)}; A^{(\ell-1)}_k, j) := \underset{i\in A_k^{(\ell-1)}}{\sum}\big\{\mathbb{1}(N^{\Omega}_{ij}=1)(-\log \hat{p_h}^{(\ell)})+$ $\mathbb{1}(N^{\Omega}_{ij}=-1) (-\log (1-\hat{p_h}^{(\ell)}))\big\}$.
\vspace{-0.12in}
\paragraph{Stage 2-(ii). Refinement of clusters} In the $\ell$-th iteration step, this stage takes the clustering result $A_1^{(\ell -1)},\dots, A_K^{(\ell -1)}$, the estimation of latent preference vectors $\hat{u_1}^{(\ell)},\dots, \hat{u_K}^{(\ell)}$, rating data~$N^{\Omega}$, graph data $G$ as input and outputs the refined clustering result $A_1^{(\ell)},\dots, A_K^{(\ell)}$.
    
We first compute $\hat{\alpha}, \hat{\beta}$ that estimate $\alpha, \beta$ based on the clustering result $A_1^{(\ell -1)},\dots, A_K^{(\ell -1)}$ and the number of edges within a cluster and across clusters. Let $A_k^{(\ell -1, 0)} := A_k^{(\ell -1)}$ for $k\in [K]$.
Then $A_k^{(\ell -1, 0)}$'s are iteratively refined by $T = \lceil \log_2 n \rceil$ times of refinement steps as follows.
    
Suppose we have a clustering result $A_k^{(\ell -1, t-1)}$'s from the $(t-1)$-th refinement step where $t= 1, \dots, T$.
Given the estimations $\hat{\alpha}, \hat{\beta}$, the estimated latent preference vectors $\hat{u_1}^{(\ell)},\dots, \hat{u_K}^{(\ell)}$, and the clustering result $A_1^{(\ell -1, t-1)},\dots, A_K^{(\ell -1, t-1)}$, we find the refined clustering result $A_1^{(\ell -1, t)},\dots, A_K^{(\ell -1, t)}$  by updating each user's affiliation.
Specifically, for each user $i$, we put user~$i$ to $A_{k^*}^{(\ell -1, t)}$ where $k^* := \underset{k\in [K]}{\argmin}~\hat{L}(A^{(\ell -1, t-1)}_k; i)$ and $\hat{L}(A^{(\ell -1, t-1)}_k; i):=$ 
$- \underset{k'\neq k}{\sum}\Big\{\log(\hat{\beta})e(\{i\},A_{k'}^{(\ell -1, t-1)})+\log(1-\hat{\beta})\big\{|A_{k'}^{(\ell -1, t-1)}|-e(\{i\},A_{k'}^{(\ell -1, t-1)})\big\}\Big\}-$\\
$\Big\{\underset{j:N^{\Omega}_{ij}=1}{\sum}\log(\hat{u_k}^{(\ell)})_{j}+\underset{j:N^{\Omega}_{ij}=-1}{\sum}\log(1-(\hat{u_k}^{(\ell)})_{j})\Big\}$
$-\log(\hat{\alpha})e(\{i\},A_k^{(t-1)})- \log(1-\hat{\alpha})\big\{|A_k^{(\ell -1, t-1)}|-e(\{i\},A_k^{(\ell -1, t-1)})\big\}$.\\
($(\hat{u_k}^{(\ell)})_{j}$ denotes the $j$-th coordinate of $\hat{u_k}^{(\ell)}$.)
In each refinement step, the number of mis-clustered users will decrease provided that estimations $\hat{u_k}^{(\ell)}$'s, $\hat{\alpha}, \hat{\beta}$ are close enough to their true values (see the appendix for the mathematical justifications).
    
After $T$ times of refinement steps, we let $A_k^{(\ell)} := A_k^{(\ell - 1, T)}$ for $k\in [K]$. Finally, this stage outputs the refined clustering result $A_1^{(\ell)},\dots, A_K^{(\ell)}$.

\begin{remark}
\emph{The computational complexity of our algorithm can be computed as follows; $O(|E|\log n)$ for \textbf{Stage 1} via the power method~\cite{Boutsidis_2015}, $O(|\Omega|)$ for \textbf{Stage 2-(i)}, $O((|\Omega|+|E|) \log n)$ for \textbf{Stage 2-(ii)}. As $\ell_{max}$ is constant, the linear factor of $\ell_{max}$ is omitted in the computational complexity of  \textbf{Stage 2-(i),(ii)}.  Overall, our algorithm has low computational complexity of $O((|\Omega|+|E|)\log n)$.}
\end{remark}

\begin{remark}
\emph{We note that our technical contributions lie in the analysis of \textbf{Stage 2-(i)} while the analysis of \textbf{Stage~1}  and \textbf{Stage 2-(ii)} is similar to those in~\cite{Ahn_2018, 8636058}. In specific, we sample $O(\lceil \log m \rceil)$ number of items in \textbf{Stage 2-(i)} to get estimations of the latent preference levels and Lem.~8 ensures that those estimations are located in the $o(1)$-radius neighborhoods of ground-truth latent preference levels with high probability. Then Lem.~9 ensures that estimations of latent preference vectors converges to ground-truth latent preference vectors with high probability.}
\end{remark}

For the two equal-sized clusters case, the following theorem asserts that our algorithm will successfully estimate the latent preference matrix with high probability as long as the sampling probability is slightly above the optimal threshold. We defer the proof to the appendix.

\begin{theorem}\label{thm:3}
Let $\ell_{\max} = 1, K=2, |C^{-1}(\{1\})|=|C^{-1}(\{2\})|=\frac{n}{2}, \gamma \in (0,1)$, $m=\omega (\log n)$, $\log m=o(n)$, $(\sqrt{\alpha}-\sqrt{\beta})^2=\omega(\frac{1}{n})$, $m=O(n)$, and $\alpha = O(\frac{\log n}{n})$. 
Let $\phi_j$ be the ratio of the number of $p_j$'s among $(u_R)_{1},\dots,(u_R)_{m},(v_R)_{1},\dots,(v_R)_{m}$ to $2m$ for $j=1,\dots,d$, and assume that $\phi_j \nrightarrow 0$ as $n\rightarrow \infty$. If $$p\ge \frac{1}{(d^{\min}_H)^2} \max\left\{\frac{(1+\epsilon) \log n - \frac{n}{2}I_s}{\gamma m},\frac{2(1+\epsilon)\log m}{n}\right\}$$ for some $\epsilon >0$, then our algorithm outputs $\hat{R}$ where the following holds with probability approaching to $1$ as $n$ goes to $\infty$ : $\|\hat{R}-R\|_{\max} := \underset{(i,j)\in [n]\times [m]}{\max}|\hat{R}_{ij}-R_{ij}|=o(1)$.
\end{theorem}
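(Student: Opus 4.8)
\emph{Proof proposal.} The plan is to propagate error estimates through the three stages of the algorithm with $\ell_{\max}=1$. Stage~1 returns a clustering $A_1^{(0)},A_2^{(0)}$ correct up to an $o(1)$ fraction of users; Stage~2-(i) uses this crude clustering to (a) estimate each level $p_h$ to within $o(1)$ and (b) recover, for each cluster, the \emph{exact} pattern of which level occupies each of the $m$ coordinates of its latent preference vector; Stage~2-(ii) then drives the misclassification fraction to exactly $0$. On the intersection of these high-probability events the output satisfies $\hat R_{ij}=\hat p_{h(i,j)}^{(1)}$ where $p_{h(i,j)}=R_{ij}$, so $|\hat R_{ij}-R_{ij}|=|\hat p_{h(i,j)}^{(1)}-p_{h(i,j)}|=o(1)$ uniformly, giving $\|\hat R-R\|_{\max}=o(1)$. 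For Stage~1 we only use that, since $(\sqrt\alpha-\sqrt\beta)^2=\omega(1/n)$, the spectral method of~\cite{Gao_2017} on $G$ achieves an $o(1)$ misclassification fraction with probability $1-o(1)$.

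\textbf{Stage~2-(i).} Because $A_k^{(0)}$ contains only an $o(1)$ fraction of users from the wrong ground-truth cluster and has size $\tfrac n2(1+o(1))$, for any fixed sampled item $j$ the ratio of $+1$'s to observed ratings of item~$j$ within $A_k^{(0)}$ has conditional mean equal to the majority's true level $p_{h^\star}$ up to an $o(1)$ bias; since that column carries $\tfrac n2 p(1+o(1))=\Omega(\log m)\to\infty$ observations (the conditions $m=\omega(\log n)$ and $m=O(n)$ ensuring the usual concentration), a Hoeffding bound plus a union bound over the $O(\log m)$ sampled items shows all these ratios lie within $o(1)$ of their true levels. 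By $\phi_j\nrightarrow 0$ and a balls-into-bins argument the $2d\lceil\log m\rceil$ sampled items hit every level, and since $p_1,\dots,p_d$ are separated by a constant gap, distance-based clustering recovers the grouping by level and averaging yields $\hat p_h^{(1)}=p_h+o(1)$ for all $h$ (Lem.~8). Next, the algorithm sets $(\hat u_k^{(1)})_j=\argmin_h\hat L(\hat p_h^{(1)};A_k^{(0)},j)$, a per-column negative log-likelihood under $\mathrm{Bern}(\hat p_h^{(1)})$; its conditional expectation, as a function of $h$, is the cross-entropy between $\mathrm{Bern}(p_{h^\star})$ and $\mathrm{Bern}(\hat p_h^{(1)})$ times $\tfrac n2 p(1+o(1))$, uniquely minimized at $h=h^\star$ with a $\Theta(np)$ gap, and the $o(1)$ errors $\hat p_h^{(1)}-p_h$ and the $o(1)$ contamination perturb this by only $o(1)$ per sample, below the constant gap. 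A Chernoff/Bhattacharyya estimate gives $\Pr[(\hat u_k^{(1)})_j\neq p_{h^\star}]\le d\bigl(1-(d^{\min}_H)^2+o(1)\bigr)^{(n/2)p(1+o(1))}\le\exp(-(1-o(1))\tfrac n2 p\,(d^{\min}_H)^2)$, and since $p\ge\frac{2(1+\epsilon)\log m}{n(d^{\min}_H)^2}$ this is at most $m^{-(1+\epsilon/2)}$; a union bound over the $m$ columns and $K=2$ clusters makes $\hat u_k^{(1)}$ have the correct symbol pattern of its majority cluster with probability $1-o(1)$ (Lem.~9).

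\textbf{Stage~2-(ii) and conclusion.} Starting from $A_k^{(0)}$ and the $\hat u_k^{(1)}$ above, with $\hat\alpha,\hat\beta$ consistent estimates of $\alpha,\beta$ (in the regime $\alpha=O(\tfrac{\log n}{n})$), we run the $T=\lceil\log_2 n\rceil$ local-likelihood refinements. Following the analysis of~\cite{Ahn_2018,8636058}, one shows a contraction: if the misclassification fraction before a step is $\eta$, afterwards it is at most $\tfrac\eta2$ plus an error floor that is $o(1/n)$ under our hypotheses, so after $T$ steps all users are correctly classified with probability $1-o(1)$. The quantitative input is that for a fixed user the combined likelihood test uses $\approx\gamma m$ rating-distinguishing coordinates (each observed with probability $p$, contributing exponent $(d^{\min}_H)^2$) and $\approx\tfrac n2$ graph neighbours (each contributing exponent $\tfrac12 I_s$, since $I_s=-2\log(1-d^2_H(\alpha,\beta))$), so the single-user error probability is at most $\exp(-(1-o(1))(\gamma m p\,(d^{\min}_H)^2+\tfrac n2 I_s))$; the hypothesis $p\ge\frac{(1+\epsilon)\log n-\frac n2 I_s}{\gamma m(d^{\min}_H)^2}$ makes this $\le n^{-(1+\epsilon/2)}$, which survives a union bound over the $n$ users, and the $o(1)$ errors in $\hat u_k^{(1)},\hat\alpha,\hat\beta$ and in the running clusters merely scale the exponents by $1-o(1)$. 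On the intersection of the three high-probability events, $A_k^{(1)}=A_k^{(1,T)}$ equals the ground-truth clusters and $\hat u_k^{(1)}$ has the correct symbol pattern, hence $\hat R_{ij}=(\hat u_{C(i)}^{(1)})_j=\hat p_{h(i,j)}^{(1)}=R_{ij}+o(1)$ uniformly, i.e.\ $\|\hat R-R\|_{\max}=o(1)$.

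\textbf{Main obstacle.} The crux is Stage~2-(i): one must show that the coordinate-wise MLE recovers the correct level pattern \emph{uniformly over all $m$ coordinates} despite (i) learning the levels from only $O(\log m)$ sampled items, (ii) cluster contamination inherited from Stage~1, and (iii) a much larger hypothesis space ($d$ levels rather than $2$), while still producing exactly the exponent $(d^{\min}_H)^2$ that matches the converse of Thm.~\ref{thm:1}. This requires isolating the Bhattacharyya/Hellinger rate for distinguishing the $\mathrm{Bern}(p_h)$'s and checking that none of the $o(1)$ perturbations ($\hat p_h-p_h$, the contamination, $\hat\alpha-\alpha$, $\hat\beta-\beta$) degrades that exponent; the refinement analysis of Stage~2-(ii), while technical, is essentially inherited from prior work.
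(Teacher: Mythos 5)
Your proposal follows essentially the same route as the paper's proof: Stage 1 via the spectral guarantee of Gao et al., Stage 2-(i) split into concentration of the $O(\log m)$ sampled level estimates plus a per-column Hellinger/Chernoff bound of order $m^{-(1+\epsilon/2)}$ union-bounded over columns (the paper's Lemmas 8--9), and Stage 2-(ii) via the halving/contraction refinement inherited from Ahn et al. with the combined exponent $\gamma m p (d^{\min}_H)^2+\tfrac n2 I_s$ (the paper's Lemmas 10--12). The decomposition, the key exponents, and even the identification of Stage 2-(i) as the locus of the new technical work all match the paper's argument.
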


\begin{remark}\label{rmk:12}
\emph{As our algorithm makes use of only graph data at Stage~1, the initial clustering result highly depends on the quality of graph data $I_s$.
In the extreme cases where only rating data are available, Stage~1 will output a meaningless clustering result.
As the performance of Stage~2 depends on the success of Stage~1, our algorithm may not work well even if the observation rate $p$ is above the optimal rate.
In Sec.~\ref{sec:E}, we suggest an alternative algorithm, which utilizes both rating and graph data at Stage~1.
Analyzing the performance of this new algorithm is an interesting open problem.
}
\end{remark}

\section{Experimental Results}\label{sec:6}
In this section, we run several experiments to evaluate the performance of our proposed algorithm.
Denoting by $\hat{R}$ the output of an estimator, the estimation quality is measured by the max norm of the error matrix, i.e., $\|\hat{R}-R\|_{\max} := \underset{(i,j)\in [n]\times [m]}{\max}|\hat{R}_{ij}-R_{ij}|$.
For each observation rate $p$, we
generate synthetic data $(N^{\Omega}, G)$ $100$ times at random and then report the average errors.

\subsection{Non-asymptotic Performance of Our Algorithm }\label{sec:6:1}

Shown in Fig.~\ref{:fig:2a} is the probability of error $\Pr( \psi_1(N^{\Omega},G)\neq R)$ of our algorithm for $(n,m,K,d) = (10000,5000,2,3)$ and various combinations of $(I_s, p)$. To measure $\Pr( \psi_1(N^{\Omega},G)\neq R)$, we allow our algorithm to have access to the latent preference levels $(p_1, p_2, p_3) = (0.2, 0.5, 0.7)$ in Stage 2.  We draw $p^*_{\gamma}$ as a red line. While the theoretical guarantee of our algorithm is valid when $n, m$ go to $\infty$, Fig.~\ref{:fig:2a} shows that Theorem~\ref{thm:1} predicts the optimal observation rate $p^*_{\gamma}$ with small error for sufficiently large $n, m$. One can observe a sharp phase transition around $p^*_{\gamma}$.

\begin{figure}[t]
    \centering
    \begin{subfigure}[b]{0.24\columnwidth}
        \includegraphics[width=\columnwidth]{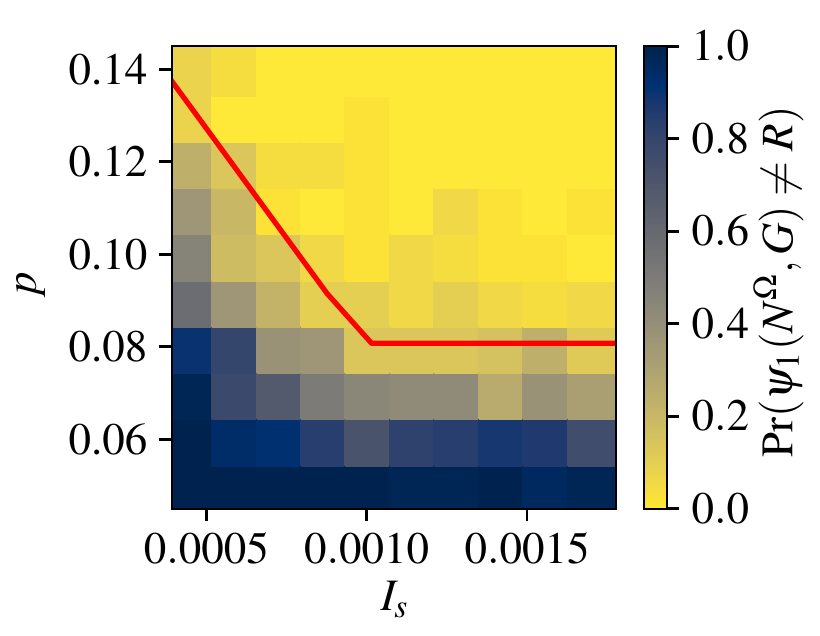}
        \caption{Phase transition}
        \label{:fig:2a}
    \end{subfigure}
    \begin{subfigure}[b]{0.24\columnwidth}
        \includegraphics[width=\columnwidth]{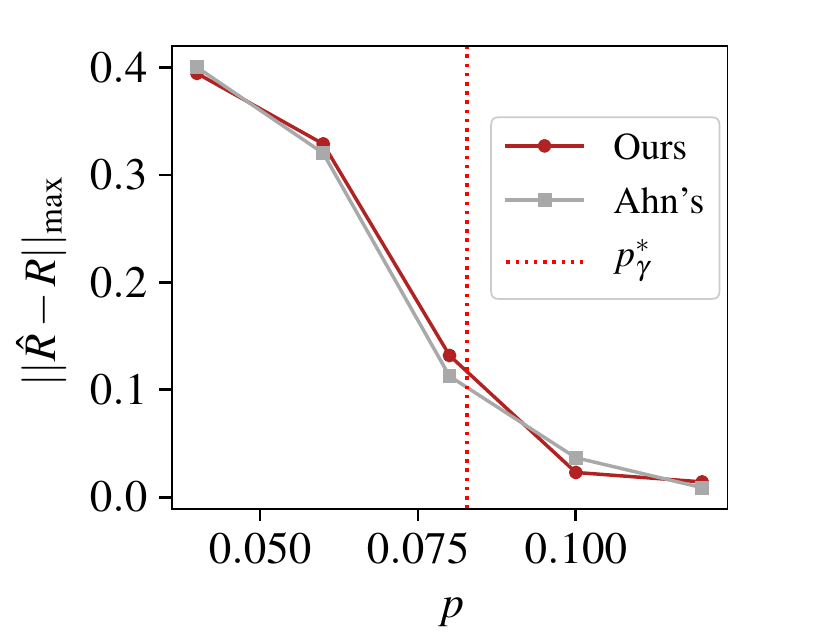}
        \caption{Symmetric levels}
        \label{:fig:2b}
    \end{subfigure}
    \begin{subfigure}[b]{0.24\columnwidth}
        \includegraphics[width=\columnwidth]{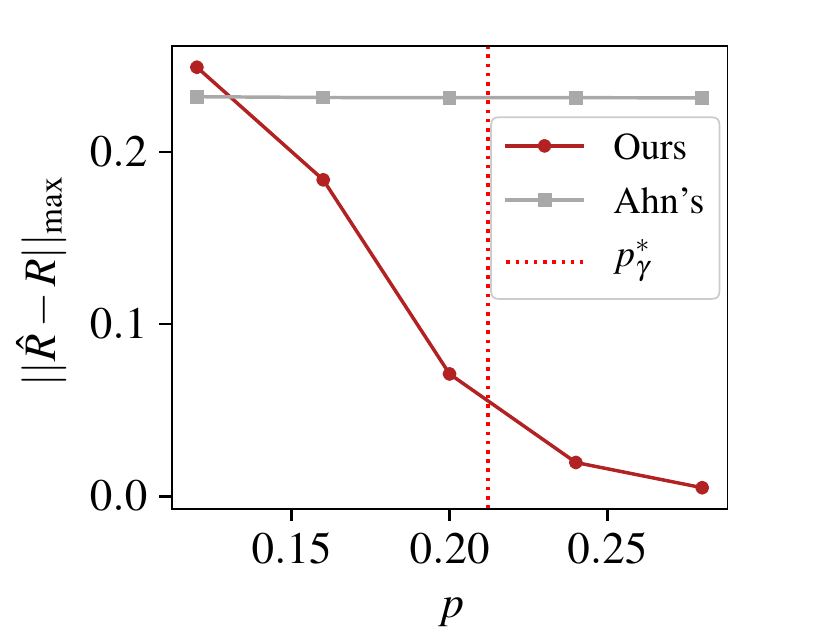}
        \caption{Asymmetric levels}
        \label{:fig:2c}
    \end{subfigure}
    \begin{subfigure}[b]{0.24\columnwidth}
        \includegraphics[width=\columnwidth]{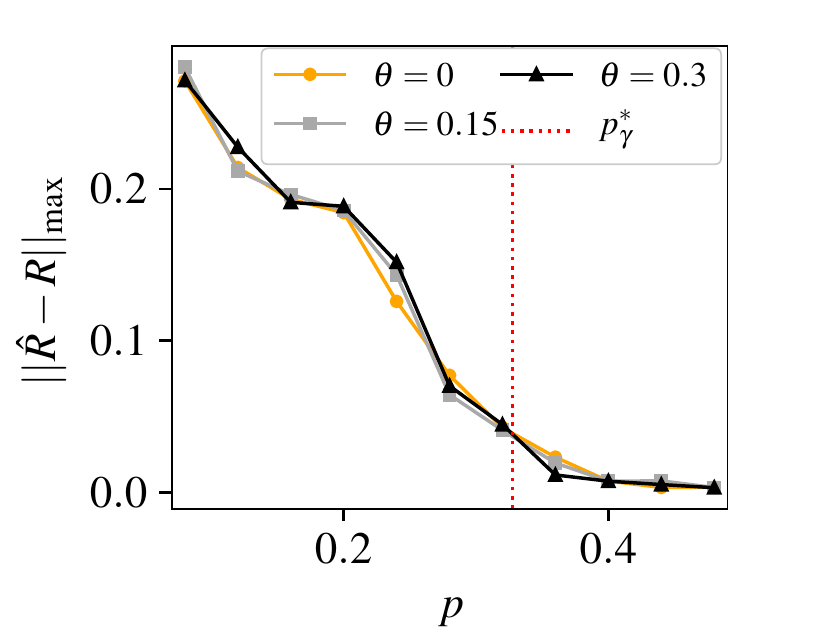}
        \caption{Noisy SBM}
        \label{:fig:2d}
    \end{subfigure} 
    \vspace{-0.1in}
    \caption{(a) Non-asymptotic performance of our algorithm. One can observe a sharp phase transition around $p^*_{\gamma}$ (b),~(c) Limitation of the symmetric level model. (b) When the latent preference levels are symmetric ($p_1 = 0.3$ and $p_2 = 0.7$), our algorithm and the algorithm proposed in \cite{Ahn_2018} achieve the same estimation errors. (c) When the latent preference levels are not symmetric ($p_1 = 0.3$ and $p_2 = 0.55$), our algorithm significantly outperforms the one proposed in \cite{Ahn_2018}. (d) Estimation error as a function of observation rate $p$ when graph data is generated as per noisy stochastic block models. Observe that our algorithm is robust to model errors.}
    \label{fig:2}
    \vspace{-0.1in}
\end{figure}

\subsection{Limitation of the Symmetric Latent Preference Levels}\label{sec:6:2}

As described in Sec.~\ref{sec:1}, the latent preference matrix model studied in~\cite{Ahn_2018} assumes that the latent preference level must be either $\theta$ or $1-\theta$ for some $\theta$, which is fully symmetric. In this section, we show that this model cannot be
applied unless the symmetry assumption perfectly holds. Let $(K, d,n,m,\gamma, \alpha, \beta)=(2, 2, 2000, 1000, \frac{1}{4}, 0.7, 0.3)$. Shown in Fig.~\ref{:fig:2b}, Fig.~\ref{fig:1a}, Fig.~\ref{:fig:2c} are the estimation errors of our algorithm and that of the algorithm proposed in~\cite{Ahn_2018} for various pairs of $(p_1, p_2)$. (1) Fig.~\ref{:fig:2b} shows the result for $(p_1, p_2) = (0.3, 0.7)$ where the latent preference levels are perfectly symmetric, and the two algorithms perform exactly the same. (2) Fig.~\ref{fig:1a} shows the result for $(p_1, p_2) = (0.3, 0.62)$ where the latent preference levels are slightly asymmetric. The estimation error of the algorithm of~\cite{Ahn_2018} is much larger than ours for all tested values of $p$. (3) Shown in Fig.~\ref{:fig:2c} are the experimental results with $(p_1, p_2) = (0.3, 0.55)$. Observe that the gap between these two algorithms becomes even larger, and the algorithm of~\cite{Ahn_2018} seems not able to output a reliable estimation of the latent preference matrix due to its limited modeling assumption.

\subsection{Robustness to Model Errors}\label{sec:6:3}

We show that while the theoretical guarantee of our algorithm holds only for a certain data generation model, our algorithm is indeed robust to model errors and can be applied to a wider range of data generation models.
Specifically, we add noise to the stochastic block model as follows.
If two users $i$ and $j$ are from the same cluster, we place an edge with probability $\alpha+q_{ij}$, independently of other edges, where $q_{ij} \overset{i.i.d.}{\sim} U[-\theta, \theta]$ for some constant $\theta$. 
Similarly, if they are from the two different clusters, the probability of having an edge between them is $\beta+q_{ij}$.
Under this \emph{noisy} stochastic block model, we generate data and measure the estimation errors with $(K, d,p_1,p_2,p_3,n,m,\gamma, \alpha, \beta)=(2, 3, 0.2, 0.5, 0.7, 2000, 1000, \frac{1}{4}, 0.7, 0.3)$, $\theta=0, 0.15, 0.3.$ Fig.~\ref{:fig:2d} shows that the performance of our algorithm is not affected by the model noise, implying the model robustness of our algorithm.
The result for $\theta=0.3$ is even more interesting since the level of noise is so large that $\alpha+q_{ij}$ can become even lower than $\beta+q_{i'j'}$ for some $(i,j)$ and $(i',j')$. 
However, even under this extreme condition, our algorithm successfully recovers the latent preference matrix.

\subsection{Real-World Data Experiments}\label{sec:6:4}

The experimental result given in Sec.~\ref{sec:6:3} motivated us to evaluate the performance of our algorithm when real-world graph data is given as graph side information. 
First, we take Facebook graph data~\cite{51inAhn} as graph side information (which has a $3$-cluster structure) and generate binary ratings as per our discrete-valued latent preference model ( $(p_1, p_2, p_3) = (0.05, 0.5, 0.95)$ ). We use $80\%$ (randomly sampled) of $\Omega$ as a training set $(\Omega_{tr})$ and the remaining $20\%$ of $\Omega$ as a test set $(\Omega_{t})$. We use mean absolute error (MAE)  $\frac{1}{|{\Omega}_{t}|} \sum_{(i,j)\in {\Omega}_{t}}\|N^{\Omega}_{ij}-(2 \hat{R}_{ij}-1)\|$ for the performance metric.\footnote{We compute the difference between $N^{\Omega}_{ij}$ and $2 \hat{R}_{ij}-1$ for fair comparison since $N^{\Omega}_{ij}\in \{\pm 1\}, \hat{R}_{ij}\in [0, 1]$.} Then we compare the performance of our algorithm with other algorithms in the literature.\footnote{We compare our algorithm with the algorithm of~\cite{Ahn_2018}, item average, user average, user k-NN (nearest neighbors), item k-NN, BiasedMF~\cite{Koren_2008}, SocialMF~\cite{Jamali_2010}, SoRec~\cite{Ma_2008}, SoReg~\cite{Ma_2011}, Trust SVD~\cite{Guo_2015}. Except for ours and that of~\cite{Ahn_2018}, we adopt implementations from LibRec~\cite{Guo_2015_lib}.} Fig.~\ref{fig:1b} shows that our algorithm outperforms other baseline algorithms for almost all tested values of~$p$. The red dotted line is the expected value of MAE of the optimal estimator (see the appendix for a detailed explanation) which means our algorithm shows near-optimal performance. Unlike other algorithms, MAE of~\cite{Ahn_2018} increases as $p$ increases. One explanation is that the algorithm of~\cite{Ahn_2018} cannot properly handle $d \ge 3$ cases due to its limited modeling assumption. 

\begin{remark}
\emph{While our algorithm shows near-optimal performance with $\ell_{\max}=1$ for synthetic data, Fig.~\ref{fig:3a} shows that our algorithm does not work well with $\ell_{\max}=1$  for real-world data. This phenomenon can be explained as follows. If $\ell_{\max}=1$, the estimations of latent preference vectors are only based on the result of the Stage 1. For real-world graph data, the clustering result of the Stage 1 may not be close to the ground-truth clusters, thereby resulting in bad estimations of latent preference vectors in Stage 2-(i). Surprisingly, our algorithm shows  near-optimal performance with $\ell_{\max}=2$ even for real-world graph data (see  Fig.~\ref{fig:1b}).
Unlike ours, the algorithm of~\cite{Ahn_2018} shows no difference between $\ell_{\max}=1$ and $\ell_{\max}=2$. }
\end{remark}

Furthermore, we evaluate the performance of our algorithm on a real rating/real graph dataset called Epinions~\cite{Massa_2007, Massa_2008}. We use $5$-fold cross-validation to determine hyperparameters. Then we compute MAE for a randomly sampled test set (with 500 iterations). Shown in Table~\ref{table:1} are MAE's for various algorithms.
Although the improvement is not significant than the one for synthetic rating/real graph data,
our algorithm outperforms all the other algorithms.
Note that all the experimental results presented in the prior work are based on synthetic rating~\cite{Ahn_2018,8636058,changhonew}, and this is the first real rating/real graph experiment that shows the practicality of binary rating estimation with graph side information.

\begin{figure}[t]
    \centering
    \begin{subfigure}{0.32\columnwidth}
    \captionsetup{justification=centering, margin = {17pt, 0pt}}
    \includegraphics[width=\textwidth]{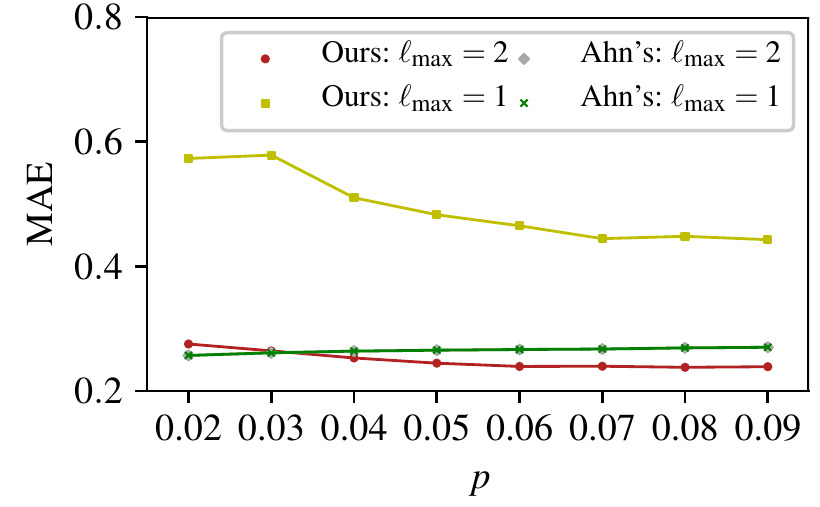}
    \subcaption{$\ell_{\max} = 1$ vs $\ell_{\max} = 2$}
    \label{fig:3a}
    \end{subfigure}    
    \begin{subfigure}{0.32\columnwidth}
    \captionsetup{justification=centering, margin = {30pt, 0pt}}
    \includegraphics[width=\textwidth]{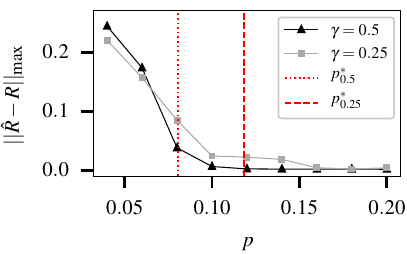}
    \subcaption{$\gamma = 0.5$ vs $\gamma = 0.25$}
    \label{fig:3b}
    \end{subfigure}
    \begin{subfigure}{0.32\columnwidth}
    \captionsetup{justification=centering, margin = {30pt, 0pt}}
    \includegraphics[width=\textwidth]{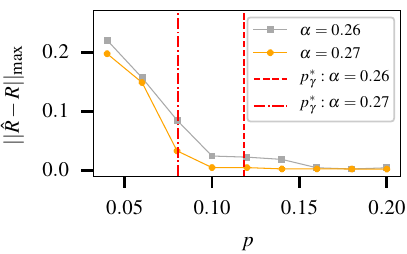}
    \subcaption{$\alpha = 0.26$ vs $\alpha = 0.27$} \label{fig:3c}
    \end{subfigure}
    \vspace{-0.1in}
    \caption{(a) Estimation error as a function of observation rate $p$ for the Facebook graph data~\cite{51inAhn} with different values of $\ell_{\max}$. (b), (c) Estimation error as a function of observation probability $p$ with different values of $\gamma$ and $\alpha$.    The $x$-axis is the probability of observing each rating ($p$), and the $y$-axis is the estimation error measured in $\|\cdot\|_\text{max}$.
}
    \label{fig:3}
    \vspace{-0.1in}
\end{figure}

\subsection{Estimation Error with Different Values of $\gamma$ and $I_s$}\label{sec:6:5}

We corroborate Theorem~\ref{thm:3}.
More specifically, we observe how the estimation error behaves as a function of $p$ when $\gamma$ and $(\alpha, \beta)$ varies. 
Let $d=3, p_1=0.2, p_2=0.5, p_3=0.7, n=10000, m=5000$. 
We first compare cases for $(\alpha, \beta, \gamma)=(0.26, 0.23,0.5)$ and $(0.26, 0.23,0.25)$.
Shown in Fig.~\ref{fig:3b} is the estimation error as a function of $p$. 
We draw $p^{*}_{\gamma}$ as  dotted vertical lines.
One can see from the figure that the estimation error for $(\alpha, \beta, \gamma)=(0.26, 0.23,0.5)$ is lower than that for $(\alpha, \beta, \gamma)=(0.26, 0.23,0.25)$ for all tested values of $p$. 
This can be explained by the fact that $p^{*}_{\gamma}$ decreases as $\gamma$ increases, as stated in Theorem~\ref{thm:3}. 
We also compare cases for $(\alpha, \beta, \gamma)=(0.26, 0.23,0.25)$ and $(0.27, 0.23,0.25)$. 
Note that the only difference between these cases is the value of $\alpha$. 
By Theorem~\ref{thm:3}, we have $p^{*}_{\gamma}=0.118$ for the former case, and $p^{*}_{\gamma}=0.081$ for the latter case. 
That is, a larger value of $\alpha$ implies a higher quality of graph side information, i.e., the graph side information is more useful for predicting the latent preference matrix $R$. 
Fig.~\ref{fig:3c} shows the estimation error as a function of $p$, and we can see that even a small increase in the quality of the graph can result in a significant decrease in $p^{*}_{\gamma}$. 

\vspace{-0.1in}
\section{Conclusion}\label{sec:7}
We studied the problem of estimating the latent preference matrix whose entries are discrete-valued given a partially observed binary rating matrix and graph side information.
We first showed that the latent preference matrix model adopted in existing works is highly limited, and proposed a generalized data generation model.
We characterized the optimal sample complexity that guarantees perfect recovery of latent preference matrix, and showed that this optimal complexity also serves as a tight lower bound, i.e., no estimation algorithm can achieve perfect recovery below the optimal sample complexity.
We also proposed a computationally efficient estimation algorithm.
Our analysis showed that our proposed algorithm can perfectly estimate the latent preference matrix if the sample complexity is above the optimal sample complexity.
We provided experimental results that corroborate our theoretical findings, highlight the importance of our relaxed modeling assumptions, imply the robustness of our algorithm to model errors, and compare our algorithm with other algorithms on real-world data.

\paragraph{Acknowledgements} 
This material is based upon work supported by NSF Award DMS-2023239.

\bibliography{ref}
\bibliographystyle{plainnat}

\newpage
\appendix

\setcounter{theorem}{0} 
\setcounter{lemma}{0} 
\setcounter{definition}{0} 
\begin{center}
\textbf{\huge Appendix}
\end{center}
\section{Reproducing Our Simulation Results}
We provide our Python implementation of our algorithm as well as that of \cite{Ahn_2018} so that one can easily reproduce all of our experimental results.
Our code is available at \url{https://github.com/changhunjo0927/Discrete-Valued_Latent_Preference}, and one can easily reproduce any figure simply by opening the corresponding subfolder and run three or four Jupyter notebooks in order.  (Click `Run All' in each Jupyter notebook).
Then, simulation results will be saved as a figure within the subfolder.
While the values reported in our figures were the average performance over $T=100$ random runs, the default configuration in our codes is $T = 2$.
This way, one can quickly reproduce rough versions of our figures in a few minutes on a typical machine.
If one wants to reproduce more precise simulations results, one may want to change the value of $T$ from $2$ to $100$ by modifying the first cell of each Jupyter notebook.

\section{Pseudocode of Proposed Algorithm} \label{sec:B}
See Alg.~\ref{alg:1} for the pseudocode of our algorithm proposed in Sec.~\ref{sec:5}.

\begin{algorithm}
   \caption{}
   \label{alg:1}
\begin{algorithmic}
   \STATE {\bfseries Input:} $N^{\Omega}\in \{-1,0,+1\}^{n\times m}$, $G=([n], E), K, d, \ell_{\max}$
   \STATE {\bfseries Output:} Clusters of users $A_1^{(\ell_{\max})},A_2^{(\ell_{\max})}, \dots, A_K^{(\ell_{\max})}$
   \STATE \qquad \quad ~~~Latent preference vectors $\hat{u_1}^{(\ell_{\max})},\hat{u_2}^{(\ell_{\max})},\dots, \hat{u_K}^{(\ell_{\max})}$
	\STATE {\bfseries Stage 1 (Partial recovery of clusters): }
	\STATE Run a spectral method on G, and get a clustering result $A_1^{(0)}, A_2^{(0)},\dots, A_K^{(0)}$.
	\FOR{$\ell=1$ {\bfseries to} $\ell_{\max}$}
   \STATE {\bfseries Stage 2-(i) (Recovery of latent preference vectors):}
	\FOR{$k=1$ {\bfseries to} $K$}
	\FOR{$t=1$ {\bfseries to} $m_0 (:=d\lceil \log m \rceil)$}
   \STATE Sample $j^{(k)}_t \sim \text{unif}\{1,m\}$.
   \STATE $a_{j^{(k)}_t} \leftarrow \frac{\underset{i\in A_k^{(\ell-1)}}{\sum}\mathbb{1}(N_{ij^{(k)}_t}^{\Omega}=1)}{\underset{i\in A_k^{(\ell-1)}}{\sum}\mathbb{1}(N_{ij^{(k)}_t}^{\Omega}=\pm 1)}$
   \ENDFOR
   \ENDFOR
   \STATE Sort $\{a_{j^{(k)}_t} : k = 1,\dots, K, t = 1, \dots, m_0\}$ in ascending order, get $b_1,\dots,b_{Km_0}$.
   \STATE $S\leftarrow\{1,\dots,Km_0-1\}$
   \FOR{$t=1$ {\bfseries to} $d-1$}
   \STATE $z_t \leftarrow \underset{j\in S}{\arg\max}(b_{j+1}-b_{j})$
   \STATE $S\leftarrow S\setminus \{z_t\}$
   \ENDFOR
   \STATE Sort $z_1,\dots,z_{d-1}$ in ascending order, get $r_1,\dots,r_{d-1}$.
   \STATE $r_d \leftarrow Km_0, l_1 \leftarrow 1$
   \STATE $l_{i+1} \leftarrow r_i+1$ for $i=1,\dots,d-1$.
   \STATE $\hat{p_h}^{(\ell)} \leftarrow \frac{\sum_{i=l_h}^{r_h}b_i}{r_h-l_h}$ for $h=1,\dots,d$.
   \FOR{$k=1$ {\bfseries to} $K$}
   \FOR{$j=1$ {\bfseries to} $m$}
   \STATE $(\hat{u_k}^{(\ell)})_{j} \leftarrow \underset{\hat{p_h}:h\in [d]}{\arg\min}~\hat{L}(\hat{p_h}^{(\ell)}; A^{(\ell-1)}_k, j) $
     \ENDFOR
           \ENDFOR
   \STATE $\hat{u_k}^{(\ell)} \leftarrow ((\hat{u_k}^{(\ell)})_{1},\dots,(\hat{u_k}^{(\ell)})_{m})$ for $k = 1,\dots, K$.
   
   \STATE {\bfseries Stage 2-(ii) (Exact recovery of clusters):}
   \STATE $\hat{\alpha} \leftarrow \frac{\underset{k \in [K] }{\sum} e(A_k^{(\ell-1)},A_k^{(\ell-1)}) }{\underset{k \in [K] }{\sum}	\binom{ |A_k^{(\ell-1)}| }{2}}, \hat{\beta} \leftarrow \frac{ \underset{k_1\neq k_2 \in [K] }{\sum} e(A_{k_1}^{(\ell-1)},A_{k_2}^{(\ell-1)})}{ \underset{k_1\neq k_2 \in [K] }{\sum} |A_{k_1}^{(\ell-1)}| |A_{k_2}^{(\ell-1)}| }$
   \FOR{$t=1$ {\bfseries to} $T (:=\lceil \log_2 n \rceil)$}
   \FOR{$k=1$ {\bfseries to} $K$}
   \STATE $A_k^{(\ell-1, t)}\leftarrow \emptyset$
   \ENDFOR
   \FOR{$i=1$ {\bfseries to} $n$}
   \STATE $k^* \leftarrow \underset{k\in [K]}{\arg\min}~\hat{L}(A^{(\ell -1, t-1)}_k; i)$   
   \STATE $A_{k^*}^{(\ell-1, t)}\leftarrow A_{k^*}^{(\ell-1, t)}\cup \{i\}$
   \ENDFOR
   \ENDFOR
   \STATE $A_k^{(\ell)} \leftarrow A_k^{(\ell-1, T)}$ for $k = 1,\dots, K$.
   \ENDFOR
\end{algorithmic}
\end{algorithm} 
 
\section{Additional Experimental Details}
In this section, we provide additional experimental details deferred to the appendix. Let $\mathbf{1}_{k,l}$ be a $k\times l$ matrix whose entries are all equal to $1$.

\subsection{Sec.~\ref{sec:6:1}}
For Fig.~\ref{:fig:2a}, we used 
$$R=
\left[
\begin{array}{c|c|c|c}
0.2\cdot \mathbf{1}_{5000,1250} & 0.5\cdot \mathbf{1}_{5000,1250} & 0.5\cdot \mathbf{1}_{5000,1250} & 0.7\cdot \mathbf{1}_{5000,1250}\\
\hline
0.2\cdot \mathbf{1}_{5000,1250} & 0.5\cdot \mathbf{1}_{5000,1250} & 0.7\cdot \mathbf{1}_{5000,1250} & 0.5\cdot \mathbf{1}_{5000,1250}
\end{array}
\right]
$$
as a latent preference matrix, and $(K, d, \ell_{\max}) = (2,3,1)$.

\subsection{Sec.~\ref{sec:6:2}}
For Fig.~\ref{:fig:2b}, we used
$$R=
\left[
\begin{array}{c|c|c|c}
0.3\cdot \mathbf{1}_{1000,250} & 0.3\cdot \mathbf{1}_{1000,250} & 0.3\cdot \mathbf{1}_{1000,250} & 0.3\cdot \mathbf{1}_{1000,250}\\
\hline
0.3\cdot \mathbf{1}_{1000,250} & 0.3\cdot \mathbf{1}_{1000,250} & 0.3\cdot \mathbf{1}_{1000,250} & 0.7\cdot \mathbf{1}_{1000,250}
\end{array}
\right]
$$
as a latent preference matrix, and $(K, d, \ell_{\max}) = (2,2,1)$.

For Fig.~\ref{fig:1a}, we used
$$R=
\left[
\begin{array}{c|c|c|c}
0.3\cdot \mathbf{1}_{1000,250} & 0.3\cdot \mathbf{1}_{1000,250} & 0.3\cdot \mathbf{1}_{1000,250} & 0.3\cdot \mathbf{1}_{1000,250}\\
\hline
0.3\cdot \mathbf{1}_{1000,250} & 0.3\cdot \mathbf{1}_{1000,250} & 0.3\cdot \mathbf{1}_{1000,250} & 0.62\cdot \mathbf{1}_{1000,250}
\end{array}
\right]
$$
as a latent preference matrix, and $(K, d, \ell_{\max}) = (2,2,1)$.

For Fig.~\ref{:fig:2c}, we used
$$R=
\left[
\begin{array}{c|c|c|c}
0.3\cdot \mathbf{1}_{1000,250} & 0.3\cdot \mathbf{1}_{1000,250} & 0.3\cdot \mathbf{1}_{1000,250} & 0.3\cdot \mathbf{1}_{1000,250}\\
\hline
0.3\cdot \mathbf{1}_{1000,250} & 0.3\cdot \mathbf{1}_{1000,250} & 0.3\cdot \mathbf{1}_{1000,250} & 0.55\cdot \mathbf{1}_{1000,250}
\end{array}
\right]
$$
as a latent preference matrix, and $(K, d, \ell_{\max}) = (2,2,1)$.

\subsection{Sec.~\ref{sec:6:3}}

For Fig.~\ref{:fig:2d}, we used
$$R=
\left[
\begin{array}{c|c|c|c}
0.2\cdot \mathbf{1}_{1000,250} & 0.5\cdot \mathbf{1}_{1000,250} & 0.5\cdot \mathbf{1}_{1000,250} & 0.7\cdot \mathbf{1}_{1000,250}\\
\hline
0.2\cdot \mathbf{1}_{1000,250} & 0.5\cdot \mathbf{1}_{1000,250} & 0.5\cdot \mathbf{1}_{1000,250} & 0.5\cdot \mathbf{1}_{1000,250}
\end{array}
\right]
$$
as a latent preference matrix, and $(K, d, \ell_{\max}) = (2,3,1)$.

\subsection{Sec.~\ref{sec:6:4}}\label{sec:c:4}

\textbf{Synthetic rating/real graph experiment}\\
For Fig.~\ref{fig:1b}, we take the Facebook graph data~\cite{51inAhn} as graph side information. In specific, we use the social graph of $1637$ students in Vassar College; an edge is placed between two students if they are friends in Facebook.
Students are clustered by the year they entered the college; 467, 590, 580 students in each year. On top of the social graph, we generate binary ratings as per our discrete-valued latent preference model. We used
$$R=
\left[
\begin{array}{c|c|c|c|c}
0.05\cdot \mathbf{1}_{467,200} & 0.05\cdot \mathbf{1}_{467,200} & 0.5\cdot \mathbf{1}_{467,200} & 0.95\cdot \mathbf{1}_{467,200} & 0.95\cdot \mathbf{1}_{467,200}\\
\hline
0.05\cdot \mathbf{1}_{590,200} & 0.95\cdot \mathbf{1}_{590,200} & 0.05\cdot \mathbf{1}_{590,200} & 0.05\cdot \mathbf{1}_{590,200} & 0.05\cdot \mathbf{1}_{590,200}\\
\hline
0.95\cdot \mathbf{1}_{580,200} & 0.95\cdot \mathbf{1}_{580,200} & 0.95\cdot \mathbf{1}_{580,200} & 0.95\cdot \mathbf{1}_{580,200} & 0.95\cdot \mathbf{1}_{580,200}\\
\end{array}
\right]
$$
as a latent preference matrix, and $(K, d, \ell_{\max}) = (3,3,2)$.

We computed the expected value of MAE of the optimal estimator in Sec.~\ref{sec:6:4} as follows. Suppose there exists an optimal estimator $\psi^{*}$ in the sense that $\psi^{*}(N^{\Omega}, G)_{ij}=R_{ij}$ for all $i, j$. Then 
\begin{align*}
\text{the expected value of test MAE of $\psi^{*}$} &= \E \Big[\frac{1}{|{\Omega}_{t}|} \sum_{(i,j)\in {\Omega}_{t}}|N^{\Omega}_{ij}-(2 \psi^{*}(N^{\Omega}, G)_{ij}-1)|\Big]\\
&=\E \Big[\frac{1}{|{\Omega}_{t}|} \sum_{(i,j)\in {\Omega}_{t}}|N^{\Omega}_{ij}-(2 R_{ij} -1)|\Big]\\
&=\frac{1}{|{\Omega}_{t}|} \sum_{(i,j)\in {\Omega}_{t}}\E \big[|N^{\Omega}_{ij}-(2 R_{ij} -1)|\big]\\
&=\frac{1}{|{\Omega}_{t}|} \sum_{(i,j)\in {\Omega}_{t}}\E \big[|2 \text{Bern}(R_{ij})-1-(2 R_{ij} -1)|\big]\\
&~~~~(\because \text{conditioned on}~(i,j)\in \Omega, N^{\Omega}_{ij} = 2 \text{Bern}(R_{ij})-1)\\
&= \E \big[|2 \text{Bern}(R_{ij})-1-(2 R_{ij} -1)|\big]\\
&= 2 \E \big[| \text{Bern}(R_{ij})- R_{ij} |\big]\\
&= 2 \frac{1}{1637 \cdot 1000}\sum_{\substack{1\le i \le 1637 \\ 1\le j\le 1000}} \{R_{ij}\cdot |1- R_{ij}| + (1- R_{ij})\cdot |0- R_{ij}| \}\\
&\approx 0.236
\end{align*}

\noindent \textbf{Real rating/real graph experiment}\\
We used a real ratings/real graph dataset called Epinions~\cite{Massa_2007, Massa_2008} that consists of $40163$ users and $139738$ items with rating and graph data. We preprocess this dataset as follows. First, the rating scale of this dataset is from +1 to +5, so we regard +1/+2 as dislike(-1), +4/+5 as like(+1) and ignore +3's (i.e., we treat +3's as unobserved ratings). After the first step, the observation rate of ratings is about $0.00012$ which is too small for meaningful analysis. This is why we add the following preprocessing steps. 1) Find $100$ most frequently rated items. 2) Find $1000$ users who rated above $100$ items most frequently. 3) Find a subset of users that shows a cluster structure via spectral clustering. As a result, we get a preprocessed dataset that consists of $290$ users and $100$ items.

\subsection{Sec.~\ref{sec:6:5}}

For the black curve with triangle markers in Fig.~\ref{fig:3b}, we used
$$R=
\left[
\begin{array}{c|c|c|c}
0.2\cdot \mathbf{1}_{5000,1250} & 0.5\cdot \mathbf{1}_{5000,1250} & 0.5\cdot \mathbf{1}_{5000,1250} & 0.7\cdot \mathbf{1}_{5000,1250}\\
\hline
0.2\cdot \mathbf{1}_{5000,1250} & 0.5\cdot \mathbf{1}_{5000,1250} & 0.7\cdot \mathbf{1}_{5000,1250} & 0.5\cdot \mathbf{1}_{5000,1250}
\end{array}
\right]
$$
as a latent preference matrix and $(K, d, \ell_{\max}, \alpha, \beta) = (2, 2, 1, 0.26, 0.23)$.\\
For the gray curve with square markers in Fig.~\ref{fig:3b} and Fig.~\ref{fig:3c}, we used
$$R=
\left[
\begin{array}{c|c|c|c}
0.2\cdot \mathbf{1}_{5000,1250} & 0.5\cdot \mathbf{1}_{5000,1250} & 0.5\cdot \mathbf{1}_{5000,1250} & 0.7\cdot \mathbf{1}_{5000,1250}\\
\hline
0.2\cdot \mathbf{1}_{5000,1250} & 0.5\cdot \mathbf{1}_{5000,1250} & 0.5\cdot \mathbf{1}_{5000,1250} & 0.5\cdot \mathbf{1}_{5000,1250}
\end{array}
\right]
$$
as a latent preference matrix and $(K, d, \ell_{\max},\alpha, \beta) = (2, 2, 1, 0.26, 0.23)$.\\
For the yellow curve with circle markers in Fig.~\ref{fig:3c}, we used
$$R=
\left[
\begin{array}{c|c|c|c}
0.2\cdot \mathbf{1}_{5000,1250} & 0.5\cdot \mathbf{1}_{5000,1250} & 0.5\cdot \mathbf{1}_{5000,1250} & 0.7\cdot \mathbf{1}_{5000,1250}\\
\hline
0.2\cdot \mathbf{1}_{5000,1250} & 0.5\cdot \mathbf{1}_{5000,1250} & 0.5\cdot \mathbf{1}_{5000,1250} & 0.5\cdot \mathbf{1}_{5000,1250}
\end{array}
\right]
$$
as a latent preference matrix and $(K, d, \ell_{\max},\alpha, \beta) = (2, 2, 1, 0.27, 0.23)$.

\section{Proof of Theorems}
In this section, we provide proofs of theorems.

\subsection{Proof of Theorem~\ref{thm:1}}
We first recall some definitions and the main theorem.

\begin{definition}[Worst-case probability of error]
Let $\gamma$ be a fixed number in $(0,1)$ and $\psi$ be an estimator that outputs a latent preference matrix in $\{p_1,p_2, \dots, p_d\}^{n\times m}$ based on $N^{\Omega}$ and $G$. 
We define the worst-case probability of error $P_e^{\gamma}(\psi):= \max\big\{\Pr(\psi(N^\Omega, G)\neq R) : R\in \{p_1, p_2, \dots , p_d\}^{n\times m}, \|u_R-v_R\|_0=\lceil \gamma m \rceil \big\}$ where $\|\cdot\|_0$ is the hamming distance.
\end{definition}

\begin{theorem}
Let $K=2, |C^{-1}(\{1\})|=|C^{-1}(\{2\})|=\frac{n}{2}, \gamma \in (0,1)$, $m=\omega (\log n)$, $\log m=o(n)$, 
$I_s :=-2\log \big( 1- d^2_H(\alpha, \beta) \big).$
Then, the following holds for arbitrary $\epsilon>0$;

(I) if $p\ge \frac{1}{(d^{\min}_H)^2} \max\left\{\frac{(1+\epsilon) \log n - \frac{n}{2}I_s}{\gamma m},\frac{(1+\epsilon)2\log m}{n}\right\}$,
then there exists an estimator $\psi$ that outputs a latent preference matrix in $\{p_1,p_2, \dots, p_d\}^{n\times m}$ based on $N^{\Omega}$ and $G$ such that $P_e^{\gamma}(\psi)\rightarrow 0$ as $n \rightarrow \infty$

(II) if $p\le \frac{1}{(d^{\min}_H)^2} \max\left\{\frac{(1-\epsilon) \log n - \frac{n}{2}I_s}{\gamma m},\frac{(1-\epsilon)2\log m}{n}\right\}$ and $\alpha = O(\frac{\log n}{n})$, then $P_e^{\gamma}(\psi)\nrightarrow 0$ as $n \rightarrow \infty$ for any $\psi$.
\end{theorem}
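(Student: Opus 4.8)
\textbf{Proof proposal for Theorem~\ref{thm:1}.}

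The plan is to split the argument into the achievability part (I) and the converse part (II), treating $n\log n - \frac{n}{2}I_s \le 2\gamma\log m$ and $n\log n - \frac{n}{2}I_s > 2\gamma\log m$ as the two regimes encoded in the $\max$. For (I), I would analyze the maximum likelihood estimator $\psi_{\mathrm{ML}}$ over all latent preference matrices of the prescribed form (two equal clusters, $\|u_R - v_R\|_0 = \lceil\gamma m\rceil$). The first step is to bound $\Pr(\psi_{\mathrm{ML}} \ne R)$ by a union bound over all ``confusable'' candidates $R'$, decomposing the error event by the number $z$ of users whose cluster label is flipped and the number $w$ of item coordinates where $u_{R'}$ or $v_{R'}$ differs from $u_R$ or $v_R$. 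For each such $(z,w)$-type error, the likelihood ratio test between $R$ and $R'$ factors into (a) a graph term governed by the SBM, whose large-deviation rate is controlled by $I_s = -2\log(1-d_H^2(\alpha,\beta))$ via a standard Chernoff/Hellinger computation on $\mathrm{Bern}(\alpha)$ vs.\ $\mathrm{Bern}(\beta)$, and (b) a ratings term where each disagreeing entry contributes a factor whose exponent involves the per-observation Hellinger affinity between $\mathrm{Bern}(p_i)$ and $\mathrm{Bern}(p_j)$; after accounting for the $\mathrm{Bern}(p)$ sampling mask this yields an exponent of order $-p\,(d_H^{\min})^2$ per disagreeing column in expectation. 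Combining these and counting the number of $(z,w)$-type candidates ($\binom{n/2}{z}^2$ for the cluster flips, $\binom{m}{w}$ or similar for the coordinate changes times $d^w$ for the level choices), I would show the union bound is $o(1)$ exactly when $p$ exceeds the stated threshold. This is essentially Lem.~3 in the paper, built on Lem.~1 and Lem.~2; the two terms in the $\max$ correspond respectively to the ``cluster is hard to pin down but columns are cheap'' case (where the graph helps, giving the $-\frac{n}{2}I_s$ saving) and the ``a single user's ratings row must be recovered against $\sim\log m$ competitors'' case.

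For the converse (II), I would restrict attention to the sub-family where $R \in \{p_{d_0}, p_{d_0+1}\}^{n\times m}$ with $d_H(p_{d_0},p_{d_0+1}) = d_H^{\min}$, since a lower bound over a sub-family is a lower bound overall; this is why the hypothesis $R\in\{p_{d_0},p_{d_0+1}\}^{n\times m}$ and $\alpha = O(\frac{\log n}{n})$ appear. The argument is a genie-aided / Fano-type second-moment argument. For the first term in the $\max$: condition on an oracle revealing all cluster memberships except for one user $i_0$; the problem reduces to a binary hypothesis test for $C(i_0)$ based on $i_0$'s edges to the two clusters and $i_0$'s $\sim pm$ observed ratings, and an exact-threshold analysis of this test (comparing the total log-likelihood to the sum of a $\frac{n}{2}I_s$ graph contribution and a $p\gamma m (d_H^{\min})^2$ rating contribution) shows the error is bounded away from zero below the threshold. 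For the second term: reveal everything except a single column $j_0$ of $u_R$; now one must decide the value of $(u_R)_{j_0}\in\{p_{d_0},p_{d_0+1}\}$ from only the $\sim p\cdot\frac{n}{2}$ observed ratings of users in cluster $A_R$ on item $j_0$ (the graph is useless here), and a binomial test analysis gives error bounded away from $0$ when $p \le (1-\epsilon)\frac{2\log m}{n(d_H^{\min})^2}$, because over $m$ independent columns at least one will be misidentified with constant probability (a coupon-collector / maximal-coordinate argument). I would phrase both as: the expected number of indistinguishable alternatives tends to infinity, hence no estimator can be consistent.

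The main obstacle I anticipate is the achievability union bound in the coupled regime where both $z$ (flipped users) and $w$ (changed columns) are simultaneously large but neither is extreme — here the graph exponent $\sim z(n-z)\cdot(\text{rate})$ and the ratings exponent $\sim p\cdot w\cdot(\text{something depending on how the }z\text{ flipped users and }w\text{ columns interact})$ must be balanced against the combinatorial entropy $z\log n + w\log(md)$, and getting the constants to line up precisely at $(d_H^{\min})^{-2}\max\{\cdot,\cdot\}$ rather than a loose multiple requires careful bookkeeping, especially since a flipped user changes the likelihood on \emph{all} $m$ columns where $u_R$ and $v_R$ differ, not just the $w$ explicitly-altered ones. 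The paper flags exactly this (``we need to handle a significantly larger set of candidate latent preference matrices'' relative to~\cite{Ahn_2018}, where $d=2$ collapses the level-choice multiplicity). A secondary technical point is making the footnote's corrected identity $-2\log(1-d_H^2(\alpha,\beta)) = (\sqrt\alpha-\sqrt\beta)^2\{\frac{(\sqrt\alpha+\sqrt\beta)^2}{4\beta(1-\beta)}+o(1)\}$ rigorous and checking that the achievability direction genuinely needs no $\alpha,\beta\to 0$ assumption, while the converse does use $\alpha = O(\frac{\log n}{n})$ to keep the per-edge information small enough that the genie argument bites.
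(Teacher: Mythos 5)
Your achievability argument is essentially the paper's: MLE, a compact likelihood-ratio representation (the paper's Lem.~1--2), a Chernoff/Hellinger bound giving $\{1-p(d^{\min}_H)^2\}^{D_X}\{1-d_H^2(\alpha,\beta)\}^{4(\frac{n}{2}-k)k}$ per candidate (Lem.~3), and a union bound over candidates typed by the number of flipped users and altered coordinates, with the sum split into regimes according to whether these counts are large or small. You also correctly identify the genuine difficulty (the coupled regime, and the fact that a flipped user incurs disagreement on all $\lceil\gamma m\rceil$ columns where $u_R$ and $v_R$ differ, which is what makes $D_z\ge k_z\cdot 2(\lceil\gamma m\rceil-\delta m)+\dots$ the right lower bound). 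The converse's overall skeleton --- restrict to $\{p_{d_0},p_{d_0+1}\}^{n\times m}$, reduce to a constrained MLE, and run two genie-aided tests matching the two terms of the $\max$ --- also matches the paper, and your column-term argument (independence across columns, so at least one of $\sim(1-\gamma)m$ single-column alternatives is confused w.h.p.\ when $\frac{n}{2}pI_r\le(1-\epsilon)\log m$) is exactly the paper's Case~1.

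The gap is in the cluster-term converse. A single-user genie test does \emph{not} have error bounded away from zero below the threshold: its error probability is of order $e^{-(1+o(1))(\frac{n}{2}I_s+\gamma m p(d^{\min}_H)^2)}\gtrsim n^{-(1-\epsilon)}$, which tends to $0$. To get a constant failure probability you must aggregate over $\Theta(n)$ candidate users, and the per-user error events $[L(R^{(i)})\le L(R')]$ are \emph{not} independent --- they share the edge variables between the two clusters --- so ``the expected number of indistinguishable alternatives tends to infinity'' does not by itself imply that one exists with constant probability. The paper resolves this by exhibiting (Lem.~6, using $\alpha=O(\frac{\log n}{n})$) edge-free subsets $A_P\subset A_R$, $A_Q\subset B_R$ of size $\frac{n}{\log^3 n}$ with probability $1-o(1)$, within which the single-flip events become mutually independent, together with a decoupling step (Lem.~7) reducing double flips $(R^{(i)})^{(j)}$ to single flips; only then does the product $\bigl(1-\frac14 e^{-(1+o(1))((\frac n2-r)I_s+\gamma mI_r)}\bigr)^{n/\log^3 n}$ vanish. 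Your stated role for $\alpha=O(\frac{\log n}{n})$ (``keep the per-edge information small'') is not where it is actually used --- it is what guarantees the independent sets exist. If you intend a genuine second-moment computation on the count of confusable users instead, that is a different (and plausible) route, but it must be carried out explicitly; as written the proposal asserts a conclusion (single-test error bounded away from zero) that is false at the stated threshold.
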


\begin{definition}
$p^*_{(\gamma)}:=\frac{1}{(d^{\min}_H)^2} \max\left\{\frac{\log n - \frac{n}{2}I_s}{\gamma m},\frac{2\log m}{n}\right\}$ denotes the optimal observation rate.\\ Then $nmp^*_{(\gamma)} =\frac{1}{(d^{\min}_H)^2} \max\left\{\frac{1}{\gamma} (  n\log n - \frac{1}{2} n^2 I_s),2m\log m\right\}$ denotes the optimal sample complexity.
\end{definition}

We will first show that the maximum likelihood estimator $\psi_{ML}$ satisfies (I), and then show that there does not exist an estimator $\psi$ satisfying (II).

\textbf{(I) MLE Achievability}

\noindent\textbf{Overview of the proof}: We show that if the observation rate is above a certain threshold, then the worst-case probability of error approaches to $0$ as $n\rightarrow \infty$ for MLE. 
In specific, we show the following. 
Given observed ratings and graph side information, Lemma~\ref{lem:1} shows the negative log-likelihood of a latent preference matrix can be written in a compact form. 
Then Lemma~\ref{lem:2} represents the probability of the event ``the likelihood of a candidate latent preference matrix is greater than that of the ground-truth latent preference matrix'' in a compact form. 
In Lemma~\ref{lem:3}, we apply Chernoff bounds to the result of Lemma~\ref{lem:2} to get an upper bound of the probability of error. 
Then we finally show that the worst case probability of error approaches to $0$ as $n\rightarrow \infty$ by applying the union bound.
To get a tight bound, we enumerate all possible latent preference matrices and group them into four distinct types based on Definition~\ref{def:A:3}. 
Note that our technical contributions lie in the proofs of Lemma~\ref{lem:1}, Lemma~\ref{lem:2} and Lemma~\ref{lem:3} in which we must consider a significantly larger set of candidate latent preference matrices compared to the symmetric case.
In Remark~\ref{remark:11}, we give a detailed explanation of our definition of $I_s$. The following diagram visualizes the proof dependencies.

\begin{tcolorbox}
\textbf{Proof dependencies}:\\
(I) of Theorem 1 $\longleftarrow$ Equation (1)\\
\phantom ~~~~~~Equation (1) $\longleftarrow$ Lemma 2, 3\\
\phantom ~~~~~~~~~~~~ Lemma 2 $\longleftarrow$ Lemma 1\\
\phantom ~~~~~~~~~~~~ Lemma 3 $\longleftarrow$ Chernoff bounds
\end{tcolorbox}

Let $R$ be an arbitrary ground-truth latent preference matrix satisfying $\|u_R-v_R\|_0=\lceil \gamma m \rceil$ and assume our model is generated as per $R$ (i.e., user $i$ likes item $j$ with probability $R_{ij}$ and $n$ users are clustered into $A_R$ and $B_R$). By switching the order of items (columns of the latent preference matrix) if necessary, we can assume $(u_R)_j=(v_R)_j$ for $j=1,2,\dots, (m-\lceil \gamma m\rceil)$ and $(u_R)_j \neq (v_R)_j$ for $j=(m-\lceil \gamma m\rceil +1),\dots, m$. By switching the order of users (rows of the latent preference matrix) if necessary, we can also assume that $A_R=[\frac{n}{2}], B_R=[n]\setminus [\frac{n}{2}]$. We will first find the upper bound of $\Pr(\psi_{ML}(N^\Omega, G)\neq R)$ for arbitrary $R$, and show that the upper bound approaches to $0$ as $n$ approaches to infinity. We need following lemmas.

\begin{lemma}\label{lem:1}
Let $L(X)$ be the negative log-likelihood of a latent preference matrix $X$ for given $N^{\Omega}$ and $G$. Then $$L(X)=\underset{1\le t \le d}{\Sigma}\Big[\underset{(i,j)\in \Omega_t}{\Sigma}\left\{\mathbb{1}(N^{\Omega}_{ij}=1)(-\log p_t) + \mathbb{1}(N^{\Omega}_{ij}=-1) (-\log (1-p_t))\right\}\Big]+\log \Big(\frac{\alpha(1-\beta)}{(1-\alpha)\beta}\Big) e(A_X,B_X)+c,$$ where $$c:=-\log \Big\{ (1-p)^{nm-|\Omega|}p^{|\Omega|} \alpha^{|E|} (1-\alpha)^{2\binom{n/2}{2}} (1-\beta)^{(\frac{n}{2})^2} \Big\},$$ and $\Omega_t :=\{(i,j)\in \Omega : X_{ij}=p_t\}$.
\end{lemma}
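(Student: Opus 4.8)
The plan is to write down the joint likelihood of the observations $(N^{\Omega},G)$ under the hypothesis that the latent preference matrix is $X$ (which, since $K=2$ and the clusters are balanced, also fixes the two size-$n/2$ clusters $A_X,B_X$ of its rows), and then take $-\log$. The structural point, immediate from the model in Sec.~\ref{sec:3}, is that conditioned on the generative parameters the $nm$ rating entries $\{N^{\Omega}_{ij}\}$ and the $\binom{n}{2}$ potential edges of $G$ are all mutually independent; hence the likelihood factorizes into a ``rating factor'' times a ``graph factor,'' and $L(X)$ is the sum of the two corresponding negative log-likelihoods.

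First I would treat the rating factor. Since $(N^{\Omega})_{ij}\sim\mathrm{Bern}(p)\cdot(2\mathrm{Bern}(R_{ij})-1)$, under hypothesis $X$ we have $\Pr(N^{\Omega}_{ij}=0)=1-p$, $\Pr(N^{\Omega}_{ij}=+1)=pX_{ij}$, $\Pr(N^{\Omega}_{ij}=-1)=p(1-X_{ij})$, independently over $(i,j)$, so the rating factor equals
\[
(1-p)^{nm-|\Omega|}\,p^{|\Omega|}\prod_{(i,j)\in\Omega}X_{ij}^{\,\mathbb{1}(N^{\Omega}_{ij}=1)}(1-X_{ij})^{\,\mathbb{1}(N^{\Omega}_{ij}=-1)} .
\]
Taking $-\log$, the prefactor $(1-p)^{nm-|\Omega|}p^{|\Omega|}$ is independent of $X$ (it depends on the data only through $|\Omega|$) and goes into $c$; the remaining sum $\sum_{(i,j)\in\Omega}\{\mathbb{1}(N^{\Omega}_{ij}=1)(-\log X_{ij})+\mathbb{1}(N^{\Omega}_{ij}=-1)(-\log(1-X_{ij}))\}$ is regrouped according to the value $X_{ij}$ takes in $\{p_1,\dots,p_d\}$, i.e.\ over the sets $\Omega_t=\{(i,j)\in\Omega:X_{ij}=p_t\}$, which yields exactly the first summand in the statement.

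Next I would treat the graph factor. Under the symmetric SBM with partition $A_X,B_X$ there are $2\binom{n/2}{2}$ within-cluster potential edges and $(n/2)^2$ across-cluster potential edges; writing $e_{\mathrm{in}}:=|E|-e(A_X,B_X)$ for the number of realized within-cluster edges, the graph factor is $\alpha^{e_{\mathrm{in}}}(1-\alpha)^{2\binom{n/2}{2}-e_{\mathrm{in}}}\,\beta^{e(A_X,B_X)}(1-\beta)^{(n/2)^2-e(A_X,B_X)}$. Taking $-\log$ and substituting $e_{\mathrm{in}}=|E|-e(A_X,B_X)$, all terms proportional to $e(A_X,B_X)$ combine into the single coefficient $\log\alpha-\log(1-\alpha)-\log\beta+\log(1-\beta)=\log\frac{\alpha(1-\beta)}{(1-\alpha)\beta}$, while the leftover terms involve only $|E|,n,\alpha,\beta$ and are therefore $X$-independent and join $c$. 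Adding the two negative log-likelihoods gives the claimed identity.

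I do not expect a genuine obstacle: this lemma is a deterministic rewriting whose purpose is to separate the $X$-dependence into (a) the value-dependent quantities $\Omega_t$ and (b) the single partition-dependent quantity $e(A_X,B_X)$, which the probabilistic Lemmas~\ref{lem:2} and~\ref{lem:3} then control. The only step needing care is the exponent bookkeeping in the graph factor --- after substituting $e_{\mathrm{in}}=|E|-e(A_X,B_X)$ one must check that every occurrence of the partition is funneled into $e(A_X,B_X)$ and that the four logarithmic cross-terms add up to precisely $\log\frac{\alpha(1-\beta)}{(1-\alpha)\beta}$ (the sign works out because $\alpha\ge\beta$ makes this coefficient nonnegative, as it should be).
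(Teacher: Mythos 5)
Your proposal is correct and follows essentially the same route as the paper's proof: factor the likelihood as $\Pr(N^{\Omega}\mid X)\Pr(G\mid X)$, write the rating factor as $(1-p)^{nm-|\Omega|}p^{|\Omega|}$ times the product of the per-entry Bernoulli terms grouped by $\Omega_t$, write the SBM factor with exponents expressed through $e(A_X,B_X)$, and take $-\log$ so the $X$-independent terms collapse into $c$ and the edge-count terms combine into the coefficient $\log\frac{\alpha(1-\beta)}{(1-\alpha)\beta}$. No gaps; the bookkeeping you flag is exactly the computation the paper performs.
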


\begin{proof}
The likelihood of latent preference matrix X given $N^{\Omega}$ and G is $\Pr(N^{\Omega}, G|X)=\Pr(N^{\Omega}|X)\Pr(G|X)$. It is clear that 
$$\Pr(N^{\Omega}|X)=(1-p)^{nm-|\Omega|}p^{|\Omega|}\underset{1\le t \le d}{\Pi}\Big[\underset{(i,j)\in \Omega_t}{\Pi}\big\{p_t^{\mathbb{1}(N^{\Omega}_{ij}=1)}(1-p_t)^{\mathbb{1}(N^{\Omega}_{ij}=-1)}\big\}\Big]$$
where $\Omega_t :=\{(i,j)\in \Omega : X_{ij}=p_t\}$, and
$$\Pr(G|X)=\alpha^{|E|-e(A_X,B_X)}(1-\alpha)^{2\binom{n/2}{2} -(|E|-e(A_X,B_X))} \beta^{e(A_X,B_X)}(1-\beta)^{(\frac{n}{2})^2-e(A_X,B_X)}$$
Then the negative log-likelihood of X can be computed as follows.
\begin{align*}
L(X)&=-\log (\Pr(N^{\Omega}, G|X))\\
&=\underset{1\le t \le d}{\Sigma}\Big[\underset{(i,j)\in \Omega_t}{\Sigma}\big\{\mathbb{1}(N^{\Omega}_{ij}=1)(-\log p_t) + \mathbb{1}(N^{\Omega}_{ij}=-1) (-\log (1-p_t))\big\}\Big]+\log \big(\frac{\alpha(1-\beta)}{(1-\alpha)\beta}\big) e(A_X,B_X)+c
\end{align*}
where $c=-\log \big\{ (1-p)^{nm-|\Omega|}p^{|\Omega|} \alpha^{|E|} (1-\alpha)^{2\binom{n/2}{2}} (1-\beta)^{(\frac{n}{2})^2} \big\}$.
\end{proof}

\begin{definition}\label{def:A:3}
(i) $\chi(k,a_1, a_2, b_1, b_2):= \big\{X\in \{p_1, p_2, \dots , p_d\}^{n\times m} : |A_X\setminus A_R|=|B_X\setminus B_R|=k,$ $u_X$ differs from $u_R$ at $a_1$ coordinates among the first $m-\lceil \gamma m\rceil$ coordinates, $u_X$ differs from $u_R$ at $a_2$ coordinates among the last $\lceil \gamma m \rceil$ coordinates, $v_X$ differs from $v_R$ at $b_1$ many coordinates among the first $m-\lceil \gamma m\rceil$ coordinates, and $v_X$ differs from $v_R$ at $b_2$ many coordinates among the last $\lceil \gamma m \rceil$ coordinates.$ \big\}$. (ii) Let $\mathcal{I}$ to be the index set of $\chi$, namely, $\mathcal{I}:=\big\{(k,a_1,a_2,b_1,b_2)\neq (0,0,0,0,0) : 0\le k\le \frac{n}{4}, 0\le a_1,b_1\le m-\lceil \gamma m \rceil, 0\le a_2,b_2\le \lceil \gamma m \rceil \big\}$.
\end{definition}

Note that we can assume $k\le \frac{n}{4}$ by switching the role of $A_X$ and $B_X$ if necessary.

\begin{lemma}\label{lem:2}
For $X\in \chi(k,a_1, a_2, b_1, b_2)$, 
\begin{align*}
\Pr(L(X)\le L(R))&=\Pr\bigg(\underset{1\le a\neq b \le d}{\Sigma}\Big[\underset{(i,j)\in Q_{ab}}{\Sigma}\mathbb{P}_{ij}\big\{\mathbb{P}_{a,ij}\log \frac{p_b}{p_a} + (1-\mathbb{P}_{a,ij}) \log \frac{1-p_b}{1-p_a} \big\}\Big]\\
&~~~~~~~~~~~~~~~~+\log \Big(\frac{\alpha(1-\beta)}{(1-\alpha)\beta}\Big) \underset{1\le i \le 2(\frac{n}{2}-k)k}\Sigma(B_i-A_i)\ge 0\bigg),    
\end{align*}
where $$Q_{ab}:=\big\{(i,j)\in [n]\times [m] : R_{ij}=p_a, X_{ij}=p_b\big\},$$ and $A_i\overset{i.i.d.}{\sim} \text{Bern}(\alpha)$, $B_i\overset{i.i.d.}{\sim} \text{Bern}(\beta)$, $\mathbb{P}_{ij}\overset{i.i.d.}{\sim} \text{Bern}(p)$, $\mathbb{P}_{a, ij}\overset{i.i.d.}{\sim} \text{Bern}(p_a)$.
\end{lemma}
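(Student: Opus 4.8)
\textbf{Proof proposal for Lemma~\ref{lem:2}.}

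The plan is to start from the closed form of the negative log-likelihood established in Lemma~\ref{lem:1}, and simply subtract $L(R)$ from $L(X)$, tracking which terms survive. Writing $L(X)-L(R)$, the additive constant $c$ cancels immediately since it does not depend on the candidate matrix. What remains is the difference of the rating-likelihood sums plus the graph term $\log\big(\tfrac{\alpha(1-\beta)}{(1-\alpha)\beta}\big)\big(e(A_X,B_X)-e(A_R,B_R)\big)$. So the event $\{L(X)\le L(R)\}$ is exactly $\{L(X)-L(R)\le 0\}$, and I want to rewrite each of these two differences in the compact form stated.

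For the rating part, I would reorganize the double sum $\sum_t \sum_{(i,j)\in\Omega_t}$ over entries of $\Omega$ according to the pair $(R_{ij},X_{ij})$ rather than just $X_{ij}$. Partition $\Omega\cap([n]\times[m])$ by the sets $Q_{ab}=\{(i,j):R_{ij}=p_a,\,X_{ij}=p_b\}$; on $Q_{aa}$ the contributions to $L(X)$ and $L(R)$ are identical and cancel, leaving only the off-diagonal pairs $a\neq b$. On $Q_{ab}$ with $a\neq b$, the contribution to $L(X)-L(R)$ from entry $(i,j)$ is $\mathbb{1}((i,j)\in\Omega)\big\{\mathbb{1}(N^\Omega_{ij}=1)(-\log p_b+\log p_a)+\mathbb{1}(N^\Omega_{ij}=-1)(-\log(1-p_b)+\log(1-p_a))\big\}$. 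Now I introduce the indicator random variables: the membership $(i,j)\in\Omega$ is $\mathbb{P}_{ij}\sim\mathrm{Bern}(p)$, and conditioned on observing, $\mathbb{1}(N^\Omega_{ij}=1)$ is $\mathbb{P}_{a,ij}\sim\mathrm{Bern}(p_a)$ because on $Q_{ab}$ the true preference level is $p_a$; then $\mathbb{1}(N^\Omega_{ij}=-1)=1-\mathbb{P}_{a,ij}$ on the observed entries. Substituting gives exactly the summand $\mathbb{P}_{ij}\big\{\mathbb{P}_{a,ij}\log\tfrac{p_b}{p_a}+(1-\mathbb{P}_{a,ij})\log\tfrac{1-p_b}{1-p_a}\big\}$, and summing over $a\neq b$ and $(i,j)\in Q_{ab}$ reproduces the first term on the right-hand side. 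I should note that these Bernoulli variables are mutually independent across $(i,j)$ by the independence assumptions on $\Omega$ and on the ratings, so the distributional rewriting is legitimate.

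For the graph part, I need $e(A_X,B_X)-e(A_R,B_R)$. Since $|A_X\setminus A_R|=|B_X\setminus B_R|=k$ and $A_R=[\tfrac n2]$, the symmetric difference between the cut defined by $(A_X,B_X)$ and the cut defined by $(A_R,B_R)$ consists of exactly $2(\tfrac n2-k)k$ vertex pairs: each of the $k$ vertices moved from $A_R$ into $B_X$ pairs with the $\tfrac n2-k$ vertices of $A_R$ that stayed, and symmetrically on the other side. Each such pair contributes $+1$ to $e(A_X,B_X)-e(A_R,B_R)$ if it is an edge and was previously a within-cluster pair, and $-1$ if it is an edge and was previously a cross-cluster pair (pairs that are not edges contribute $0$). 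A within-cluster pair (under the ground truth) is an edge with probability $\alpha$ and a cross-cluster pair is an edge with probability $\beta$; labelling the ``newly-cut'' pairs' edge indicators $A_i\sim\mathrm{Bern}(\alpha)$ and the ``newly-uncut'' pairs' edge indicators $B_i\sim\mathrm{Bern}(\beta)$, the net change is $\sum_{i=1}^{2(n/2-k)k}(B_i-A_i)$, independent of the rating randomness. Multiplying by $\log\big(\tfrac{\alpha(1-\beta)}{(1-\alpha)\beta}\big)$ and combining with the rating term, the event $\{L(X)-L(R)\le 0\}$ becomes the stated event (after flipping the inequality sign, since $L(X)\le L(R)$ is equivalent to $L(R)-L(X)\ge 0$, which is the displayed form with the $\ge 0$).

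The main obstacle I anticipate is the bookkeeping in the graph term: correctly identifying that exactly $2(\tfrac n2-k)k$ vertex pairs change their cut-status, getting the signs right (within-cluster pairs that become cut versus cross-cluster pairs that become uncut), and verifying that among the moved vertices themselves no double-counting or additional pairs arise. One has to be slightly careful that a pair consisting of two moved vertices — one from each side — does not change cut-status at all, so it contributes nothing; checking this combinatorial accounting carefully is where errors would creep in. The rating side is comparatively routine once the partition by $Q_{ab}$ is set up, though one must be careful that $\mathbb{P}_{a,ij}$ has parameter $p_a$ (the true level) and not $p_b$.
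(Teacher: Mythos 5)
Your proposal is correct and follows essentially the same route as the paper: take the difference of the two Lemma~1 expressions so that $c$ cancels, regroup the rating sum by the pairs $(R_{ij},X_{ij})$ so that the diagonal blocks cancel and the indicators can be replaced by $\mathbb{P}_{ij}\sim\mathrm{Bern}(p)$ and $\mathbb{P}_{a,ij}\sim\mathrm{Bern}(p_a)$, and account for the cut difference via the $2(\tfrac n2-k)k$ newly-cut ($\mathrm{Bern}(\alpha)$) and newly-uncut ($\mathrm{Bern}(\beta)$) vertex pairs, exactly as the paper does with its four-term decomposition of $e(A_R,B_R)$ and $e(A_X,B_X)$. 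The only blemish is a one-line sign slip: as written, $\sum_i(B_i-A_i)$ equals $e(A_R,B_R)-e(A_X,B_X)$ rather than the difference $e(A_X,B_X)-e(A_R,B_R)$ you announced you were computing, but since you flip to $L(R)-L(X)\ge 0$ at the end, the final displayed event is the correct one.
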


\begin{proof}
By Lemma~\ref{lem:1}, $$L(R)=\underset{1\le a \le d}{\Sigma}\Big[\underset{(i,j)\in \Omega_a^{(R)}}{\Sigma}\big\{\mathbb{1}(N^{\Omega}_{ij}=1)(-\log p_a) + \mathbb{1}(N^{\Omega}_{ij}=-1) (-\log (1-p_a))\big\}\Big]+\log \Big(\frac{\alpha(1-\beta)}{(1-\alpha)\beta}\Big) e(A_R,B_R)+c$$ and $$L(X)=\underset{1\le b \le d}{\Sigma}\Big[\underset{(i,j)\in \Omega_b^{(X)}}{\Sigma}\big\{\mathbb{1}(N^{\Omega}_{ij}=1)(-\log p_b) + \mathbb{1}(N^{\Omega}_{ij}=-1) (-\log (1-p_b))\big\}\Big]+\log \Big(\frac{\alpha(1-\beta)}{(1-\alpha)\beta}\Big) e(A_X,B_X)+c$$
where $\Omega_a^{(R)} :=\big\{(i,j)\in \Omega : R_{ij}=p_a \big\}$ and $\Omega_b^{(X)} :=\big\{(i,j)\in \Omega : X_{ij}=p_b \big\}$. Then
\begin{align*}
L(R)-L(X)&= \underset{1\le a, b \le d}{\Sigma}\Big[\underset{(i,j)\in \Omega_{ab}}{\Sigma}\big\{\mathbb{1}(N^{\Omega}_{ij}=1)(-\log p_a +\log p_b) + \mathbb{1}(N^{\Omega}_{ij}=-1) (-\log (1-p_a)+\log (1-p_b))\big\}\Big]\\
&~~~~~~+\log \big(\frac{\alpha(1-\beta)}{(1-\alpha)\beta}\big) \{e(A_R,B_R)-e(A_X,B_X)\}
\end{align*}
where $\Omega_{ab} :=\{(i,j)\in \Omega : R_{ij}=p_a, X_{ij}=p_b\}$. 
It is clear that 
\begin{align*}
&\underset{1\le a, b \le d}{\Sigma}\Big[\underset{(i,j)\in \Omega_{ab}}{\Sigma}\big\{\mathbb{1}(N^{\Omega}_{ij}=1)(-\log p_a +\log p_b) + \mathbb{1}(N^{\Omega}_{ij}=-1) (-\log (1-p_a)+\log (1-p_b))\big\}\Big]\\
&=\underset{1\le a\neq b \le d}{\Sigma}\Big[\underset{(i,j)\in \Omega_{ab}}{\Sigma}\big\{\mathbb{1}(N^{\Omega}_{ij}=1)(-\log p_a +\log p_b) + \mathbb{1}(N^{\Omega}_{ij}=-1) (-\log (1-p_a)+\log (1-p_b))\big\}\Big]\\
&=\underset{1\le a\neq b \le d}{\Sigma}\Big[\underset{(i,j)\in Q_{ab}}{\Sigma}\mathbb{P}_{ij}\big\{\mathbb{P}_{a,ij}\log \frac{p_b}{p_a} + (1-\mathbb{P}_{a,ij}) \log \frac{1-p_b}{1-p_a} \big\}\Big],
\end{align*} where $Q_{ab}:=\{(i,j)\in [n]\times [m] : R_{ij}=p_a, X_{ij}=p_b\}$, $\mathbb{P}_{ij}\overset{i.i.d.}{\sim} \text{Bern}(p)$, $\mathbb{P}_{a, ij}\overset{i.i.d.}{\sim} \text{Bern}(p_a)$. Note that our ground-truth latent preference matrix is $R$, hence $\mathbb{1}(N_{ij}^{\Omega}=1)=\mathbb{P}_{ij} \mathbb{P}_{a, ij}$ and $\mathbb{1}(N_{ij}^{\Omega}=-1)=\mathbb{P}_{ij}(1-\mathbb{P}_{a, ij})$ for $(i,j)\in Q_{ab}$).

Note that $X\in \chi(k,a_1, a_2, b_1, b_2)$, $|A_X\setminus A_R|=|B_X\setminus B_R|=k$, $|A_X\cap A_R|=|B_X\cap B_R|=\frac{n}{2}-k,$
$$e(A_R,B_R)=e(B_X\setminus B_R,B_X\cap B_R)+e(B_X\setminus B_R, A_X\setminus A_R)+e(A_X\cap A_R,B_X\cap B_R)+e(A_X\cap A_R,A_X\setminus A_R),$$ and
$$e(A_X,B_X)=e(B_X\setminus B_R,A_X\cap A_R)+e(B_X\setminus B_R, A_X\setminus A_R)+e(B_X\cap B_R,A_X\cap A_R)+e(B_X\cap B_R,A_X\setminus A_R),$$ so we can get
\begin{align*}
e(A_R,B_R)-e(A_X,B_X)&=e(B_X\setminus B_R,B_X\cap B_R)+e(A_X\cap A_R,A_X\setminus A_R)-\{e(B_X\setminus B_R,A_X\cap A_R)\\
&~~~~+e(B_X\cap B_R,A_X\setminus A_R)\}=\underset{1\le i \le 2(\frac{n}{2}-k)k}\Sigma B_i-\underset{1\le i \le 2(\frac{n}{2}-k)k}\Sigma A_i,
\end{align*}
where $A_i\overset{i.i.d.}{\sim} \text{Bern}(\alpha)$, $B_i\overset{i.i.d.}{\sim} \text{Bern}(\beta)$.
\end{proof}

\begin{lemma}\label{lem:3}
Let $X\in \chi(k,a_1, a_2, b_1, b_2), Q_{ab}:=\{(i,j)\in [n]\times [m] : R_{ij}=p_a, X_{ij}=p_b\}$, $A_i\overset{i.i.d.}{\sim} \text{Bern}(\alpha)$, $B_i\overset{i.i.d.}{\sim} \text{Bern}(\beta)$, $\mathbb{P}_{ij}\overset{i.i.d.}{\sim} \text{Bern}(p)$, $\mathbb{P}_{a, ij}\overset{i.i.d.}{\sim} \text{Bern}(p_a)$. Then $\Pr\Big(\underset{1\le a\neq b \le d}{\Sigma}\Big[\underset{(i,j)\in Q_{ab}}{\Sigma}\mathbb{P}_{ij}\big\{\mathbb{P}_{a,ij}\log \frac{p_b}{p_a} + (1-\mathbb{P}_{a,ij}) \log \frac{1-p_b}{1-p_a} \big\}\Big]+\log \big(\frac{\alpha(1-\beta)}{(1-\alpha)\beta}\big) \underset{1\le i \le 2(\frac{n}{2}-k)k}\Sigma(B_i-A_i)\ge 0\Big)\le \{1-p(d^{\min}_H)^2\}^{D_X}\{1- d^2_H(\alpha, \beta)\}^{4(\frac{n}{2}-k)k}$ where $D_X:=|\{(i,j)\in [n]\times [m] : R_{ij}\neq X_{ij}\}|$.
\end{lemma}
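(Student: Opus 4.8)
The plan is to apply a Chernoff-type bound with the specific exponent $\lambda = \tfrac12$, which is precisely the choice that makes Hellinger distances surface. Write $S$ for the random variable inside the probability in the statement. By Lemma~\ref{lem:2} and the model setup, $S$ is a sum of mutually independent terms: for each pair $a\neq b$ and each $(i,j)\in Q_{ab}$ a term $T_{ij} := \mathbb{P}_{ij}\bigl\{\mathbb{P}_{a,ij}\log\tfrac{p_b}{p_a} + (1-\mathbb{P}_{a,ij})\log\tfrac{1-p_b}{1-p_a}\bigr\}$, and for each $i\in\{1,\dots,2(\tfrac n2-k)k\}$ a graph term $\log\bigl(\tfrac{\alpha(1-\beta)}{(1-\alpha)\beta}\bigr)(B_i-A_i)$. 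Since $\{S\ge 0\}=\{e^{S/2}\ge 1\}$, Markov's inequality gives $\Pr(S\ge 0)\le \mathbb{E}[e^{S/2}]$, and by independence this factorizes as a product of $\mathbb{E}[e^{T_{ij}/2}]$ over all $(i,j)$ with $R_{ij}\neq X_{ij}$ times a product of graph factors.

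For the rating factors, I would compute $\mathbb{E}[e^{T_{ij}/2}]$ by conditioning on $\mathbb{P}_{ij}$ and $\mathbb{P}_{a,ij}$: it equals $(1-p) + p\bigl(\sqrt{p_a p_b} + \sqrt{(1-p_a)(1-p_b)}\bigr) = 1 - p\,d^2_H(p_a,p_b) \le 1 - p(d^{\min}_H)^2$, where the middle step is just the definition of $d_H$ and the last inequality uses $d^2_H(p_a,p_b)\ge (d^{\min}_H)^2$ for $a\neq b$. Since every $(i,j)$ with $R_{ij}\neq X_{ij}$ lies in exactly one $Q_{ab}$ with $a\neq b$, there are exactly $D_X$ such factors, producing $\{1 - p(d^{\min}_H)^2\}^{D_X}$.

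For the graph factors, I would pair each $A_i$ with the corresponding $B_i$ and use independence to write $\mathbb{E}\bigl[e^{\frac12\log(\frac{\alpha(1-\beta)}{(1-\alpha)\beta})(B_i-A_i)}\bigr] = \mathbb{E}\bigl[e^{\frac12\log(\frac{\alpha(1-\beta)}{(1-\alpha)\beta})B_i}\bigr]\,\mathbb{E}\bigl[e^{-\frac12\log(\frac{\alpha(1-\beta)}{(1-\alpha)\beta})A_i}\bigr]$. Each of these is an affine function of the corresponding Bernoulli; expanding and simplifying (the substitution $\alpha=u^2$, $1-\alpha=v^2$, $\beta=s^2$, $1-\beta=t^2$ makes the algebra transparent) gives $\bigl(\sqrt{\alpha\beta}+\sqrt{(1-\alpha)(1-\beta)}\bigr)^2 = \bigl(1-d^2_H(\alpha,\beta)\bigr)^2$ for each of the $2(\tfrac n2-k)k$ pairs, hence a total graph contribution of $\{1-d^2_H(\alpha,\beta)\}^{4(\frac n2-k)k}$. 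Multiplying the rating and graph pieces yields the claimed bound.

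I expect the main obstacle to be the graph-factor computation: one must verify that the two cross terms collapse to exactly $\bigl(1-d^2_H(\alpha,\beta)\bigr)^2$ per pair. The exponent $\lambda=\tfrac12$ is exactly what forces this cancellation — any other exponent leaves a residual that does not reduce to a Hellinger distance. Beyond that the proof is bookkeeping that still needs care: confirming that $S$ is genuinely a sum of independent terms (the $A_i$, $B_i$, $\mathbb{P}_{ij}$, $\mathbb{P}_{a,ij}$ attached to distinct indices are independent by construction), that $\sum_{a\neq b}|Q_{ab}| = D_X$, and that there are $4(\tfrac n2-k)k$ graph Bernoullis in total, matching the exponent in the target bound.
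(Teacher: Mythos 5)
Your proposal is correct and follows essentially the same route as the paper's proof: apply Markov's inequality to $e^{S/2}$, factor the moment generating function by independence, identify each rating factor as $1-p\,d^2_H(p_a,p_b)\le 1-p(d^{\min}_H)^2$ with exactly $D_X$ such factors, and show each of the $2(\tfrac n2-k)k$ graph pairs contributes $\bigl(\sqrt{\alpha\beta}+\sqrt{(1-\alpha)(1-\beta)}\bigr)^2=\bigl(1-d^2_H(\alpha,\beta)\bigr)^2$. The graph-factor cancellation you flag as the main obstacle does go through exactly as you describe, so no gap remains.
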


\begin{proof}
Let \\
$$Z:=\underset{1\le a\neq b \le d}{\Sigma}\Big[\underset{(i,j)\in Q_{ab}}{\Sigma}\mathbb{P}_{ij}\big\{\mathbb{P}_{a,ij}\log \frac{p_b}{p_a} + (1-\mathbb{P}_{a,ij}) \log \frac{1-p_b}{1-p_a} \big\}\Big]+\log \big(\frac{\alpha(1-\beta)}{(1-\alpha)\beta}\big) \underset{1\le i \le 2(\frac{n}{2}-k)k}\Sigma(B_i-A_i)$$\\
Then 
\begin{eqnarray}
&\quad&\Pr(Z\ge 0) = 
\Pr(e^{\frac{1}{2}Z}\ge 1)\nonumber \\
&\le& \E[e^{\frac{1}{2}Z}] \qquad (\because \text{Markov's inequality}) \nonumber \\
&=& \underset{1\le a\neq b \le d}{\Pi}\underset{(i,j)\in Q_{ab}}{\Pi}\big\{(1-p)+p(\sqrt{p_a p_b}+\sqrt{(1-p_a)(1-p_b)})\big\}\underset{1\le i \le 2(\frac{n}{2}-k)k}\Pi(\sqrt{\alpha\beta}+\sqrt{(1-\alpha)(1-\beta)})^2 \nonumber \\
&=& \underset{1\le a\neq b \le d}{\Pi}\underset{(i,j)\in Q_{ab}}{\Pi}\big\{1-p d^2_H(p_a, p_b)  \big\}\underset{1\le i \le 2(\frac{n}{2}-k)k}\Pi(1- d^2_H(\alpha, \beta) )^2 \nonumber \\
&\le& \underset{1\le a\neq b \le d}{\Pi}\underset{(i,j)\in Q_{ab}}{\Pi}\{1-p (d^{\min}_H)^2 \}\underset{1\le i \le 2(\frac{n}{2}-k)k}\Pi(1- d^2_H(\alpha, \beta))^2 \nonumber \\
&=& \{1-p (d^{\min}_H)^2\}^{|\{(i,j)\in [n]\times [m] : R_{ij}\neq X_{ij}\}|}\{1- d^2_H(\alpha, \beta)\}^{4(\frac{n}{2}-k)k} \nonumber\\
&=& \{1-p (d^{\min}_H)^2\}^{D_X}\{1- d^2_H(\alpha, \beta)\}^{4(\frac{n}{2}-k)k}. \nonumber
\end{eqnarray}
\end{proof}

\begin{remark} \label{remark:11}
We give a detailed explanation regarding the definition of $I_s$ in our paper. In the proof of Lemma 3 in~\cite{Ahn_2018}, they made implicit assumptions that $\alpha,\beta\rightarrow 0$ and $\frac{\alpha}{\beta}\rightarrow 1$ as $n\rightarrow 0$, and used these assumptions when they approximate $-2\log \big( 1- d^2_H(\alpha, \beta) \big)=(1+o(1))(\sqrt{\alpha}-\sqrt{\beta})^2$. The approximation does not hold without above assumptions. In general,
\begingroup
\allowdisplaybreaks
\begin{align*}
1- d^2_H(\alpha, \beta) &= \sqrt{\alpha \beta}+\sqrt{(1-\alpha)(1-\beta)}\\
&=\sqrt{(\beta +y) \beta}+\sqrt{(1-\beta-y)(1-\beta)} \quad (y:=\alpha-\beta) \\
&=\beta\sqrt{(1 +\frac{y}{\beta}) }+(1-\beta)\sqrt{(1-\frac{y}{1-\beta})} \\
&=\beta\Big\{1+\frac{1}{2}\frac{y}{\beta}-\frac{1}{8}(\frac{y}{\beta})^2+O(y^3)\Big\}+(1-\beta)\Big\{1-\frac{1}{2}\frac{y}{1-\beta}-\frac{1}{8}(\frac{y}{1-\beta})^2+O(y^3)\Big\}\\
&~~~~(\because \sqrt{1+x}=1+\frac{x}{2}-\frac{x^2}{8}+O(x^3) )\\
&=1-\frac{y^2}{8\beta(1-\beta)}+O(y^3)\\
&=1-\frac{(\alpha-\beta)^2}{8\beta(1-\beta)}+O((\alpha-\beta)^3)\\
&=1-\frac{(\sqrt{\alpha}-\sqrt{\beta})^2(\sqrt{\alpha}+\sqrt{\beta})^2}{8\beta(1-\beta)}+O((\sqrt{\alpha}-\sqrt{\beta})^3)    
\end{align*}
\endgroup
Assuming $\sqrt{\alpha}-\sqrt{\beta}=o(1)$ (which is true when $I_s=o(1)$),
\begin{align*}
-2\log \big( 1- d^2_H(\alpha, \beta) \big) &=-2\Big\{-\frac{(\sqrt{\alpha}-\sqrt{\beta})^2(\sqrt{\alpha}+\sqrt{\beta})^2}{8\beta(1-\beta)}+O((\sqrt{\alpha}-\sqrt{\beta})^3)\Big\}\\
&~~~~(\because \log (1+x)=x+O(x^2))\\
&=(\sqrt{\alpha}-\sqrt{\beta})^2 \Big\{\frac{(\sqrt{\alpha}+\sqrt{\beta})^2}{4\beta(1-\beta)}+o(1)\Big\}
\end{align*}

Hence we get $-2\log \big( 1- d^2_H(\alpha, \beta) \big) =(\sqrt{\alpha}-\sqrt{\beta})^2 \big\{\frac{(\sqrt{\alpha}+\sqrt{\beta})^2}{4\beta(1-\beta)}+o(1)\big\}$. Note that  If $\alpha,\beta\rightarrow 0$ and $\frac{\alpha}{\beta}\rightarrow 4$ as $n\rightarrow \infty$, then $-2\log \big( 1- d^2_H(\alpha, \beta) \big)=(\sqrt{\alpha}-\sqrt{\beta})^2 \{\frac{9}{4}+o(1)\}$ which is different from $(\sqrt{\alpha}-\sqrt{\beta})^2 \{1+o(1)\}$ which means the approximation used in~\cite{Ahn_2018} depends on the asymptotic behavior of $\alpha, \beta$. This is why we introduce a modified definition of $I_s:=-2\log \big( 1- d^2_H(\alpha, \beta) \big)$, then our achievability result holds for any $\alpha$ and $\beta$.
\end{remark}

The event ``$\psi_{ML}(N^\Omega, G)\neq R$" occurs only if there exists a latent preference matrix $X$ whose likelihood is greater than $R$'s (in other words, $L(X)\le L(R)$ since $L(\cdot)$ is the negative log-likelihood). Let $[L(X)\le L(R)]$ denotes the event ``$L(X)\le L(R)$". Then 

\begin{align*}
\Pr(\psi_{ML}(N^\Omega, G)\neq R)&\le \Pr\big(\underset{X\neq R}{\cup}[L(X)\le L(R)]\big)\\
&\overset{\text{union bound}}{\le} \underset{X\neq R}{\Sigma}\Pr(L(X)\le L(R))\\
&\overset{\text{Lemma~\ref{lem:2}}}{=} \underset{z \in \mathcal{I}}{\Sigma}\underset{X\in \chi (z)}{\Sigma} \Pr\Big(\underset{1\le a\neq b \le d}{\Sigma}\Big[\underset{(i,j)\in Q_{ab}}{\Sigma}\mathbb{P}_{ij}\big\{\mathbb{P}_{a,ij}\log \frac{p_b}{p_a} + (1-\mathbb{P}_{a,ij}) \log \frac{1-p_b}{1-p_a} \big\}\Big]\\
&~~~~~~~~~~~~~~~~~~~~~~~~~~~~~~~~~~~~+\log \Big(\frac{\alpha(1-\beta)}{(1-\alpha)\beta}\Big) \underset{1\le i \le 2(\frac{n}{2}-k_z)k_z}\Sigma(B_i-A_i)\ge 0\Big)\\ &\overset{\text{Lemma~\ref{lem:3}}}{\le} \underset{z \in \mathcal{I}}{\Sigma}\underset{X\in \chi (z)}{\Sigma} \{1-p (d^{\min}_H)^2 \}^{D_X}\{1- d^2_H(\alpha, \beta) \}^{4(\frac{n}{2}-k_z)k_z}
\end{align*}

where $D_X:=|\{(i,j)\in [n]\times [m] : R_{ij}\neq X_{ij}\}|$ and $k_z$ is the first coordinate of $z=(k_z,a_1,a_2,b_1,b_2)$. A direct calculation yields $D_X\ge k_z\{a_1+(\lceil \gamma m\rceil-a_2)+b_1+(\lceil \gamma m \rceil-b_2)\}+(\frac{n}{2}-k_z)(a_1+a_2+b_1+b_2) =: D_z$, so we have an upper bound of $\Pr(\psi_{ML}(N^\Omega, G)\neq R)$ as follows:
\begin{eqnarray}\label{eq:1}
\Pr(\psi_{ML}(N^\Omega, G)\neq R) &\le& \underset{z \in \mathcal{I}}{\Sigma}\underset{X\in \chi (z)}{\Sigma} \{1-p (d^{\min}_H)^2\}^{D_X}\{1- d^2_H(\alpha, \beta) \}^{4(\frac{n}{2}-k_z)k_z} \nonumber\\
&\le& \underset{z \in \mathcal{I}}{\Sigma}\underset{X\in \chi (z)}{\Sigma} \{1-p (d^{\min}_H)^2\}^{D_z}\{1- d^2_H(\alpha, \beta)\}^{4(\frac{n}{2}-k_z)k_z} \nonumber\\
&=&\underset{z \in \mathcal{I}}{\Sigma}|\chi(z)| \{1-p (d^{\min}_H)^2\}^{D_z}\{1- d^2_H(\alpha, \beta)\}^{4(\frac{n}{2}-k_z)k_z}
\end{eqnarray}

We now show the upper bound of $\Pr(\psi_{ML}(N^\Omega, G)\neq R)$ approaches to $0$ as $n\rightarrow \infty$. Note that\\
$p\ge \frac{1}{(d^{\min}_H)^2} \max\Big\{\frac{(1+\epsilon) \log n - \frac{n}{2}I_s}{\gamma m},\frac{(1+\epsilon)2\log m}{n}\Big\}\Leftrightarrow$  $\frac{1}{2}nI_s+\gamma m p (d^{\min}_H)^2\ge (1+\epsilon)\log n$ and $\frac{1}{2}n p (d^{\min}_H)^2\ge (1+\epsilon)\log m$. Since the RHS of (\ref{eq:1}) increases as $p$ decreases, it suffices to consider the case when $p=O(\frac{\log n}{m}+\frac{\log m}{n})=o(1)$, which implies
$$\log (1-p (d^{\min}_H)^2)=-p (d^{\min}_H)^2+O(p^2)=-p (d^{\min}_H)^2(1+o(1))$$
Hence the RHS of (\ref{eq:1}) can be represented as 
\begin{equation}
\underset{z \in \mathcal{I}}{\Sigma}|\chi(z)| e^{-D_z p (d^{\min}_H)^2(1+o(1))-2(\frac{n}{2}-k_z)k_z I_s (1+o(1))}= \underset{z \in \mathcal{I}}{\Sigma}|\chi(z)| e^{(1+o(1))(-D_z I_r-2(\frac{n}{2}-k_z)k_z I_s) }\label{eq:2}
\end{equation}
where $I_r:=p (d^{\min}_H)^2$. For a constant $\delta \in (0, \text{min}\{\gamma, 1-\gamma\})$ (the exact value of $\delta$ will be determined later), define $$\mathcal{J}:=\{(k,a_1,a_2,b_1,b_2)\in \mathcal{I} : a_1,a_2,b_1,b_2<\delta m\},~ \mathcal{K}:=\{(k,a_1,a_2,b_1,b_2)\in \mathcal{I} : k<\delta n\}.$$
Now we show the RHS of (\ref{eq:2}) approaches to $0$ as $n\rightarrow \infty$ by dividing it into four partial sums over $\mathcal{I}\setminus (\mathcal{J}\cup \mathcal{K}), \mathcal{J}\setminus \mathcal{K}, \mathcal{K}\setminus \mathcal{J}, \mathcal{J}\cap \mathcal{K}$.\\

\begin{itemize}
    \item{Case 1. $\mathcal{I}\setminus (\mathcal{J}\cup \mathcal{K})$:} For $z=(k_z,a_1,a_2,b_1,b_2) \in \mathcal{I}\setminus (\mathcal{J}\cup \mathcal{K})$, $\delta n \le k_z \le \frac{n}{4}$ since $z\notin \mathcal{K}$. So $2(\frac{n}{2}-k_z)k_z \ge 2(\frac{n}{2}-\frac{n}{4})\delta n=\frac{\delta}{2}n^2$. As $z\notin \mathcal{J}$, we can assume $a_1\ge \delta m$ without loss of generality, which implies $D_z=k_z\{a_1+(\lceil \gamma m\rceil-a_2)+b_1+(\lceil \gamma m \rceil-b_2)\}+(\frac{n}{2}-k_z)(a_1+a_2+b_1+b_2)\ge (\frac{n}{2}-k_z)a_1 \ge (\frac{n}{2}-\frac{n}{4})\delta m=\frac{\delta}{4}nm$. Then 
\begingroup
\allowdisplaybreaks
\begin{align*}
&\quad\underset{z \in \mathcal{I}\setminus (\mathcal{J}\cup \mathcal{K})}{\Sigma}|\chi(z)| e^{(1+o(1))(-D_z I_r -2(\frac{n}{2}-k_z)k_z I_s)}\\
&\le \underset{z \in \mathcal{I}\setminus (\mathcal{J}\cup \mathcal{K})}{\Sigma}|\chi(z)| e^{(1+o(1))(-\frac{\delta}{4}nm I_r-\frac{\delta}{2}n^2 I_s)} \\
&= e^{(1+o(1))(-\frac{\delta}{4}nm I_r-\frac{\delta}{2}n^2 I_s)} \underset{z \in \mathcal{I}\setminus (\mathcal{J}\cup \mathcal{K})}{\Sigma}|\chi(z)| \\
&\le e^{(1+o(1))(-\frac{\delta}{4}nm I_r-\frac{\delta}{2}n^2 I_s)} \underset{z \in \mathcal{I}}{\Sigma}|\chi(z)| \\
&\le e^{(1+o(1))(-\frac{\delta}{4}nm I_r-\frac{\delta}{2}n^2 I_s)} 2^n d^{2m} \\ 
&~~~~(\because \text{the total number of latent preference matrices is bounded by } 2^n d^{2m}) \\ 
&\le e^{(1+o(1))(-n(\frac{\delta}{2}n I_s+\frac{\delta}{8}m I_r) -m(\frac{\delta}{8}n I_r))} e^{n\log 2+2m\log d} \\ 
&= e^{(1+o(1))(-n(\Omega(\log n)) -m(\Omega(\log m))+(n\log 2+2m\log d)} \rightarrow 0 \text{ as } n\rightarrow \infty. \\
&~~~~(\because m=w(\log n),\quad m \rightarrow \infty \text{ as } n \rightarrow \infty ) 
\end{align*}
\endgroup

\item{Case 2. $\mathcal{J}\setminus \mathcal{K}$:} For $z=(k_z,a_1,a_2,b_1,b_2) \in \mathcal{J}\setminus \mathcal{K}$, $\delta n \le k_z \le \frac{n}{4}$ since $z\notin \mathcal{K}$. So $2(\frac{n}{2}-k_z)k_z \ge 2(\frac{n}{2}-\frac{n}{4})\delta n=\frac{\delta}{2}n^2$. As $z\in \mathcal{J}$, $D_z=k_z\{a_1+(\lceil \gamma m\rceil-a_2)+b_1+(\lceil \gamma m \rceil-b_2)\}+(\frac{n}{2}-k_z)(a_1+a_2+b_1+b_2)\ge k_z\{a_1+(\lceil \gamma m\rceil-a_2)+b_1+(\lceil \gamma m \rceil-b_2)\} \ge k_z\{(\lceil \gamma m\rceil-\delta m)+(\lceil \gamma m \rceil-\delta m)\}\ge 2\delta (\gamma -\delta)mn$. By applying the argument of Case 1, we have 
$$\underset{z \in \mathcal{J}\setminus \mathcal{K}}{\Sigma}|\chi(z)| e^{(1+o(1))(-D_z I_r -2(\frac{n}{2}-k_z)k_z I_s)} \rightarrow 0 \text{ as } n\rightarrow \infty.$$

\item{Case 3. $\mathcal{K}\setminus \mathcal{J}$:} As $z\in \mathcal{K}$, $k_z< \delta n$, which implies $2(\frac{n}{2}-k_z)k_z \ge 2(\frac{n}{2}-\delta n)k_z=k_z(1-2\delta)n$. As $z\notin \mathcal{J}$, assume $a_1\ge \delta m$ without loss of generality. Then $D_z=k_z\{a_1+(\lceil \gamma m\rceil-a_2)+b_1+(\lceil \gamma m \rceil-b_2)\}+(\frac{n}{2}-k_z)(a_1+a_2+b_1+b_2)\ge (\frac{n}{2}-\delta n)(\delta m)= (\frac{1}{2}-\delta)\delta nm$. This case is a simple version of Case 4, and one can show that $$\underset{z \in \mathcal{K}\setminus \mathcal{J}}{\Sigma}|\chi(z)| e^{(1+o(1))(-D_z I_r -2(\frac{n}{2}-k_z)k_z I_s)} \rightarrow 0 \text{ as } n\rightarrow \infty.$$

\item{Case 4. $\mathcal{J}\cap \mathcal{K}$:} As $z\in \mathcal{K}$, $k_z< \delta n$, which implies $2(\frac{n}{2}-k_z)k_z \ge 2(\frac{n}{2}-\delta n)k_z=k_z(1-2\delta)n$. As $z\in \mathcal{J}$, $D_z=k_z\big\{a_1+(\lceil \gamma m\rceil-a_2)+b_1+(\lceil \gamma m \rceil-b_2)\big\}+(\frac{n}{2}-k_z)(a_1+a_2+b_1+b_2)\ge k_z\big\{(\lceil \gamma m\rceil-\delta m)+(\lceil \gamma m \rceil-\delta m)\big\}+(\frac{n}{2}-\delta n)(a_1+a_2+b_1+b_2)\ge 2k_z(\gamma -\delta )m+(\frac{n}{2}-\delta n)(a_1+a_2+b_1+b_2)$. Then 
\begingroup
\allowdisplaybreaks
\begin{align*}
&\quad\underset{z \in \mathcal{J}\cap \mathcal{K}}{\Sigma}|\chi(z)| e^{(1+o(1))(-D_z I_r -2(\frac{n}{2}-k_z)k_z I_s)}\\
&\le \underset{z \in \mathcal{J}\cap \mathcal{K}}{\Sigma}|\chi(z)| e^{(1+o(1))\big[-\big\{2k_z(\gamma -\delta )m+(\frac{n}{2}-\delta n)(a_1+a_2+b_1+b_2)\big\} I_r -k_z(1-2\delta)n I_s\big]} \\
&= \underset{z \in \mathcal{J}\cap \mathcal{K}}{\Sigma}|\chi(z)| e^{(1+o(1))\big[-2k_z\big\{(\gamma -\delta )m I_r +(\frac{1}{2}-\delta)n I_s\big\}-(\frac{1}{2}-\delta )n(a_1+a_2+b_1+b_2) I_r\big]} \\
&\overset{(i)}{\le} \underset{z \in \mathcal{J}\cap \mathcal{K}}{\Sigma}|\chi(z)| e^{(1+o(1))\big[-2k_z\big\{(1+\frac{\epsilon}{2})\log n\big\}-(1+\frac{\epsilon}{2})\log m(a_1+a_2+b_1+b_2)\big]} \\
&= \underset{z \in \mathcal{J}\cap \mathcal{K}}{\Sigma}|\chi(z)| n^{-2k_z(1+\frac{\epsilon}{2})(1+o(1))}m^{-(1+\frac{\epsilon}{2})(a_1+a_2+b_1+b_2)(1+o(1))} \\
&\le \underset{z \in \mathcal{J}\cap \mathcal{K}}{\Sigma}|\chi(z)| n^{-2k_z(1+\frac{\epsilon}{4})}m^{-(1+\frac{\epsilon}{4})(a_1+a_2+b_1+b_2)} \\
&~~~~(\because (1+\frac{\epsilon}{2})(1+o(1))\ge (1+\frac{\epsilon}{4}) \text{ for sufficiently large } n)\\
&\overset{(ii)}{\le} \underset{z \in \mathcal{J}\cap \mathcal{K}}{\Sigma}n^{2k_z}m^{a_1+a_2+b_1+b_2}(d-1)^{a_1+a_2+b_1+b_2} n^{-2k_z(1+\frac{\epsilon}{4})}m^{-(1+\frac{\epsilon}{4})(a_1+a_2+b_1+b_2)} \\
&= \underset{z \in \mathcal{J}\cap \mathcal{K}}{\Sigma}n^{-\frac{\epsilon}{2}k_z} m^{-\frac{\epsilon}{4}(a_1+a_2+b_1+b_2)}(d-1)^{a_1+a_2+b_1+b_2} \\
&= \underset{z \in \mathcal{J}\cap \mathcal{K}}{\Sigma}n^{-\frac{\epsilon}{2}k_z} \{m^{-\frac{\epsilon}{4}}(d-1)\}^{(a_1+a_2+b_1+b_2)} \\
&\overset{(iii)}{\le} \underset{k_z\in [0,\delta n]}{\Sigma}n^{-\frac{\epsilon}{2}k_z}\big(\underset{a_1\in [0,\delta m]}{\Sigma}\{m^{-\frac{\epsilon}{4}}(d-1)\}^{a_1}\big)^4-1 \\
&\le \underset{k_z\in \mathbb{N}\cup \{0\}}{\Sigma}n^{-\frac{\epsilon}{2}k_z}\big(\underset{a_1\in \mathbb{N}\cup \{0\}}{\Sigma}\{m^{-\frac{\epsilon}{4}}(d-1)\}^{a_1}\big)^4-1 \\
&= \Big(\frac{1}{1-n^{-\frac{\epsilon}{2}}}\Big)\Big(\frac{1}{1-m^{-\frac{\epsilon}{4}}(d-1)}\Big)-1 \rightarrow 0 \text{ as } n\rightarrow \infty. 
\end{align*}
\endgroup
Here, three inequalities hold for the following reasons.

\textit{(i)} : It follows from the assumptions $\frac{1}{2}n I_s+\gamma m I_r\ge (1+\epsilon)\log n, \frac{1}{2}n I_r\ge (1+\epsilon)\log m$, and they imply $(\gamma -\delta )m I_r +(\frac{1}{2}-\delta)n I_s \ge (1+\frac{\epsilon}{2})\log n, (\frac{1}{2}-\delta )n I_r \ge (1+\frac{\epsilon}{2})\log m $ for sufficiently small $\delta$. In explicit, $\delta=\text{min}\big\{\frac{1}{2}\frac{\frac{\epsilon}{2}}{1+\epsilon}, \gamma\frac{\frac{\epsilon}{2}}{1+\epsilon}\big\}$ satisfies above inequalities.

\textit{(ii)} : For $z=(k_z,a_1,a_2,b_1,b_2)$, a direct calculation yields $|\chi(z)|=\binom{\frac{n}{2}}{k_z}^2\binom{m-\lceil \gamma m\rceil}{a_1} (d-1)^{a_1}\binom{m-\lceil \gamma m\rceil}{b_1}(d-1)^{b_1}\binom{\lceil \gamma m\rceil}{a_2}(d-1)^{a_2}\binom{\lceil \gamma m\rceil}{b_2}(d-1)^{b_2}$ and it can be upper bounded by $n^{2k_z}m^{a_1+a_2+b_1+b_2}(d-1)^{a_1+a_2+b_1+b_2}$

\textit{(iii)} : Note that $\underset{z \in \mathcal{J}\cap \mathcal{K}}{\Sigma}n^{-\frac{\epsilon}{2}k_z} \big\{m^{-\frac{\epsilon}{4}}(d-1)\big\}^{(a_1+a_2+b_1+b_2)} \le \big(\underset{k_z\in [0,\delta n]}{\Sigma}n^{-\frac{\epsilon}{2}k_z}\big)\big(\underset{a_1\in [0,\delta m]}{\Sigma}\{m^{-\frac{\epsilon}{4}}(d-1)\}^{a_1}\big)\big(\underset{a_2\in [0,\delta m]}{\Sigma}\{m^{-\frac{\epsilon}{4}}(d-1)\}^{a_2}\big)\big(\underset{b_1\in [0,\delta m]}{\Sigma}\{m^{-\frac{\epsilon}{4}}(d-1)\}^{b_1}\big)(\underset{b_2\in [0,\delta m]}{\Sigma}\{m^{-\frac{\epsilon}{4}}(d-1)\}^{b_2})-1$ and the last $-1$ comes from the fact that $(0,0,0,0,0)\notin \mathcal{J}\cap \mathcal{K}$. Then apply $\underset{a_1\in [0,\delta m]}{\Sigma}\{m^{-\frac{\epsilon}{4}}(d-1)\}^{a_1}=\underset{a_2\in [0,\delta m]}{\Sigma}\{m^{-\frac{\epsilon}{4}}(d-1)\}^{a_2}=\underset{b_1\in [0,\delta m]}{\Sigma}\{m^{-\frac{\epsilon}{4}}(d-1)\}^{b_1}=\underset{b_2\in [0,\delta m]}{\Sigma}\{m^{-\frac{\epsilon}{4}}(d-1)\}^{b_2}$.

\end{itemize}

\textbf{(II) MLE Converse}

\noindent\textbf{Overview of the proof}:  We show that if the observation rate is below a certain threshold, then the worst-case probability of error does not approach to $0$ as $n\rightarrow \infty$ for \emph{any} estimator. 
To begin with, Lemma~\ref{lem:4} shows that it suffices to prove the statement above for the constrained version of MLE. 
Then the rest of the proof is similar to the proof in~\cite{Ahn_2018}. Specifically, we consider a genie-aided MLE by providing the constrained MLE with additional information of a small set where the ground-truth latent preference matrix lies. 
We first make our analysis tractable by designing a proper set that reveals a just-about-right amount of information about the ground-truth latent preference matrix.
We then show that the error probability for the genie-aided MLE becomes strictly larger than $0$. 
Since the error probability of a genie-aided MLE is always lower than that of MLE, we can conclude the error probability of the constrained MLE does not approach to $0$ as $n\rightarrow \infty$. The following diagram visualizes the proof dependencies.

\begin{tcolorbox}
\textbf{Proof dependencies}:\\
(II) of Theorem 1 $\longleftarrow$ Lemma 4, \textbf{Case 1, 2}\\
\phantom ~~~~~~ \textbf{Case 1} $\longleftarrow$ Lemma 5\\
\phantom ~~~~~~ \textbf{Case 2} $\longleftarrow$ Lemma 6, 7
\end{tcolorbox}

We need to show the following : for arbitrary $\epsilon>0$, if $p\le \frac{1}{(d^{\min}_H)^2} \max\big\{\frac{(1-\epsilon) \log n - \frac{n}{2}I_s}{\gamma m},\frac{(1-\epsilon)2\log m}{n}\big\}$, then $P_e^{\gamma}(\psi)\nrightarrow 0$ as $n \rightarrow \infty$ for any $\psi$. To prove the statement for any $\psi$, we need to consider the constrained maximum likelihood estimator.

Suppose $d_H(p_i, p_j)$ achieves the minimum Hellinger distance when $p_i = p_{d_0}, p_j = p_{d_0 +1}$.
Let $\mathcal{D}^{\gamma}:=\big\{X\in \{p_{d_0},p_{d_0+1}\}^{[n]\times [m]}:\|u_X-v_X\|_0=\lceil \gamma m \rceil\big\}$. Consider the maximum likelihood estimator $\psi_{ML, \mathcal{D}^{\gamma}}$ whose output is constrained in $\mathcal{D}^{\gamma}$. Let $R'$ be a ground-truth latent preference matrix chosen in $\mathcal{D}^{\gamma}$ where $A_{R'}=[\frac{n}{2}], B_{R'}=[n]\setminus [\frac{n}{2}]$,  $(u_{R'})_j=(v_{R'})_j=p_{d_0}$ for $j=1,2,\dots, (m-\lceil \gamma m\rceil)$ and $(u_{R'})_j=p_{d_0}, (v_{R'})_j=p_{d_0+1}$ for $j=(m-\lceil \gamma m\rceil +1),\dots, m$.

\begin{lemma}\label{lem:4}
$\underset{\psi}{\text{inf }}P_e^{\gamma}(\psi)\ge \Pr(\psi_{ML, \mathcal{D}^{\gamma}}(N^\Omega, G)\neq R | R=R')$.
\end{lemma}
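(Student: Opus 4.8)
The plan is to turn the minimax quantity $\inf_{\psi}P_e^{\gamma}(\psi)$ into a Bayes risk over $\mathcal{D}^{\gamma}$ and then collapse that risk to the single instance $R=R'$ by symmetry. The chain has three steps. \emph{Step 1 (subset restriction):} every $X\in\mathcal{D}^{\gamma}$ is an admissible ground truth in Definition~\ref{def:1}, since it lies in $\{p_1,\dots,p_d\}^{n\times m}$ and satisfies $\|u_X-v_X\|_0=\lceil\gamma m\rceil$; hence $P_e^{\gamma}(\psi)\ge\max_{R\in\mathcal{D}^{\gamma}}\Pr(\psi(N^{\Omega},G)\neq R)$ for every $\psi$, and taking $\inf_{\psi}$ of both sides preserves this. \emph{Step 2 (max $\ge$ average $=$ Bayes risk):} for each $\psi$, $\max_{R\in\mathcal{D}^{\gamma}}\Pr(\psi\neq R)\ge\frac{1}{|\mathcal{D}^{\gamma}|}\sum_{R\in\mathcal{D}^{\gamma}}\Pr(\psi\neq R\mid R)$, so $\inf_{\psi}\max_{R\in\mathcal{D}^{\gamma}}\Pr(\psi\neq R)\ge\inf_{\psi}\frac{1}{|\mathcal{D}^{\gamma}|}\sum_{R\in\mathcal{D}^{\gamma}}\Pr(\psi\neq R\mid R)$. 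The right-hand infimum is the Bayes risk for $0$--$1$ loss under the uniform prior on $\mathcal{D}^{\gamma}$, and is attained by the MAP rule; with a uniform prior the MAP rule is exactly maximum likelihood with output constrained to $\mathcal{D}^{\gamma}$, i.e.\ $\psi_{ML,\mathcal{D}^{\gamma}}$ (any tie-breaking is Bayes-optimal, and since the prior charges only $\mathcal{D}^{\gamma}$ it is never helpful to output a matrix outside $\mathcal{D}^{\gamma}$, so this is optimal among \emph{all} estimators into $\{p_1,\dots,p_d\}^{n\times m}$). This yields $\inf_{\psi}P_e^{\gamma}(\psi)\ge\frac{1}{|\mathcal{D}^{\gamma}|}\sum_{R\in\mathcal{D}^{\gamma}}\Pr(\psi_{ML,\mathcal{D}^{\gamma}}\neq R\mid R)$.

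\emph{Step 3 (symmetry reduction to $R'$):} I would finish by showing that each summand equals $\Pr(\psi_{ML,\mathcal{D}^{\gamma}}\neq R'\mid R=R')$. The joint law of $(N^{\Omega},G)$ given $R$ is equivariant under any relabeling of the users, any relabeling of the items, and swapping the two clusters; and $\mathcal{D}^{\gamma}$ is invariant under all of these (they preserve being $\{p_{d_0},p_{d_0+1}\}$-valued, the two equal cluster sizes, and the Hamming distance $\lceil\gamma m\rceil$). Consequently the likelihood function is carried to itself under such a relabeling, $\psi_{ML,\mathcal{D}^{\gamma}}$ is equivariant, and the per-instance error $\Pr(\psi_{ML,\mathcal{D}^{\gamma}}\neq R\mid R)$ depends on $R$ only through its orbit; taking $R'$ as the canonical representative then collapses the average to $\Pr(\psi_{ML,\mathcal{D}^{\gamma}}\neq R'\mid R=R')$, which is the claim.

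The hard part will be Step 3. User/item relabelings (plus cluster swap) act transitively only \emph{within} an orbit of $\mathcal{D}^{\gamma}$ — they cannot exchange the two levels $p_{d_0}$ and $p_{d_0+1}$ — so, to get the statement verbatim, one must additionally check that the per-instance MLE error does not vary from orbit to orbit of $\mathcal{D}^{\gamma}$. The clean and sufficient fix is to run Steps 1--3 with $\mathcal{D}^{\gamma}$ replaced throughout by the orbit $\mathcal{O}$ of $R'$: then the prior is uniform on $\mathcal{O}$, the equivariance in Step 3 is honest transitivity, and one obtains $\inf_{\psi}P_e^{\gamma}(\psi)\ge\Pr(\psi_{ML,\mathcal{O}}\neq R'\mid R=R')$; since the genie-aided candidate set used in the remainder of the converse is contained in $\mathcal{O}$ (and a genie-aided MLE is only more powerful than $\psi_{ML,\mathcal{O}}$), this version is already enough to push the subsequent lower bound through. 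Upgrading to the literal statement with $\mathcal{D}^{\gamma}$ requires the extra homogeneity check across orbits, which is the only genuinely nontrivial point in the argument.
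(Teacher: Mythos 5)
Your proposal is correct in substance, but it is a genuinely different — and more careful — route than the one the paper takes, and the comparison is instructive. The paper's own proof is a three-line chain: it first lower-bounds the max in $P_e^{\gamma}(\psi)$ by the single term at $R=R'$, then argues that restricting the estimator's range to $\mathcal{D}^{\gamma}$ does not change $\inf_{\psi}\Pr(\psi(N^{\Omega},G)\neq R\mid R=R')$, and finally asserts that this infimum equals $\Pr(\psi_{ML,\mathcal{D}^{\gamma}}\neq R\mid R=R')$ ``because the maximum likelihood estimator is optimal under uniform prior.'' Read literally, that last step is vacuous: once the ground truth is conditioned to be the single matrix $R'$, the constant estimator $\psi\equiv R'$ achieves zero error, so the pointwise infimum at $R'$ is $0$ and MLE is certainly not the minimizer of it. The paper is implicitly appealing to exactly the argument you spell out — max over $\mathcal{D}^{\gamma}$ dominates the uniform average, the uniform-prior Bayes risk is minimized by the constrained MLE (MAP), and a symmetry argument collapses the average back to the single instance $R'$ — but it never writes down the averaging or the symmetry step. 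Your Steps 1--2 are exactly right, including the observation that an estimator gains nothing by outputting outside $\mathcal{D}^{\gamma}$. Your worry in Step 3 is also well founded: user/item permutations and the cluster swap cannot exchange $p_{d_0}$ with $p_{d_0+1}$, so they do not act transitively on $\mathcal{D}^{\gamma}$ (e.g.\ the matrix with $u_X\equiv p_{d_0+1}$ lies in a different orbit than $R'$), and homogeneity of the per-instance MLE error across orbits is an unproven extra claim. Your fix — run the whole argument with $\mathcal{D}^{\gamma}$ replaced by the orbit $\mathcal{O}$ of $R'$ — is sound, and it does suffice downstream: both genie-aided candidate sets $\mathcal{D}^{\gamma}\cap\chi(0,0,0,1,1)$ and $\mathcal{D}^{\gamma}\cap\chi(1,0,0,0,0)$ consist of matrices obtained from $R'$ by an item permutation or a user transposition, hence lie in $\mathcal{O}$, so $\Pr(S)\le\Pr(S^{(i)})$ still goes through with $S$ defined as the success event of $\psi_{ML,\mathcal{O}}$. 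In short, your version trades the paper's one-line (and, as written, gappy) justification for a fully rigorous minimax-to-Bayes reduction; the only cost is that the literal statement with $\mathcal{D}^{\gamma}$ requires the cross-orbit homogeneity check you identify, or the harmless weakening to $\mathcal{O}$.
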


\begin{proof}
\begin{align*}
\underset{\psi}{\text{inf }}P_e^{\gamma}(\psi) &= \underset{\psi}{\text{inf}} \text{ max}\big\{\Pr(\psi(N^\Omega, G)\neq R) : R\in \{p_1, p_2, \dots , p_d\}^{n\times m}, \|u_R-v_R\|_0=\lceil \gamma m \rceil \big\} \\
&\ge \underset{\psi}{\text{inf }} \Pr(\psi(N^\Omega, G)\neq R | R=R')  \\
&= \underset{\psi : \psi (N^{\Omega},G)\in \mathcal{D}^{\gamma}}{\text{inf }} \Pr(\psi(N^\Omega, G)\neq R | R=R')  \\
&~~~~(\because \text{ if there exist } N_0^{\Omega}, G_0 \text{ such that } \psi(N_0^{\Omega}, G_0) \notin \mathcal{D}^{\gamma}, \text{ we can decrease } \\
&\quad \quad~~ \Pr(\psi(N^{\Omega}, G)\neq R| R=R') \text{ by replacing } \psi(N_0^{\Omega}, G_0) \text{ with any element in } \mathcal{D}^{\gamma}.) \\
&= \Pr(\psi_{ML, \mathcal{D}^{\gamma}}(N^\Omega, G)\neq R | R=R').  \\
&~~~~(\because \text{maximum likelihood estimator is optimal under uniform prior.})
\end{align*}
\end{proof}

By Lemma~\ref{lem:4}, it suffices to show that $\Pr(\psi_{ML, \mathcal{D}^{\gamma}}(N^\Omega, G)\neq R | R=R') \nrightarrow 0$ as $n \rightarrow \infty$. Let $S:=\underset{X\neq R', X\in \mathcal{D}^{\gamma}}{\cap}\big[L(X)>L(R')\big]$ which is the success event of $\psi_{ML, \mathcal{D}^{\gamma}}$ where $R=R'$. Then it suffices to prove $\Pr (S)\rightarrow 0$. \big( $\because$ $\Pr (S)\rightarrow 0$ implies $\Pr (S^c)\rightarrow 1$, and $\Pr\big(\psi_{ML, \mathcal{D}^{\gamma}}(N^\Omega, G)\neq R | R=R'\big) \ge \frac{1}{2} \Pr (S^c)$. The last inequality comes from the fact that $\Pr\big(\psi_{ML, \mathcal{D}^{\gamma}}(N^\Omega, G)\neq R | R=R'\big)\ge \frac{1}{2}$ conditioned on $S^c$.\big)

Now we consider genie-aided ML estimators to prove $\Pr (S)\rightarrow 0$; $\psi_{ML}^{(1)}$ is given with the information that the ground-truth latent preference matrix belongs to $\mathcal{D}^{\gamma} \cap \chi(0,0,0,1,1)$, and $\psi_{ML}^{(2)}$ is given with the information that the ground-truth latent preference matrix belongs to $\mathcal{D}^{\gamma} \cap \chi(1,0,0,0,0)$. Let $S^{(i)}$ be the success event of $\psi_{ML}^{(i)}$ for $i=1,2$. Then $S^{(1)}=\underset{X\in \mathcal{D}^{\gamma}\cap \chi(0,0,0,1,1)}{\cap}\big[L(X)>L(R')\big],~ S^{(2)}=\underset{X\in \mathcal{D}^{\gamma}\cap \chi(1,0,0,0,0)}{\cap}\big[L(X)>L(R')\big]$, and it is straightforward that $\Pr(S) \le \Pr(S^{(1)})$ and $\Pr(S) \le \Pr(S^{(2)})$. Note that $p\le \frac{1}{(d^{\min}_H)^2} \max\big\{\frac{(1-\epsilon) \log n - \frac{n}{2}I_s}{\gamma m},\frac{(1-\epsilon)2\log m}{n}\big\} \Leftrightarrow \frac{1}{2}np(d^{\min}_H)^2\le (1-\epsilon) \log m$ or $\frac{1}{2}nI_s+\gamma m p(d^{\min}_H)^2\le (1-\epsilon) \log n$. Hence it is enough to show the following: (i) if $\frac{1}{2}np(d^{\min}_H)^2\le (1-\epsilon) \log m$, then $\Pr(S^{(1)}) \rightarrow 0$ as $n \rightarrow \infty$, and (ii) if $\frac{1}{2}nI_s+\gamma m p(d^{\min}_H)^2\le (1-\epsilon) \log n$, then $\Pr(S^{(2)})\rightarrow 0$ as $n \rightarrow \infty$.

\textbf{Case 1.} $\frac{1}{2}np(d^{\min}_H)^2\le (1-\epsilon) \log m$:

We first need to observe the following fact. Consider $X_1\in\chi(0,0,0,1,0)\cap \mathcal{D}^{\gamma}, X_2\in\chi(0,0,0,0,1)\cap \mathcal{D}^{\gamma}, X_3\in\chi(0,0,0,1,1)\cap \mathcal{D}^{\gamma}$ where $v_{X_1}$ differs from $v_{R'}$ at $i_1$-th coordinate, $v_{X_2}$ differs from $v_{R'}$ at $i_2$-th coordinate, $v_{X_3}$ differs from $v_{R'}$ at $i_1$ and $i_2$-th coordinates. Then $\big[L(X_1)\le L(R') \text{ and } L(X_2)\le L(R')\big]$ implies $\big[L(X_3)\le L(R')\big]$ since $L(X_3)-L(R')=(L(X_2)-L(R'))+(L(X_1)-L(R'))$ by Lemma 1. Hence $\big[L(X_3)> L(R')\big]$ implies $\big[L(X_1)> L(R') \text{ or } L(X_2)> L(R')\big]$. From this observation, we can show that 
$$\underset{X\in\chi(0,0,0,1,1)\cap \mathcal{D}^{\gamma}}{\cap}\big[L(X)>L(R')\big] \subset \big\{\underset{X_1\in\chi(0,0,0,1,0)\cap \mathcal{D}^{\gamma}}{\cap}\big[L(X_1)>L(R')\big]\big\} \cup \big\{\underset{X_2\in\chi(0,0,0,0,1)\cap \mathcal{D}^{\gamma}}{\cap}\big[L(X_2)>L(R')\big]\big\}$$
\big($\because$ Suppose $L(X)>L(R')$ for all $X\in \chi(0,0,0,1,1)\cap \mathcal{D}^{\gamma}$. If $L(X_1)>L(R')$ for all $X_1\in \chi(0,0,0,1,0)\cap \mathcal{D}^{\gamma}$, we are done. If not, there exists $X_1\in \chi(0,0,0,1,0)\cap \mathcal{D}^{\gamma}$ such that $L(X_1)\le L(R')$. Let $v_{X_1}$ differs from $v_{R'}$ at $i_1$-th coordinate. Consider $X_{3,j}\in \chi(0,0,0,1,1)\cap \mathcal{D}^{\gamma}$ where $v_{X_{3,j}}$ differs from $v_{R'}$ at $i_1$ and $(m-\lceil \gamma m \rceil +j)$-th coordinates, and $X_j\in \chi(0,0,0,0,1)\cap \mathcal{D}^{\gamma}$ where $v_{X_j}$ differs from $v_{R'}$ at $(m-\lceil \gamma m \rceil +j)$-th coordinate ($j=1,\dots, \lceil \gamma m \rceil$). Using the observation above and the fact that $L(X_{3,j})>L(R')$ (by the assumption) together, we can conclude that $L(X_j)>L(R')$ for all $j=1,\dots, \lceil \gamma m \rceil$\big)\\
Applying the union bound, we get
$\Pr\big(\underset{X\in\chi(0,0,0,1,1)\cap \mathcal{D}^{\gamma}}{\cap}\big[L(X)>L(R')\big]\big) \le \Pr\big(\underset{X_1\in\chi(0,0,0,1,0)\cap \mathcal{D}^{\gamma}}{\cap}\big[L(X_1)>L(R')\big]\big) + \Pr\big(\underset{X_2\in\chi(0,0,0,0,1)\cap \mathcal{D}^{\gamma}}{\cap}\big[L(X_2)>L(R')\big]\big)$. Now it suffices to show that $\Pr\big(\underset{X_1\in\chi(0,0,0,1,0)\cap \mathcal{D}^{\gamma}}{\cap}\big[L(X_1)>L(R')\big]\big)\rightarrow 0$ as $n\rightarrow \infty$. \big(Identical argument can be applied to $\Pr\big(\underset{X_2\in\chi(0,0,0,0,1)\cap \mathcal{D}^{\gamma}}{\cap}\big[L(X_2)>L(R')\big]\big)\rightarrow 0$ as $n \rightarrow \infty$.\big)

\begin{lemma}\label{lem:5}
For integers $K,L>0$, Let $\{A_i\}_{i=1}^{K} \overset{i.i.d.}{\sim} \text{Bern}(\alpha), \{B_i\}_{i=1}^{K} \overset{i.i.d.}{\sim} \text{Bern}(\beta), \{\mathbb{P}_i\}_{i=1}^{L} \overset{i.i.d.}{\sim} \text{Bern}(p), \{\mathbb{P}_{d_0,i}\}_{i=1}^{L} \overset{i.i.d.}{\sim} \text{Bern}(p_{d_0})$. Assume that $\alpha, \beta, p=o(1)$ and $\max\{\sqrt{\alpha\beta}K,pL\}=\omega(1)$. Then the following holds for sufficiently large K if $\sqrt{\alpha\beta}K>pL$; sufficiently large L otherwise:\\
$\Pr\Big(\log \big(\frac{\alpha(1-\beta)}{(1-\alpha)\beta}\big) \underset{1\le i \le K}{\sum}(B_i-A_i)+\underset{1\le i \le L}{\sum}\mathbb{P}_i\big\{\mathbb{P}_{d_0,i} \log \big(\frac{p_{d_0+1}}{p_{d_0}}\big)+ (1-\mathbb{P}_{d_0,i}) \log \big(\frac{1-p_{d_0+1}}{1-p_{d_0}}\big)\big\}\ge 0\Big) \ge \frac{1}{4}e^{-(1+o(1))(KI_s+LI_r)}$\\
where $I_s:=-2\log \big(1- d^2_H(\alpha, \beta)\big), I_r=p(d^{\min}_H)^2$.
\end{lemma}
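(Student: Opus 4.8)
The plan is a ``reverse Chernoff'' estimate that matches the bound Lemma~\ref{lem:3} produced with the exponential tilt $\lambda=\tfrac12$. Write the quantity inside the probability as $Z=Z_g+Z_r$, where $Z_g:=w\sum_{i=1}^{K}(B_i-A_i)$ with $w:=\log\frac{\alpha(1-\beta)}{(1-\alpha)\beta}>0$ (recall $\alpha\ge\beta$), and $Z_r:=\sum_{i=1}^{L}\mathbb{P}_i\bigl\{\mathbb{P}_{d_0,i}\log\frac{p_{d_0+1}}{p_{d_0}}+(1-\mathbb{P}_{d_0,i})\log\frac{1-p_{d_0+1}}{1-p_{d_0}}\bigr\}$. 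These are independent, and each has strictly negative mean, so $\{Z\ge 0\}$ is a large-deviation event. Exactly as in Lemma~\ref{lem:3}, $\E[e^{Z/2}]=\E[e^{Z_g/2}]\,\E[e^{Z_r/2}]=(1-d_H^2(\alpha,\beta))^{2K}\,(1-p(d^{\min}_H)^2)^{L}=e^{-(1+o(1))(KI_s+LI_r)}$, using $p=o(1)$ for the second factor; so the assertion is that this Chernoff upper bound is tight up to the constant $\tfrac14$.

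By independence, $\Pr(Z\ge 0)\ge \Pr(Z_g\ge 0)\,\Pr(Z_r\ge 0)$, so it suffices to lower bound each factor by $\tfrac12 e^{-(1+o(1))KI_s}$ and $\tfrac12 e^{-(1+o(1))LI_r}$. For the graph factor, $\Pr(Z_g\ge 0)=\Pr\bigl(\mathrm{Bin}(K,\beta)\ge \mathrm{Bin}(K,\alpha)\bigr)$ for independent binomials; I would lower bound this by the probability that both land at a single intermediate value $t^\star\approx K\sqrt{\alpha\beta}/(\sqrt{\alpha\beta}+\sqrt{(1-\alpha)(1-\beta)})$, the crossover point of the two densities, i.e.\ by $\Pr(\mathrm{Bin}(K,\beta)=t^\star)\,\Pr(\mathrm{Bin}(K,\alpha)=t^\star)$, and estimate the two binomial probabilities with Stirling's formula. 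The exponent that emerges is the minimum over $c\in(\beta,\alpha)$ of the sum of binary KL rates $\mathrm{kl}(c\Vert\alpha)+\mathrm{kl}(c\Vert\beta)$, and a standard identity evaluates this minimum to $-2\log(\sqrt{\alpha\beta}+\sqrt{(1-\alpha)(1-\beta)})=-2\log(1-d_H^2(\alpha,\beta))=I_s$ per unit $K$. For the rating factor I would condition on $N:=|\{i:\mathbb{P}_i=1\}|\sim\mathrm{Bin}(L,p)$, the number of observed coordinates: given $N$, the event $\{Z_r\ge 0\}$ is exactly the event that the number of ``$+1$'' ratings among those $N$ is at least $N\tau$, where the threshold $\tau$ is the explicit constant determined by $\log\frac{p_{d_0+1}}{p_{d_0}}$ and $\log\frac{1-p_{d_0+1}}{1-p_{d_0}}$; this is again a binomial tail, whose Stirling estimate has exponent $(1+o(1))(d^{\min}_H)^2$ per unit $N$, and averaging over $N$ with the binomial moment generating function gives $(1-p(d^{\min}_H)^2)^{(1+o(1))L}=e^{-(1+o(1))LI_r}$. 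The two-regime hypothesis --- $\sqrt{\alpha\beta}K=\omega(1)$ in the first case, $pL=\omega(1)$ in the second --- is precisely what forces the dominant component to have $\omega(1)$ ``flips'' ($t^\star\to\infty$, resp.\ $N\to\infty$ with high probability), so that the polynomial Stirling prefactors are $e^{o(KI_s+LI_r)}$ and are absorbed into the $(1+o(1))$.

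The step I expect to require the most care is making the exponent of each Stirling estimate coincide with the Chernoff-at-$\tfrac12$ value ($KI_s$ for the graph part, $LI_r$ for the rating part): the combinatorially natural large-deviation target is the tilted (Hellinger-midpoint) empirical distribution, and one must verify both that forcing the empirical counts to this distribution does land the relevant sum on the side $\{\,\cdot\ge 0\,\}$, and that doing so costs rate only $(1+o(1))(\sqrt{\alpha}-\sqrt{\beta})^2$ per graph term and $(1+o(1))(d^{\min}_H)^2$ per observed rating --- equivalently, that $\lambda=\tfrac12$ is (asymptotically) the right exponential tilt for each component. Once this is settled, the remaining work --- carrying the sub-exponential Stirling prefactors, averaging the conditional estimate over $N$, and collecting one factor $\tfrac12$ from each of the two one-sided bounds to form the stated $\tfrac14$ --- is routine.
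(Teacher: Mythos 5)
Your overall strategy --- factor $\Pr(Z\ge 0)\ge\Pr(Z_g\ge0)\,\Pr(Z_r\ge0)$ and prove a reverse Chernoff bound for each factor by pinning the relevant binomial at a tilted value and applying Stirling --- is exactly the argument the paper defers to (its entire proof of this lemma is ``apply the argument of Lemma~4 in Ahn et al.''), and for the graph factor your plan does go through: because a competing partition flips equally many pairs from within-cluster to cross-cluster as vice versa, one can check that $\min_\lambda \E\bigl[e^{\lambda w(B_i-A_i)}\bigr]=\bigl(\sqrt{\alpha\beta}+\sqrt{(1-\alpha)(1-\beta)}\bigr)^2$ is attained at $\lambda=\tfrac12$, so the Hellinger exponent $I_s$ per pair is genuinely optimal there. (Two caveats: absorbing the Stirling prefactors needs more than $\max\{\sqrt{\alpha\beta}K,pL\}=\omega(1)$, since $K\sqrt{\alpha\beta}\to\infty$ is compatible with $KI_s=O(1)$ when $\alpha/\beta\to1$, in which case a $1/(K\sqrt{\alpha\beta})$ prefactor is not $e^{o(KI_s+LI_r)}$; a summation over a window of width $\Theta(\sqrt{K\sqrt{\alpha\beta}})$ rather than a single point is needed in that regime. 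This is a repairable technicality.)

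The step you flag as ``requiring the most care'' is, for the rating factor, not a verification but a genuine obstruction. Conditioned on $N$ observations, $\{Z_r\ge0\}$ is $\{S\ge N\tau\}$ with $S\sim\mathrm{Bin}(N,p_{d_0})$ and $\tau=\log\frac{1-p_{d_0}}{1-p_{d_0+1}}\big/\log\frac{p_{d_0+1}(1-p_{d_0})}{p_{d_0}(1-p_{d_0+1})}$, while the Hellinger-midpoint frequency is $\tau'=\sqrt{p_{d_0}p_{d_0+1}}/(\sqrt{p_{d_0}p_{d_0+1}}+\sqrt{(1-p_{d_0})(1-p_{d_0+1})})$. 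Unless $p_{d_0+1}=1-p_{d_0}$ these differ --- e.g.\ $p_{d_0}=0.1$, $p_{d_0+1}=0.5$ gives $\tau'=0.25<\tau\approx0.2675$ --- so pinning the empirical count at $\tau'$ does not even land in the event. The cheapest point inside the event is $\tau$ itself, costing $\mathrm{kl}(\tau\|p_{d_0})$ per observed coordinate, which equals the Chernoff information $C=-\log\min_\lambda\bigl(p_{d_0}^{1-\lambda}p_{d_0+1}^{\lambda}+(1-p_{d_0})^{1-\lambda}(1-p_{d_0+1})^{\lambda}\bigr)$; one has $C\ge-\log\bigl(1-(d^{\min}_H)^2\bigr)>(d^{\min}_H)^2$, with the first inequality strict whenever the optimal tilt is not $\tfrac12$. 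Averaging over $N$, your method therefore yields exponent $(1+o(1))Lp(1-e^{-C})$, which strictly exceeds $(1+o(1))LI_r$ for asymmetric levels, so the claimed $\tfrac14 e^{-(1+o(1))LI_r}$ cannot be reached this way once $pL\to\infty$ --- and since the Chernoff bound is asymptotically tight, the inequality as stated appears to fail in that regime. This is not a defect of your write-up alone: $\lambda=\tfrac12$ is the optimal tilt precisely in the symmetric case $p_2=1-p_1$ treated by Ahn et al., and the paper's ``apply the same argument'' inherits the same difficulty.
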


\begin{proof}
Can be proved similarly by applying the argument of Lemma 4 in~\cite{Ahn_2018}.
\end{proof}

Then 
\begingroup
\allowdisplaybreaks
\begin{align*}
&\Pr\big(\underset{X_1\in\chi(0,0,0,1,0)\cap \mathcal{D}^{\gamma}}{\cap}\big[L(X_1)>L(R')\big]\big)\\
&=\underset{X_1\in\chi(0,0,0,1,0)\cap \mathcal{D}^{\gamma}}{\Pi}\Pr\big(\big[L(X_1)>L(R')\big]\big) ~~\big(\because \big\{\big[L(X_1)>L(R')\big]\big\}_{X_1\in\chi(0,0,0,1,0)\cap \mathcal{D}^{\gamma}} \text{ is mutually independent}.\big)\\
&=\underset{X_1\in\chi(0,0,0,1,0)\cap \mathcal{D}^{\gamma}}{\Pi}\big(1-\Pr\big(\big[L(X_1)>L(R')\big]\big)\big)\\
&=\underset{X_1\in\chi(0,0,0,1,0)\cap \mathcal{D}^{\gamma}}{\Pi}\Big(1-\Pr\Big(\underset{1\le i\le \frac{n}{2}}{\sum}\mathbb{P}_i\big\{\mathbb{P}_{d_0,i} \log \big(\frac{p_{d_0+1}}{p_{d_0}}\big)+ (1-\mathbb{P}_{d_0,i}) \log \big(\frac{1-p_{d_0+1}}{1-p_{d_0}}\big)\big\}\ge 0\Big)\Big) ~~\big(\because \text{Lemma~\ref{lem:2}}\big)\\
&\le \underset{X_1\in\chi(0,0,0,1,0)\cap \mathcal{D}^{\gamma}}{\Pi}\Big(1-\frac{1}{4}e^{-(1+o(1))(\frac{n}{2} I_r)}\Big)\quad \big(\because \text{apply Lemma~\ref{lem:5} with } K=0, L=\frac{n}{2} \big)\\
&\le \underset{X_1\in\chi(0,0,0,1,0)\cap \mathcal{D}^{\gamma}}{\Pi}\Big(e^{-\frac{1}{4}e^{-(1+o(1))(\frac{n}{2} I_r)}}\Big) \quad \big(\because 1-x\le e^{-x}\big)\\
&= \exp\Big\{-\frac{1}{4}|\chi(0,0,0,1,0)\cap \mathcal{D}^{\gamma}|e^{-(1+o(1))(\frac{n}{2} I_r)}\Big\} \\
&= \exp\Big\{-\frac{1}{4}(1-\gamma)m e^{-(1+o(1))(\frac{n}{2} I_r)}\Big\} \\
&= \exp\Big\{-\frac{1}{4}(1-\gamma) e^{-(1+o(1))(\frac{n}{2} I_r)+\log m}\Big\} \rightarrow 0 \text{ as }n\rightarrow \infty \qquad \big(\because \frac{1}{2}nI_r\le (1-\epsilon) \log m \big)
\end{align*}
\endgroup

\textbf{Case 2.} $\frac{1}{2}nI_s+\gamma m I_r=\frac{1}{2}nI_s+\gamma m p(d^{\min}_H)^2\le (1-\epsilon) \log n$:
    
From the assumption, $\alpha,\beta=O(\frac{\log n}{n})$.

\begin{lemma}\label{lem:6}
 Suppose $\alpha=O(\frac{\log n}{n})$, and consider the following procedure:\\
1) For $r=\frac{n}{\log^3 n}$, let $T:=\big\{1,2,\dots, 2r\big\}\cup\big\{\frac{n}{2}+1,\frac{n}{2}+2,\dots, \frac{n}{2}+2r\big\}$.\\
2) Within T, we will delete every pair of two nodes which are adjacent.\\
3) Denote the remaining nodes by U.\\
Then the above procedure results in $|U|\ge 3\frac{n}{\log^3 n}$, with probability approaching to 1.
\end{lemma}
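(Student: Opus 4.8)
The plan is to reduce the claim to an elementary first-moment estimate: bound deterministically the number of nodes deleted by the procedure in terms of the number of edges that $G$ places inside $T$, and then show this edge count is $o(r)$ with high probability via Markov's inequality.

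First I would record the deterministic bookkeeping. Writing $e(T,T)$ for the number of edges of $G$ with both endpoints in $T$, observe that each such edge accounts for the deletion of at most two nodes (its two endpoints), so no matter how the deletion procedure is carried out the surviving set satisfies
\[
|U| \ \ge\ |T| - 2\,e(T,T) \ =\ 4r - 2\,e(T,T),
\]
using $|T| = 4r$ (and, taking $r = \lceil n/\log^3 n\rceil$, the set $T$ is well-defined for large $n$ since $2r \le n/2$). Hence it suffices to prove that $e(T,T) \le r/2$ with probability tending to $1$, which would give $|U| \ge 4r - r = 3r \ge 3n/\log^3 n$.

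Next I would estimate $\mathbb{E}[e(T,T)]$. The set $T$ consists of $2r$ vertices from the first ground-truth cluster $[n/2]$ and $2r$ vertices from the second cluster $[n]\setminus[n/2]$; under the symmetric SBM a within-cluster pair is an edge with probability $\alpha$ and a cross-cluster pair with probability $\beta \le \alpha$, so every one of the $\binom{4r}{2}$ potential edges inside $T$ is present with probability at most $\alpha$. Therefore $\mathbb{E}[e(T,T)] \le \binom{4r}{2}\alpha \le 8r^2\alpha$, and substituting $\alpha = O(\log n/n)$ and $r = \Theta(n/\log^3 n)$ gives
\[
\mathbb{E}[e(T,T)] \ =\ O\!\left(\frac{n^2}{\log^6 n}\cdot \frac{\log n}{n}\right) \ =\ O\!\left(\frac{n}{\log^5 n}\right) \ =\ o(r).
\]

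Finally, Markov's inequality yields $\Pr\!\left(e(T,T) \ge r/2\right) \le 2\,\mathbb{E}[e(T,T)]/r = O(1/\log^2 n) \to 0$, so with probability approaching $1$ we have $e(T,T) < r/2$ and hence $|U| \ge 3n/\log^3 n$. The argument is essentially routine; the only points needing a little care are the interpretation-independent deterministic inequality $|U| \ge |T| - 2\,e(T,T)$ in the first step and the harmless rounding of $r$ to an integer, and no concentration beyond Markov is required because the expected edge count is already a $\Theta(1/\log^2 n)$ fraction of the budget $r/2$.
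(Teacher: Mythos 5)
Your proof is correct. The paper itself does not reproduce an argument here --- it simply cites Lemma~5 of Ahn et al.\ (2018) --- so your self-contained derivation is a welcome substitute, and it follows the natural route one would expect that reference to take: the deterministic bound $|U|\ge |T|-2e(T,T)$ holds under any reading of the pairwise-deletion step (the deleted vertices are always contained in the union of endpoints of edges inside $T$), the first-moment bound $\E[e(T,T)]\le\binom{4r}{2}\alpha=O(n/\log^5 n)=o(r)$ uses only $\beta\le\alpha$ and $\alpha=O(\log n/n)$, and Markov's inequality then gives $e(T,T)<r/2$ with probability $1-O(1/\log^2 n)$, hence $|U|\ge 3r$. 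No concentration beyond Markov is needed, as you note, and the rounding of $r$ and the requirement $2r\le n/2$ are handled. Nothing is missing.
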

\begin{proof}
Lemma 5 in~\cite{Ahn_2018}. 
\end{proof}

Let $\Delta$ be the event \big[$|U| \ge 3\frac{n}{\log^3 n}$\big]. Let $\Psi$ be the event \big[there extist subsets $A_P\subset A_R$ and $A_Q\subset B_R$ such that (i) $|A_P|=|A_Q|=\frac{n}{\log^3 n}$ and (ii) there is no edge between nodes in $A_P\cup A_Q$\big]. One can show that $\Delta \subset \Psi$. As $Pr(\Delta)=1-o(1)$ by Lemma~\ref{lem:6}, $Pr(\Psi)=1-o(1)$. 

Let $X^{(i)}$ be the latent preference matrix obtained from X by replacing $i$-th row with $v_{X}$ if $i\in A_X$; with $u_{X}$ otherwise. In explicit, $A_{X^{(i)}}=A_X\bigtriangleup \{i\}$ and $B_{X^{(i)}}=B_X\bigtriangleup \{i\}$

\begin{lemma}\label{lem:7}
 Suppose that $L(R^{(i)})\le L(R)$ and $L(R^{(j)})\le L(R)$ hold for $i\in A_P$ and $j\in A_Q$. Then, conditioned on $\Psi$, $L\big((R^{(i)})^{(j)}\big)\le L(R)$.
\end{lemma}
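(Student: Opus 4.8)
The plan is to reduce the statement to the row/edge additivity of the negative log-likelihood under a single cluster move, controlling the only genuinely coupling quantity — the edges touching \emph{both} moved nodes — via $\Psi$. First I would extend the likelihood formula of Lemma~\ref{lem:1} to an \emph{arbitrary} two-cluster configuration, since after a single node move $R^{(i)}$ and $R^{(j)}$ are no longer balanced (here $R$ is the balanced ground truth, $|A_R|=|B_R|=\tfrac n2$). Writing $(A_X,B_X)$ for the clusters of a candidate $X$ of arbitrary sizes and using $\binom{|A_X|}{2}+\binom{|B_X|}{2}+|A_X||B_X|=\binom{n}{2}$, a direct computation of $-\log\Pr(N^\Omega,G\mid X)$ gives
\[
L(X)=\rho(X)+C_e\, e(A_X,B_X)+C_g\,|A_X|\,|B_X|+c_0,
\]
where $\rho(X):=\sum_{t=1}^d\sum_{(i,j)\in\Omega_t}\{\mathbb{1}(N^\Omega_{ij}=1)(-\log p_t)+\mathbb{1}(N^\Omega_{ij}=-1)(-\log(1-p_t))\}$ is the rating part, $C_e:=\log\frac{\alpha(1-\beta)}{(1-\alpha)\beta}$, $C_g:=\log\frac{1-\alpha}{1-\beta}$, and $c_0$ depends only on $N^\Omega,G,\alpha,\beta,p,n,m$. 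Crucially $\alpha\ge\beta$ forces $C_g\le 0$ (and $C_e\ge0$).

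Next I would record the three elementary ``delta'' identities for the two single-node moves $i\in A_P\subset A_R$ and $j\in A_Q\subset B_R$. For the rating part, $R^{(i)}$ differs from $R$ only in row $i$ and $(R^{(i)})^{(j)}$ only in rows $i$ and $j$ (with $i\neq j$), so additivity over entries gives $\rho((R^{(i)})^{(j)})-\rho(R)=\big(\rho(R^{(i)})-\rho(R)\big)+\big(\rho(R^{(j)})-\rho(R)\big)$. For the cross edges, $e(A_{R^{(i)}},B_{R^{(i)}})-e(A_R,B_R)=e(\{i\},A_R\setminus\{i\})-e(\{i\},B_R)=:\delta_i$ and similarly $\delta_j$, while a short direct count yields $e(A_{(R^{(i)})^{(j)}},B_{(R^{(i)})^{(j)}})-e(A_R,B_R)=\delta_i+\delta_j+2\,\mathbb{1}(\{i,j\}\in E)$. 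This last term is exactly where $\Psi$ enters: conditioned on $\Psi$, both $i$ and $j$ lie in $A_P\cup A_Q$, which $\Psi$ declares edge-free, so $\mathbb{1}(\{i,j\}\in E)=0$ and the cross-edge count is additive as well. Finally, for the size product, $|A_R||B_R|=(\tfrac n2)^2=|A_{(R^{(i)})^{(j)}}||B_{(R^{(i)})^{(j)}}|$ (each cluster lost one node and gained one), whereas $|A_{R^{(i)}}||B_{R^{(i)}}|=|A_{R^{(j)}}||B_{R^{(j)}}|=(\tfrac n2)^2-1$.

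Then I would assemble. From the display and the identities, $L(R^{(i)})-L(R)=\big(\rho(R^{(i)})-\rho(R)\big)+C_e\delta_i-C_g$; since $L(R^{(i)})\le L(R)$ by hypothesis, this gives $\big(\rho(R^{(i)})-\rho(R)\big)+C_e\delta_i\le C_g$, and symmetrically $\big(\rho(R^{(j)})-\rho(R)\big)+C_e\delta_j\le C_g$. On the other hand, conditioned on $\Psi$, the size-product terms of $R$ and $(R^{(i)})^{(j)}$ coincide, so $L((R^{(i)})^{(j)})-L(R)=\big[\big(\rho(R^{(i)})-\rho(R)\big)+C_e\delta_i\big]+\big[\big(\rho(R^{(j)})-\rho(R)\big)+C_e\delta_j\big]\le 2C_g\le 0$, which is the claim.

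The main obstacle is the cluster-size bookkeeping rather than anything probabilistic: a single-node move unbalances the partition, so Lemma~\ref{lem:1} cannot be invoked verbatim, and one must carry the $|A_X||B_X|$-dependent term; what makes the argument close is that the resulting discrepancy is exactly $2C_g$ with $C_g\le 0$ (a consequence of $\alpha\ge\beta$), so it only helps the inequality. The secondary point requiring care is the combinatorial identity $e(A_{(R^{(i)})^{(j)}},B_{(R^{(i)})^{(j)}})-e(A_R,B_R)=\delta_i+\delta_j+2\,\mathbb{1}(\{i,j\}\in E)$, after which $\Psi$ does the remaining work.
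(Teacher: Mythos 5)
Your proof is correct. The paper itself does not prove this lemma --- it simply cites Lemma~6 of Ahn et al.\ (2018) --- so your argument is a self-contained version of what is only referenced there, and it follows the same underlying idea: the log-likelihood difference is additive over the two single-node moves except for (a) the potential edge $\{i,j\}$, which contributes $+2C_e\,\mathbb{1}(\{i,j\}\in E)$ and is killed by conditioning on $\Psi$ (since $i\in A_P$, $j\in A_Q$, and $\Psi$ makes $A_P\cup A_Q$ edge-free), and (b) the cluster-size imbalance, which contributes $2C_g\le 0$ and therefore only helps. Your explicit handling of point (b) is a genuine improvement in rigor over invoking Lemma~\ref{lem:1} verbatim: that lemma's constant $c$ is written for balanced partitions, and a single node move produces sizes $(\tfrac n2-1,\tfrac n2+1)$, so the $C_g|A_X||B_X|$ term must be carried as you do. I verified the edge-count identity $e(A_{(R^{(i)})^{(j)}},B_{(R^{(i)})^{(j)}})-e(A_R,B_R)=\delta_i+\delta_j+2\,\mathbb{1}(\{i,j\}\in E)$ and the final assembly $L((R^{(i)})^{(j)})-L(R)=\bigl(L(R^{(i)})-L(R)\bigr)+\bigl(L(R^{(j)})-L(R)\bigr)+2C_g+2C_e\,\mathbb{1}(\{i,j\}\in E)\le 2C_g\le 0$; both are correct.
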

\begin{proof}
Lemma 6 in~\cite{Ahn_2018}. 
\end{proof}

Now we can find the upper bound of $\Pr\big(S^{(2)}\big)=\Pr\big(\underset{X\in \mathcal{D}^{\gamma}\cap \chi(1,0,0,0,0)}{\cap}\big[L(X)>L(R')\big]\big)$ as follows. 
\begingroup
\allowdisplaybreaks
\begin{align*}
&\Pr\big(S^{(2)}\big)=\Pr\big(\underset{X\in \mathcal{D}^{\gamma}\cap \chi(1,0,0,0,0)}{\cap}\big[L(X)>L(R')\big]\big) \\
&=\Pr\Big(\big\{\big(\underset{X\in \mathcal{D}^{\gamma}\cap \chi(1,0,0,0,0)}{\cap}\big[L(X)>L(R')\big]\big)\cap \Psi \big\}\cup \big\{\big(\underset{X\in \mathcal{D}^{\gamma}\cap \chi(1,0,0,0,0)}{\cap}\big[L(X)>L(R')\big]\big)\cap \Psi^c \big\}\Big) \\
&=\Pr\Big(\big(\underset{X\in \mathcal{D}^{\gamma}\cap \chi(1,0,0,0,0)}{\cap}\big[L(X)>L(R')\big]\big)\cap \Psi \Big)+ \Pr\Big(\big(\underset{X\in \mathcal{D}^{\gamma}\cap \chi(1,0,0,0,0)}{\cap}\big[L(X)>L(R')\big]\big)\cap \Psi^c \Big) \\
&\le\Pr\Big(\big(\underset{X\in \mathcal{D}^{\gamma}\cap \chi(1,0,0,0,0)}{\cap}\big[L(X)>L(R')\big]\big)\cap \Psi \Big)+\Pr\big(\Psi^c\big) \\
&=\Pr\Big(\big(\underset{X\in \mathcal{D}^{\gamma}\cap \chi(1,0,0,0,0)}{\cap}\big[L(X)>L(R')\big]\big)\cap \Psi \Big)+\big(1-(1-o(1))\big) \\
&=\Pr\Big(\underset{X\in \mathcal{D}^{\gamma}\cap \chi(1,0,0,0,0)}{\cap}\big[L(X)>L(R')\big]\big| \Psi \Big)\cdot \Pr\big(\Psi\big)+o(1) \\
&\le\Pr\Big(\underset{i\in A_P, j\in A_Q}{\cap}\big[L\big((R^{(i)})^{(j)}\big)>L(R')\big]\big|\Psi\Big)\cdot \big(1-o(1)\big)+o(1) \\
&\le\Pr\Big(\underset{i\in A_P}{\cap}\big[L(R^{(i)})>L(R')\big]\big|\Psi\Big)\cdot \big(1-o(1)\big)+\Pr\Big(\underset{j\in A_Q}{\cap}\big[L(R^{(j)})>L(R')\big]\big|\Psi\Big)\cdot \big(1-o(1)\big)+o(1) \\
&~~~~\big(\because \text{By Lemma~\ref{lem:7}, } L\big((R^{(i)})^{(j)}\big)>L(R') \text{ implies either one of the following}  \\
&~~\quad\text{(i) } L(R^{(i)})>L(R') \text{ for any } i\in A_P, \text{ or } \text{(ii) }L(R^{(j)})>L(R') \text{ for any } j\in A_Q \big) \\
&=2\Pr\Big(\underset{i\in A_P}{\cap}\big[L(R^{(i)})>L(R')\big]\big|\Psi\Big)\cdot \big(1-o(1)\big)+o(1) ~~~~ \big(\because \textsl{by symmetry}\big)\\
&=2\Pr\Big(\big[L(R^{(1)})>L(R')\big]\big|\Psi\Big)^{|A_P|}\cdot \big(1-o(1)\big)+o(1)\\
&~~~~\big(\because\text{WLOG, $1 \in A_P$.}\{[L(R^{(i)})>L(R')]\}_{i\in A_P} \text{ is mutually independent since there is no edge in $A_P$.}\big)
\end{align*}
\endgroup
\\
The upper bound of $\Pr\Big(\big[L(R^{(1)})>L(R')\big]\big|\Psi\Big)^{|A_P|}$ can be computed as follows. Let $c_s := \log \big(\frac{\alpha(1-\beta)}{(1-\alpha)\beta}\big)$
\begingroup
\allowdisplaybreaks
\begin{align*}
&\quad\Pr\Big(L(R^{(1)})>L(R')\big|\Psi\Big)^{|A_P|}  \\
&\le\Big[\Pr\Big(c_s\big\{e(A_R,B_R)-e(A_{R^{(1)}},B_{R^{(1)}})\big\}+ \underset{1\le i\le \gamma m}{\sum}\mathbb{P}_i\big\{\mathbb{P}_{d_0,i} \log \big(\frac{p_{d_0+1}}{p_{d_0}}\big)+ (1-\mathbb{P}_{d_0,i}) \log \big(\frac{1-p_{d_0+1}}{1-p_{d_0}}\big)\big\} <0\big|\Psi\Big)\Big]^{|A_P|}  \\
&\le\Big[\Pr\Big(c_s\big\{e(\{1\},B_R)-e(\{1\},A_R\Delta \{1\})\big\}+\underset{1\le i\le \gamma m}{\sum}\mathbb{P}_i\big\{\mathbb{P}_{d_0,i} \log \big(\frac{p_{d_0+1}}{p_{d_0}}\big)+ (1-\mathbb{P}_{d_0,i}) \log \big(\frac{1-p_{d_0+1}}{1-p_{d_0}}\big)\big\} <0\big|\Psi\Big)\Big]^{|A_P|}  \\
&\le\Big[\Pr\Big(c_s\sum_{i=1}^{\frac{n}{2}-r}(B_i-A_i)+\underset{1\le i\le \gamma m}{\sum}\mathbb{P}_i\big\{\mathbb{P}_{d_0,i} \log \big(\frac{p_{d_0+1}}{p_{d_0}}\big)+ (1-\mathbb{P}_{d_0,i}) \log \big(\frac{1-p_{d_0+1}}{1-p_{d_0}}\big)\big\} <0\Big)\Big]^{|A_P|}   \qquad (r=|A_P|)\\
&\le\big\{1-\frac{1}{4}e^{-(1+o(1))(\frac{n}{2}-r)I_s-(1+o(1))\gamma mI_r}\big\}^{|A_P|}  
\quad \big(\because \text{apply Lemma~\ref{lem:5} with }  K=\frac{n}{2}-r \text{ and } L=\gamma m \big) \\
&\le \Big[exp\big\{-\frac{1}{4}e^{-(1+o(1))(\frac{n}{2}-r)I_s-(1+o(1))\gamma mI_r}\big\}\Big]^{|A_P|}   \qquad \big(\because 1-x\le e^{-x}\big)\\
&= \exp\big\{-\frac{n}{ \log^3 n}\frac{1}{4}e^{-(1+o(1))(\frac{n}{2}-r)I_s-(1+o(1))\gamma mI_r}\big\}  \qquad \big(\because |A_P|=\frac{n}{ \log^3 n}\big)\\
&\le \exp\big\{-\frac{n}{ \log^3 n}\frac{1}{4}e^{-(1-\frac{\epsilon}{2}) \log n}\big\}  \quad \text{for sufficiently large n} \\
&~~~~\big(\because -(1+o(1))\big(\frac{n}{2}-r\big)I_s-(1+o(1))\gamma mI_r \ge -\big(1-\frac{\epsilon}{2}\big) \log n\quad \text{for sufficiently large }n\big) \\
&= \exp\big\{-\frac{1}{4}\frac{n^{\frac{\epsilon}{2}}}{ \log^3 n}\big\} \rightarrow 0 \text{ as } n\rightarrow \infty
\end{align*}
\endgroup
Hence we can conclude that $\Pr\big(S^{(2)}\big)\rightarrow 0$ as $n \rightarrow \infty$.

\subsection{Proof of Theorem~\ref{thm:2}} \label{sec:D:2}
\noindent\textbf{Overview of the proof}: The proof of Theorem~\ref{thm:2} consists of two parts; MLE achievability and MLE converse. Both parts can be proved by combining the technique developed in the proof of Theorem~\ref{thm:1} and the technique of~\cite{8636058}.

In this section, we provide the full statement and the proof of  Theorem~\ref{thm:2}. Recall that $R\in \{p_1,p_2, \dots, p_d\}^{n\times m}$ is a ground-truth latent preference matrix, $C : [n] \rightarrow [K]$ is a cluster assignment function, $u_k \in \{p_1,\dots,p_d\}^m$ is a latent preference vector whose cluster assignment is $k\in [K]$, $c_k := |C^{-1}(\{k\})|, c_{i,j}:=\frac{c_i +c_j}{2}$, $d_0 := \underset{i\in \{1,2, \dots, d-1\}  }{\arg \max}\big(\sqrt{p_i p_{i+1}}+\sqrt{(1-p_i)(1-p_{i+1})} \big)$.
Define $\mathbb{p}:\{p_1, \dots, p_d\}^m \rightarrow \{p_{d_0}, p_{d_0+1}\}^m$ that sends each coordinate $x_i$  to $p_{d_0}$ if $x_i \le p_{d_0}$; $p_{d_0+1}$ if $x_i \ge p_{d_0+1}$.

\begin{theorem} Let $m=\omega (\log n)$, $\log m=o(n)$, $ \underset{n\rightarrow \infty}{\lim\inf} \frac{c_k}{n}>0~\text{for all}~k\in [K]$, $ \underset{m\rightarrow \infty}{\lim\inf} \frac{\|\mathbb{p}(u_i)-\mathbb{p}(u_j)\|_0}{m}>0~\text{for all}~i\neq j \in[K]$. Then, the following holds for arbitrary $\epsilon>0$.

(I) If $p\ge$ $\frac{1}{(d^{\min}_H)^2} \max \{\underset{i\neq j \in [K]}{\max}\{\frac{(1+\epsilon)\log n - c_{i,j}I_s}{\|\mathbb{p}(u_i)-\mathbb{p}(u_j)\|_0}\}, \underset{k\in [K]}{\max}\{\frac{(1+\epsilon)\log m}{c_k}\}\}$,
then there exists an estimator $\psi$ such that $\Pr(\psi(N^\Omega, G)\neq R)\rightarrow 0$ as $n \rightarrow \infty$

(II) Suppose $R\in \{p_{d_0},p_{d_0+1}\}^{n\times m}$, $\alpha = O(\frac{\log n}{n})$. 
 If $p\le$ $\frac{1}{(d^{\min}_H)^2} \max \{\underset{i\neq j \in [K]}{\max}\{\frac{(1-\epsilon)\log n - c_{i,j}I_s}{\|\mathbb{p}(u_i)-\mathbb{p}(u_j)\|_0}\}, \underset{k\in [K]}{\max}\{\frac{(1-\epsilon)\log m}{c_k}\}\}$, then $\Pr(\psi(N^\Omega, G)\neq R)\nrightarrow 0$ as $n \rightarrow \infty$ for any $\psi$.
\end{theorem}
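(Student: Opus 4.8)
The plan is to prove both halves of Theorem~\ref{thm:2} by lifting the two‑cluster argument of Theorem~\ref{thm:1} to $K$ clusters, using the misassignment bookkeeping of~\cite{8636058} to keep track of which users are placed in the wrong cluster. Throughout, $\psi_{ML}$ denotes the (unconstrained) MLE for part (I) and a suitably constrained MLE for part (II).

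For the achievability part (I), I would union‑bound $\Pr(\psi_{ML}(N^\Omega,G)\neq R)\le\sum_{X\neq R}\Pr(L(X)\le L(R))$ and re‑derive the analogues of Lemmas~\ref{lem:1}--\ref{lem:3}: the negative log‑likelihood $L(X)$ splits into a \emph{rating term}, a sum over entries $(i,j)$ with $R_{ij}\neq X_{ij}$ each contributing a Chernoff/Markov factor $\le 1-p(d^{\min}_H)^2$, and a \emph{graph term}, a product over the user pairs whose ``same‑cluster / different‑cluster'' status disagrees between $C$ and the cluster assignment of $X$, each contributing $1-d_H^2(\alpha,\beta)$. The new ingredient is the combinatorial index: instead of the single integer $k$ of Definition~\ref{def:A:3}, I would index candidate matrices by the $K\times K$ confusion pattern $\{k_{ij}\}$ (number of users truly in cluster $i$ but assigned to cluster $j$ by $X$) together with, for each cluster $k$, the number of preference‑vector coordinates that $X$ alters. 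For such a type, $|\chi(\cdot)|$ is a product of binomial coefficients times powers of $(d-1)$; a user moved between clusters $i$ and $j$ flips at least $c_{i,j}$ graph pairs and creates at least $\|u_i-u_j\|_0\ge\|\mathbb{p}(u_i)-\mathbb{p}(u_j)\|_0$ mismatched rating entries (so the $\mathbb{p}$‑based threshold is conservative and the bound still holds). As in Theorem~\ref{thm:1}, I would split the index set into a ``dense'' regime, where many users are misassigned or many coordinates are changed and the crude count $K^n d^{Km}$ is dominated by the decay $e^{-\Omega(nm\,p(d^{\min}_H)^2)-\Omega(n^2 I_s)}$, and a ``sparse'' regime, where one matches each $k_{ij}$ against $(1+\epsilon)\log n$ and each altered coordinate against $(1+\epsilon)\log m$ using the hypothesis on $p$, leaving a convergent geometric series.

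For the impossibility part (II), the hypothesis already restricts $R$ to entries in $\{p_{d_0},p_{d_0+1}\}$, so every Hellinger distance that appears equals $d^{\min}_H$, and the reduction of Lemma~\ref{lem:4} shows it suffices to lower‑bound the error of the constrained MLE on a worst‑case such $R$. The condition $p\le\frac{1}{(d^{\min}_H)^2}\max\{\dots\}$ forces either $c_k\,p(d^{\min}_H)^2\le(1-\epsilon)\log m$ for some $k$, or $c_{i,j}I_s+\|\mathbb{p}(u_i)-\mathbb{p}(u_j)\|_0\,p(d^{\min}_H)^2\le(1-\epsilon)\log n$ for some pair $i\neq j$. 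In the first case I would use a genie revealing all of $R$ except one coordinate of $u_k$, which reduces to the independence‑of‑events computation of \textbf{Case~1} of Theorem~\ref{thm:1} with the cluster size $\tfrac n2$ replaced by $c_k$. In the second case I would reuse Lemmas~\ref{lem:6}--\ref{lem:7}: delete mutually adjacent users inside clusters $i$ and $j$ to obtain non‑adjacent subsets of size $\Theta(n/\log^3 n)$, let the genie reveal that $R$ is perturbed only by swapping such a user between $i$ and $j$, and apply Lemma~\ref{lem:5} with $K=c_{i,j}-o(n)$ graph terms and $L=\|\mathbb{p}(u_i)-\mathbb{p}(u_j)\|_0$ rating terms to show each such perturbation is undetectable with probability bounded away from $0$, hence the MLE fails with non‑vanishing probability.

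The main obstacle is the combinatorial bookkeeping in part (I): with $K$ clusters the ``type'' of a candidate matrix is a $K\times K$ misassignment matrix plus $K$ coordinate‑change counts, and one must check that, for \emph{every} type, the product of the number of matrices of that type, the rating‑decay factor raised to the number of mismatched entries, and the graph‑decay factor raised to the number of flipped pairs is summably small. In particular one must verify that a user moved from $i$ to $j$ really does flip at least $c_{i,j}$ pairs — being careful that pairs between two simultaneously moved users are not double‑counted or cancelled — and at least $\|\mathbb{p}(u_i)-\mathbb{p}(u_j)\|_0$ rating entries; this is precisely where the $d^{\min}_H$ bound and four‑regime split of Theorem~\ref{thm:1} must be fused with the confusion‑matrix counting of~\cite{8636058}. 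By comparison, once the reduction to $\{p_{d_0},p_{d_0+1}\}$‑valued $R$ is in place the converse is essentially a parametrized rerun of the converse of Theorem~\ref{thm:1}.
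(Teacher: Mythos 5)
Your proposal takes essentially the same route as the paper's proof: the paper likewise combines the Chernoff/union-bound machinery of Lemmas~\ref{lem:1}--\ref{lem:3} with the $K\times K$ confusion-pattern indexing $C^X_{i,j}=C^R_i\cap C^X_j$ of~\cite{8636058}, bounds each candidate by $\{1-p(d^{\min}_H)^2\}^{D_X}\exp(-\tfrac{N_A+N_B}{2}I_s)$ and splits the resulting sum according to whether many users are misassigned or many preference coordinates are altered, and proves the converse via genie-aided constrained MLEs over the single-user-swap set $\mathcal{D}_1$ and the single-coordinate-change set $\mathcal{D}_2$, exactly as you outline. The only quibbles are bookkeeping-level (a user swapped between clusters $i$ and $j$ flips about $2c_{i,j}$ pairs, each contributing $e^{-I_s/2}$, so the net exponent per user is $c_{i,j}I_s$; and your ``genie revealing all but one coordinate'' must, as in Case~1 of the converse of Theorem~\ref{thm:1}, reveal only the \emph{set} of $\Theta(m)$ single-coordinate perturbations so that a product of independent per-coordinate failure probabilities can be driven to zero), both of which you already flag as the obstacles to be handled.
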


We will first show that the maximum likelihood estimator $\psi_{ML}$ satisfies (I), and then show that there does not exist an estimator $\psi$ satisfying (II).

\textbf{(I) MLE Achievability}\\
We introduce a few more notations that will be used in the proof. For an arbitrary latent preference matrix $X$, let 
$C^X : [n] \rightarrow [K]$ be a cluster assignment function, $u^X_k \in \{p_1,\dots,p_d\}^m$ be a latent preference vector whose cluster assignment is $k\in [K]$, $C^X_k:= (C^X)^{-1}(\{k\})$, $C^X_{i,j}:=C^{R}_i\cap C^{X}_j$. Then $\{C^X_{i,j}: 1\le i,j \le K\}$ is a $K^2$-partition of $[n]$. In light of the proof of \textbf{Claim 1} in~\cite{8636058} together with Lemma~\ref{lem:2}, we get 
\begin{align*}
\Pr(L(X)\le L(R))&=\Pr\bigg(\underset{1\le a\neq b \le d}{\sum}\Big[\underset{(i,j)\in Q_{ab}}{\sum}\mathbb{P}_{ij}\big\{\mathbb{P}_{a,ij}\log \frac{p_b}{p_a} + (1-\mathbb{P}_{a,ij}) \log \frac{1-p_b}{1-p_a} \big\}\Big]\\
&~~~~~~~~~~~~~~ +(N_B -N_A)\log \Big(\frac{1-\alpha}{1-\beta}\Big) +\log \Big(\frac{\alpha(1-\beta)}{(1-\alpha)\beta}\Big) \Big( \underset{1\le i \le N_B}{\sum B_i} -\underset{1\le i \le N_A}{\sum A_i} \Big)\ge 0\bigg)
\end{align*}
where $Q_{ab}:=\big\{(i,j)\in [n]\times [m] : R_{ij}=p_a, X_{ij}=p_b\big\},$ $N_A :=\underset{i,j,k\in [K], j\neq k}{\sum}|C^X_{i,j}|\cdot |C^X_{i,k}|$, $N_B:= \underset{i,j,k\in [K], j\neq k}{\sum}|C^X_{j,i}|\cdot |C^X_{k,i}|$, and $A_i\overset{i.i.d.}{\sim} \text{Bern}(\alpha)$, $B_i\overset{i.i.d.}{\sim} \text{Bern}(\beta)$, $\mathbb{P}_{ij}\overset{i.i.d.}{\sim} \text{Bern}(p)$, $\mathbb{P}_{a, ij}\overset{i.i.d.}{\sim} \text{Bern}(p_a)$.
Applying the technique of Lemma 2 in~\cite{8636058} together with Lemma~\ref{lem:3}, we get 
$$\Pr(L(X)\le L(R)) \le \{1-p(d^{\min}_H)^2\}^{D_X}\exp\Big(-\frac{N_A+N_B}{2}I_s\Big)$$
where $D_X:=|\{(i,j)\in [n]\times [m] : R_{ij}\neq X_{ij}\}|$. Then 
\begin{align*}
\Pr(\psi_{ML}(N^\Omega, G)\neq R)&\le \Pr\big(\underset{X\neq R}{\cup}[L(X)\le L(R)]\big)\overset{\text{union bound}}{\le} \underset{X\neq R}{\sum}\Pr(L(X)\le L(R))\\
& \le \underset{X\neq R}{\sum}\{1-p(d^{\min}_H)^2\}^{D_X}\exp(-\frac{N_A+N_B}{2}I_s)
\end{align*}
Let $\mathcal{X}:=\{X\neq R : X ~\text{is a latent preference matrix with $K$ clusters} \}$. It suffices to show that the last summation converges to $0$ as $n\rightarrow \infty$. We divide it into three partial sums over subsets $\mathcal{X}_1 := \{X\in \mathcal{X}: \exists i,j\neq k\in [K]~\text{such that}~ |C^X_{i,j}|,|C^X_{i,k}|\ge \delta n\} \cup \{X\in \mathcal{X}: \exists i,j\neq k\in [K]~\text{such that}~ |C^X_{j,i}|,|C^X_{k,i}|\ge \delta n\}$, $\mathcal{X}_2 := \{X\in \mathcal{X}\setminus \mathcal{X}_1 : \exists i\in [K] ~\text{such that}~ \|\mathbb{p}(u^R_i)-\mathbb{p}(u^R_j)\|_0 \ge \delta m \}$, $\mathcal{X}_3 := \mathcal{X}\setminus \{\mathcal{X}_1 \cup \mathcal{X}_2\}$. Applying the technique used in the proof of \textbf{Claim 1} in~\cite{8636058}, one can show that each partial sum converges to $0$ as $n\rightarrow \infty$.

\textbf{(II) MLE Converse}\\
$R\in \{p_{d_0},p_{d_0+1}\}^{n\times m}$ by the assumption. Let $\mathcal{D}:=\{p_{d_0},p_{d_0+1}\}^{n\times m}$. In light of Lemma~\ref{lem:4}, one can show that $\Pr(\psi(N^\Omega, G)\neq R) \ge \Pr(\psi_{ML}|_{\mathcal{D}}(N^\Omega, G)\neq R)$ where $\psi_{ML}|_{\mathcal{D}}$ is the maximum likelihood estimator whose output is constrained in $\mathcal{D}$. Hence it suffices to show that $\Pr(\psi_{ML}|_{\mathcal{D}}(N^\Omega, G)\neq R) \nrightarrow 0$ as $n \rightarrow \infty$. Let $S:=\underset{X\neq R, X\in \mathcal{D}}{\cap}\big[L(X)>L(R')\big]$ which is the success event of $\psi_{ML}|_{\mathcal{D}}$. Then one can observe that  $\Pr (S)\rightarrow 0$ implies $\Pr(\psi_{ML}|_{\mathcal{D}}(N^\Omega, G)\neq R) \nrightarrow 0$.\\
Note that $p\le$ $\frac{1}{(d^{\min}_H)^2} \max \{\underset{i\neq j \in [K]}{\max}\{\frac{(1-\epsilon)\log n - c_{i,j}I_s}{\|\mathbb{p}(u_i)-\mathbb{p}(u_j)\|_0}\}, \underset{k\in [K]}{\max}\{\frac{(1-\epsilon)\log m}{c_k}\}\} \Leftrightarrow $ ``$p(d^{\min}_H)^2\|\mathbb{p}(u_i)-\mathbb{p}(u_j)\|_0 + c_{i,j}I_s \le (1-\epsilon)\log n$ for some $i\neq j \in [K]$'' or ``$c_k p (d^{\min}_H)^2 \le (1-\epsilon)\log m $ for some $k\in [K]$''.
\begin{itemize}
    \item{Case 1. $p(d^{\min}_H)^2\|\mathbb{p}(u_i)-\mathbb{p}(u_j)\|_0 + c_{i,j}I_s \le (1-\epsilon)\log n$ for some $i\neq j \in [K]$:} Without loss of generality, assume that $i=1, j=2$. We consider a genie-aided ML estimator $\psi^{(1)}_{ML}$ which is given with the information that the ground-truth latent preference matrix belongs to $\mathcal{D}_1:=\{X\in \mathcal{D}: C^X_1\setminus C^R_1= C^R_1\setminus C^X_1 =1, C^X_2 = (C^R_1\cup C^R_2)\setminus C^X_1, C^X_i = C^R_i ~\text{for}~ i=3,\dots,K\}$. Then one can show that $\Pr(S):=\Pr(\underset{X\neq R, X\in \mathcal{D}}{\cap}\big[L(X)>L(R')\big]) \le \Pr(\underset{X\neq R, X\in \mathcal{D}_1}{\cap}\big[L(X)>L(R')\big]) \rightarrow 0$ by using the technique developed in~\cite{8636058}.
       
    \item{Case 2. $c_k p (d^{\min}_H)^2 \le (1-\epsilon)\log m $ for some $k\in [K]$:} Without loss of generality, assume that $k=1$. We consider a genie-aided ML estimator $\psi^{(2)}_{ML}$ which is given with the information that the ground-truth latent preference matrix belongs to $\mathcal{D}_2:=\{X\in \mathcal{D}: \|u^X_1-u^R_1\|_0 =1,  u^X_i = u^R_i ~\text{for}~ i=2,\dots,K\}$. Then one can show that $\Pr(S):=\Pr(\underset{X\neq R, X\in \mathcal{D}}{\cap}\big[L(X)>L(R')\big]) \le \Pr(\underset{X\neq R, X\in \mathcal{D}_2}{\cap}\big[L(X)>L(R')\big]) \rightarrow 0$ by using the technique developed in~\cite{8636058}.
    
\end{itemize}

\subsection{Proof of Theorem~\ref{thm:3}} \label{sec:D:3}
\noindent\textbf{Overview of the proof}: For \textbf{Stage 1}, we make use of the standard performance guarantee of spectral clustering algorithms, and this step is identical to the argument in~\cite{Ahn_2018}. 

\noindent Our theoretical contribution lies in the analysis of \textbf{Stage 2-(i)}.
In~\cite{Ahn_2018}, the authors have considered the symmetric case with $d=2$, i.e., $p_1 + p_2 = 1$. 
Thus, one just needs to estimate a single parameter $p_1$, making the entire parameter estimation part straightforward. 
On the other hand, we do have to estimate $d$ parameters $p_1, p_2, \ldots, p_d$, making our estimation algorithm more complicated and complicating our analysis. 

\noindent To obtain a theoretical guarantee of \textbf{Stage 2-(i)}, we first show that $2d\lceil \log m \rceil$ number of estimations $a_j$'s and $a_j'$'s satisfy the following in Lemma~\ref{lem:8}; (i) every $p_t\in \{p_1,\dots,p_d\}$ will be estimated by at least one of $a_j$'s or $a_j'$'s with probability approaching $1$ as $n \rightarrow \infty$, (ii) $a_j$'s and $a_j'$'s are located in the $o(1)$-radius neighborhoods of ground-truth latent preference levels $p_1,\dots, p_d$ with probability approaching $1$ as $n \rightarrow \infty$. The next step is a distance-based clustering on the distribution of $p_1,\dots,p_d$ which will give us $\hat{p_1},\dots,\hat{p_d}$ where $\hat{p_k}$ is indeed the average of numbers whose distance from $p_k$ can be arbitrarily small as $n\rightarrow\infty$. Then for each pair of cluster and column, we assign one of $\hat{p_1},\dots,\hat{p_d}$ whose likelihood is maximum for that pair. This gives us $\hat{u_R}, \hat{v_R}$ which are the estimations of latent preference vectors, and Lemma~\ref{lem:9} ensures that $\hat{u_R}^{(j)}\rightarrow u_R^{(j)}, \hat{v_R}^{(j)}\rightarrow v_R^{(j)}$ for all $j=1,\dots, m$ with probability approaching to $1$ as $n\rightarrow \infty$.

\noindent \textbf{Stage 2-(ii)} is a local refinement step in which we compare estimated likelihood values and update cluster assignments, and the analysis is similar to the proof in~\cite{Ahn_2018}. We prove that under the conditions of Thm.~\ref{thm:2}, the number of wrongly classified users can be halved in each iteration, and hence one can successively improve the quality of the estimation, eventually achieving the perfect recovery. Note that the proof procedure is based on a standard successive refinement technique. The following diagram visualizes the proof dependencies.
 
 \begin{tcolorbox}
\textbf{Proof dependencies}:\\
Theorem 3 $\longleftarrow$ Analysis of \textbf{Stage 1, 2-(i), 2-(ii)} \\
\phantom ~~~~~~ Analysis of \textbf{Stage 2-(i)} $\longleftarrow$ Lemma 8, 9\\
\phantom ~~~~~~ Analysis of \textbf{Stage 2-(ii)} $\longleftarrow$ Lemma 10, 11, 12
\end{tcolorbox}

\begin{theorem}
Let $\ell_{\max} = 1, K=2, |C^{-1}(\{1\})|=|C^{-1}(\{2\})|=\frac{n}{2}, \gamma \in (0,1)$, $m=\omega (\log n)$, $\log m=o(n)$, $(\sqrt{\alpha}-\sqrt{\beta})^2=\omega(\frac{1}{n})$, $m=O(n)$, and $\alpha = O(\frac{\log n}{n})$. 
Let $\phi_j \cdot 2m$ be the number of $p_j$'s among $(u_R)_{1},\dots,(u_R)_{m},(v_R)_{1},\dots,(v_R)_{m}$ for $j=1,\dots,d$, and assume that $\phi_j \nrightarrow 0$ as $n\rightarrow \infty$. If $$p\ge \frac{1}{(d^{\min}_H)^2} \max\left\{\frac{(1+\epsilon) \log n - \frac{n}{2}I_s}{\gamma m},\frac{2(1+\epsilon)\log m}{n}\right\}$$ for some $\epsilon >0$, then our algorithm outputs $\hat{R}$ where the following holds with probability approaching to $1$ as $n$ goes to $\infty$ : $\|\hat{R}-R\|_{\max} := \underset{(i,j)\in [n]\times [m]}{\max}|\hat{R}_{ij}-R_{ij}|=o(1)$.
\end{theorem}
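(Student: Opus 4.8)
The plan is to follow Algorithm~\ref{alg:1} stage by stage and show that, with probability $1-o(1)$, Stage~1 together with Stage~2-(ii) recover the ground-truth clusters $A_R,B_R$ \emph{exactly}, while Stage~2-(i) recovers the correct discrete \emph{index} of every coordinate of $u_R$ and $v_R$ but reports it as $\hat p_h$ in place of $p_h$. Granting this, the reconstructed matrix satisfies $\hat R_{ij}=\hat p_{h(i,j)}$ and $R_{ij}=p_{h(i,j)}$ for one and the same index $h(i,j)\in[d]$, so $\|\hat R-R\|_{\max}=\max_{h\in[d]}|\hat p_h-p_h|$, and the whole theorem then reduces to proving $|\hat p_h-p_h|=o(1)$ for each $h$ --- which is precisely why the conclusion is $o(1)$ rather than exact recovery. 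For Stage~1, since $(\sqrt\alpha-\sqrt\beta)^2=\omega(1/n)$ the symmetric SBM lies above the regime where a spectral method such as~\cite{Gao_2017} returns, with probability $1-o(1)$, a clustering $\{A_1^{(0)},A_2^{(0)}\}$ agreeing with $\{A_R,B_R\}$ up to a misclassified fraction $\eta_0=o(1)$; in particular $\eta_0$ eventually drops below any prescribed constant. This step is identical to~\cite{Ahn_2018}.

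For the parameter-estimation part of Stage~2-(i) (Lemma~8) I would first argue that, for a cluster $A_k^{(0)}$ and an item $j$ whose dominant label is $p_s$, the ratio $a_j$ is, conditioned on the observed ratings in that cluster, an average of $\Theta(np)$ Bernoulli variables whose parameter is $p_s$ perturbed by only $O(\eta_0)=o(1)$ from the misclassified minority; since $np\ge 2(1+\epsilon)\log m$ (using $(d^{\min}_H)^2\le 1$ and the hypothesis on $p$), a Chernoff bound gives $|a_j-p_s|\le\delta$ except with probability $m^{-\Omega(\delta^2)}$, and taking $\delta=(\log m)^{-1/4}=o(1)$ makes this $e^{-\Omega(\sqrt{\log m})}$, small enough for a union bound over the $Kd\lceil\log m\rceil$ sampled items. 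Because $\phi_t\not\to0$ for every $t$, one of the two clusters samples a $p_t$-labelled coordinate with constant probability per draw, so over $d\lceil\log m\rceil$ draws it misses $p_t$ only with probability $m^{-\Omega(1)}$; hence whp every level is witnessed. The sorted estimates then fall into exactly $d$ bands, each of width $\le 2\delta$ around a distinct $p_t$ and separated by gaps $\ge g-2\delta$ with $g:=\min_t(p_{t+1}-p_t)>0$, so for $\delta<g/4$ the ``$d-1$ largest gaps'' rule in Algorithm~\ref{alg:1} splits them correctly and averaging within a band yields $|\hat p_h-p_h|\le\delta=o(1)$.

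The hard part, and where the argument genuinely departs from the symmetric $d=2$ case of~\cite{Ahn_2018}, is the preference-vector MLE in Stage~2-(i) (Lemma~9): given noisy plug-ins $\hat p_h$ with $|\hat p_h-p_h|=o(1)$ and an $o(1)$-contaminated clustering, the coordinatewise rule $(\hat u_k)_j=\arg\min_h\hat L(\hat p_h;A_k^{(0)},j)$ must return the correct discrete label at \emph{all} $m$ coordinates of \emph{both} clusters at once. I would reduce, for a fixed item $j$ with true level $p_s$, the event ``a wrong level $\hat p_{h'}$ beats $\hat p_s$'' to a Bernoulli generalized-likelihood-ratio event on the $n_j\sim\mathrm{Binom}(\tfrac n2,p)$ observed ratings and bound it by exactly the Markov/Chernoff computation of Lemma~\ref{lem:3}, obtaining a bound of the form $\exp(-(1-o(1))\,n_j\,(d^{\min}_H)^2)$; since $n_j\ge(1+\epsilon)\tfrac{\log m}{(d^{\min}_H)^2}(1-o(1))$ whp (this is where the $\tfrac{2(1+\epsilon)\log m}{(d^{\min}_H)^2 n}$ term of the threshold, and $m=O(n)$, come in), this is $m^{-(1+\epsilon)(1-o(1))}$, and a union bound over the $\le dm$ pairs $(j,h')$ and the two clusters closes it. The two delicate points are (a) capturing all $d$ levels with only $\Theta(\log m)$ sampled items when the per-cluster frequencies $\phi_t^{(1)},\phi_t^{(2)}$ may be very uneven, and (b) propagating the $o(1)$ errors in $\hat p_h$ and in the clustering through the plug-in MLE without ever flipping a discrete label --- this is the main obstacle.

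Finally, for Stage~2-(ii) I would invoke the standard successive-refinement analysis of~\cite{Ahn_2018,8636058}. The estimators $\hat\alpha,\hat\beta$ are ratios of $\mathrm{Binom}(\Theta(n^2),\cdot)$-type sums over the $o(1)$-contaminated clustering, so $\hat\alpha/\alpha,\hat\beta/\beta\to1$ whp (using $\alpha=O(\log n/n)$ so the means diverge). For a user $i$, the gap $\hat L(A_{k'};i)-\hat L(A_{k_0};i)$ between the wrong and true cluster-likelihoods has conditional mean $\le-(1-o(1))\big(\tfrac n2 I_s+\gamma m\,p\,(d^{\min}_H)^2\big)\le-(1+\tfrac\epsilon2)\log n$ --- here the first term of the threshold and $\|u_R-v_R\|_0=\lceil\gamma m\rceil$ (the separation implicit in parametrizing by $\gamma$; see Definition~\ref{def:1}) are used --- plus a perturbation controlled by the current misclassified set $M$ that is negligible compared with $\log n$ as soon as $|M|/n$ is below a fixed constant; a Chernoff bound in the style of Lemma~\ref{lem:3} then shows the expected misclassified set after one refinement step has size $\le|M|/2$ whenever $|M|\le\eta n$ for $\eta$ small enough. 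Since Stage~1 delivers $|M|=\eta_0 n$ with $\eta_0=o(1)<\eta$, iterating $T=\lceil\log_2 n\rceil$ times drives the expected count below $\eta_0 n/2^T\le\eta_0<1$, and once $|M|=o(n/\log n)$ the perturbation is $o(\log n)$ so one further step recovers the clusters exactly with probability $\ge 1-n\cdot n^{-1-\epsilon/2}\to1$. Combining the three stages gives, whp, $\hat R_{ij}=\hat p_{h(i,j)}$ and $R_{ij}=p_{h(i,j)}$, whence $\|\hat R-R\|_{\max}=\max_h|\hat p_h-p_h|=o(1)$.
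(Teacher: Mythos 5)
Your proposal follows the paper's proof essentially step for step: Stage 1 via the spectral-clustering guarantee of \cite{Gao_2017} giving a misclassified fraction $\eta=o(1)$, Stage 2-(i) via concentration of the empirical ratios around the true levels plus the largest-gaps clustering (Lemma~8) and a Chernoff/Markov bound on the plug-in coordinatewise MLE with a union bound over all $2m$ coordinates (Lemma~9), and Stage 2-(ii) via the standard successive-refinement halving argument (Lemmas~10--12). The two ``delicate points'' you flag are handled in the paper exactly as you sketch---(a) by the assumption $\phi_j\nrightarrow 0$ giving a geometrically small miss probability over the $\Theta(\log m)$ sampled items, and (b) by carrying $(1+o(1))$ multiplicative factors on the likelihood ratios through the Chernoff computation---so the proposal is correct and takes the same route as the paper.
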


In Algorithm 1, (Stage 1) we first use spectral clustering to get $A_R^{(0)}, B_R^{(0)}$, (Stage 2-(i)) then get almost exact recovery of latent preference vectors $\hat{u_R}, \hat{v_R}$, (Stage 2-(ii)) and eventually get exact recovery of clusters $\hat{A_R}, \hat{B_R}$.

\textbf{Analysis of Stage 1.} Let $\eta :=\frac{|A_R^{(0)} \setminus A_R|}{n}$. Then $\eta \rightarrow 0$ as $n\rightarrow \infty$ with probability approaching to 1.

\begin{proof}
Since $(\sqrt{\alpha}-\sqrt{\beta})^2=\omega(\frac{1}{n})$ satisfies the assumption of Theorem 6 in~\cite{Gao_2017}, $\eta \rightarrow 0$ as $n\rightarrow \infty$ with probability approaching to $1$.
\end{proof}

\textbf{Analysis of Stage 2-(i).} Under the success of Stage 1, $\hat{u_R}\rightarrow u_R, \hat{v_R}\rightarrow v_R$ as $n\rightarrow \infty$ with probability approaching to 1.

\begin{proof}
It follows directly from Lemma~\ref{lem:9}, and we need to prove Lemma~\ref{lem:8} first.

\begin{lemma}\label{lem:8}
Sample $m_0:= d\lceil \log m \rceil$ elements $j_1,\dots, j_{m_0}$ from $[m]$ with replacement. Define $a_j=\frac{\underset{i\in A_R^{(0)}}{\sum}\mathbb{1}(N_{ij}^{\Omega}=1)}{\underset{i\in A_R^{(0)}}{\sum}\mathbb{1}(N_{ij}^{\Omega}=1  \text{ or }-1)}$, and $a_{j}'=\frac{\underset{i\in B_R^{(0)}}{\sum}\mathbb{1}(N_{ij}^{\Omega}=1)}{\underset{i\in B_R^{(0)}}{\sum}\mathbb{1}(N_{ij}^{\Omega}=1  \text{ or }-1)}$ for $j=j_1,\dots,j_{m_0}$. Let $q_1, \dots, q_{2m_0}$ be ground-truth latent preference levels corresponding to $a_{j_1},\dots,a_{j_{m_0}},a_{j_1}',\dots,a_{j_{m_0}}'$ respectively. (i) Then $\{q_1, \dots, q_{2m_0}\}=\{p_1, \dots, p_{d}\}$ with probability approaching to $1$ as $n\rightarrow \infty$. (ii) Moreover, for any constant $\delta >0$, the following holds with probability approaching to $1$ as $n\rightarrow \infty$: for all $i=1, \dots, m_0$, $|a_{j_i}-q_i|< \delta$ and $|a'_{j_i}-q_i|< \delta$.
\end{lemma}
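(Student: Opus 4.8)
The plan is to prove the two claims separately, in both cases working conditionally on the Stage~1 success event, i.e.\ on the event that $\eta := |A_R^{(0)}\setminus A_R|/n \to 0$ (and symmetrically $|A_R\setminus A_R^{(0)}|/n \to 0$), which holds with probability $1-o(1)$ by the analysis of Stage~1. Note first that, by construction, $q_i = (u_R)_{j_i}$ for $i\le m_0$ and $q_{m_0+i} = (v_R)_{j_i}$ for $i\le m_0$, so the inclusion $\{q_1,\dots,q_{2m_0}\}\subseteq\{p_1,\dots,p_d\}$ is automatic and Part~(i) reduces to a pure balls-into-bins estimate that does not involve the ratings at all. Fix a level $p_t$ and set $S_t := \{j\in[m] : (u_R)_j = p_t \text{ or } (v_R)_j = p_t\}$. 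Since the total number of $p_t$'s among the $2m$ entries of $u_R$ and $v_R$ equals $\phi_t\cdot 2m$ and each column contributes at most two of them, $|S_t|\ge \phi_t m$. Each sampled index $j_i$ is uniform on $[m]$ and independent of the others, and $p_t$ is absent from the pair $((u_R)_{j_i},(v_R)_{j_i})$ precisely when $j_i\notin S_t$, an event of probability $1-|S_t|/m \le 1-\phi_t$. Hence $\Pr(p_t\notin\{q_1,\dots,q_{2m_0}\}) \le (1-\phi_t)^{m_0} \le e^{-\phi_t d\lceil\log m\rceil} \le m^{-\phi_t d}$. Because $\liminf_{n\to\infty}\phi_t>0$ (the content of the hypothesis $\phi_t\nrightarrow 0$) and $m=\omega(\log n)\to\infty$, a union bound over the $d$ levels gives $\Pr(\{q_1,\dots,q_{2m_0}\}\ne\{p_1,\dots,p_d\})\le \sum_{t=1}^d m^{-\phi_t d}\to 0$.

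For Part~(ii) we must show each empirical frequency concentrates near the corresponding true level. Fix a column $j$ with $(u_R)_j=p_t$, and condition on $A_R^{(0)}$ with $n_2 := |A_R^{(0)}\setminus A_R| = \eta n$ and $n_1 := |A_R^{(0)}\cap A_R| = (\tfrac12-o(1))n$. Writing $a_j = X/Y$ with $Y := \sum_{i\in A_R^{(0)}}\mathbb{1}(N^\Omega_{ij}\ne 0)$ and $X := \sum_{i\in A_R^{(0)}}\mathbb{1}(N^\Omega_{ij}=1)$, split each sum into the contribution of the correctly clustered users $i\in A_R^{(0)}\cap A_R$ and the misclustered users $i\in A_R^{(0)}\setminus A_R$. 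Over the correctly clustered users, $\sum\mathbb{1}(N^\Omega_{ij}=1)\sim\mathrm{Bin}(n_1,pp_t)$ and $\sum\mathbb{1}(N^\Omega_{ij}\ne 0)\sim\mathrm{Bin}(n_1,p)$; the hypothesis forces $p\ge \tfrac{2(1+\epsilon)\log m}{(d^{\min}_H)^2 n}$, so $n_1 p = \Omega(\log m)\to\infty$ and $n_1 p p_t\ge n_1 p\,p_1\to\infty$, and the multiplicative Chernoff bound shows that, except with probability $O(m^{-c})$ for some constant $c>0$, both sums lie within a factor $1\pm\delta'$ of their means. For the misclustered users, trivially $0\le\sum_{i\in A_R^{(0)}\setminus A_R}\mathbb{1}(N^\Omega_{ij}=1)\le\sum_{i\in A_R^{(0)}\setminus A_R}\mathbb{1}(N^\Omega_{ij}\ne 0)\sim\mathrm{Bin}(n_2,p)$, and since $n_2 p = \eta np = o(np)$, the Chernoff bound in the regime of large deviation above a small mean gives $\sum_{i\in A_R^{(0)}\setminus A_R}\mathbb{1}(N^\Omega_{ij}\ne 0)\le \epsilon_0\, np$ except with probability $e^{-\Omega(np)} = m^{-\Omega(1)}$. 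Combining, on an event of probability $1-O(m^{-c})$ we have $X\in[(1-\delta')n_1 pp_t,\ (1+\delta')n_1 pp_t + \epsilon_0 np]$ and $Y\in[(1-\delta')n_1 p,\ (1+\delta')n_1 p + \epsilon_0 np]$ with $n_1 p\asymp np$; hence $a_j = X/Y$ is squeezed between two quantities both tending to $p_t$ as $\delta',\epsilon_0\to 0$, so choosing $\delta',\epsilon_0$ small enough in terms of $\delta$ and $n$ large gives $|a_j - p_t|<\delta$ on that event (which in particular has $Y>0$, so $a_j$ is well defined).

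Finally, the estimates $a_{j_i}$ are evaluated at the \emph{random} columns $j_1,\dots,j_{m_0}$, which, conditionally on $A_R^{(0)}$, are independent of $N^\Omega$; therefore $\Pr(|a_{j_i}-q_i|\ge\delta\mid A_R^{(0)}) = \tfrac1m\sum_{j\in[m]}\Pr(|a_j-(u_R)_j|\ge\delta\mid A_R^{(0)}) \le O(m^{-c})$, and the same holds for the $a'_{j_i}$ built from $B_R^{(0)}$. A union bound over the $2m_0 = 2d\lceil\log m\rceil$ estimates yields total failure probability $O(\log m\cdot m^{-c}) = o(1)$, and integrating against the Stage~1-success event (probability $1-o(1)$) completes the argument. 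The main obstacle is the concentration step in Part~(ii): the misclustered users in $A_R^{(0)}\setminus A_R$ carry preference levels drawn from $v_R$, which are arbitrary and cannot be controlled entry by entry, so the argument must rely solely on their being a vanishing fraction so that their aggregate contribution $O(\eta np)$ is negligible beside the $\Theta(np)$ signal; a secondary nuisance is that $np$ grows only as slowly as $\Theta(\log m)$, which is just enough for the Chernoff tails to survive the union bound over the $O(\log m)$ sampled columns, whereas union-bounding over all $m$ columns would require controlling the constant in the exponent, so restricting to the sampled columns is the cleaner route.
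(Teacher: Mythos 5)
Your proof is correct, and its skeleton matches the paper's: part (i) is a coupon-collector estimate plus a union bound over the $d$ levels, and part (ii) splits each empirical frequency into the contributions of correctly clustered and mis-clustered users, exploits $\eta\to 0$ to make the latter negligible, derives a polynomially small per-column failure probability, and union-bounds over the $O(\log m)$ sampled columns only. The differences are in execution. In part (i) you work with the set $S_t$ of columns containing level $p_t$ in either preference vector and bound the miss probability by $(1-\phi_t)^{m_0}$; this is slightly more careful than the paper's $(1-\phi_j)^{2m_0}$, which implicitly treats the $2m_0$ entries as independent draws even though the sampling unit is a column contributing a correlated pair. In part (ii) the paper linearizes the event $a_{j_1}\ge q_1+\delta$ into $\{X-(q_1+\delta)Y\ge 0\}$ and applies Markov's inequality to $e^{t_0(\cdot)}$ with an explicitly optimized tilt $t_0=\log\bigl(\tfrac{(q_1+\delta)(1-q_1)}{q_1(1-q_1-\delta)}\bigr)$, absorbing the mis-clustered users into a multiplicative factor $\prod(1+b_\delta p)$; you instead apply multiplicative Chernoff bounds separately to the numerator and denominator and squeeze the ratio. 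Both routes give $m^{-c}$ tails for a $\delta$-dependent constant $c>0$, which is enough precisely because the union bound runs only over the $2d\lceil\log m\rceil$ sampled columns; your closing observation about why one should not union-bound over all $m$ columns is exactly the right point (the paper defers the all-columns statement to Lemma~9, where the exponent must be, and is, pinned down above $1$). Your version is more modular and avoids optimizing a tilting parameter, at the cost of tracking two concentration events per column instead of one.
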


\begin{proof}
(i) As there are $\phi_j \cdot 2m$ $p_j$'s among $(u_R)_{1},\dots,(u_R)_{m},(v_R)_{1},\dots,(v_R)_{m}$, $\Pr\big(\big[p_j\notin \{q_1, \dots, q_{2m_0}\}\big]\big)=(1-\phi_j)^{2m_0}\le (1-\delta_0)^{2m_0}$ ($\because$ $\phi_j \nrightarrow 0$ implies $\exists \delta_j>0$ such that $\phi_j\ge \delta_j$ for all $n$. Then define $\delta_0 := \text{min}\{\delta_1,\dots,\delta_d\}$). By union bound, $\Pr\big(\{q_1, \dots, q_{2m_0}\}\neq \{p_1, \dots, p_{d}\}\big)\le \underset{1\le j\le d}{\sum}\Pr\big(\big[p_j\notin \{q_1, \dots, q_{2m_0}\}\big]\big)\le \underset{1\le j\le d}{\sum} (1-\delta_0)^{2m_0}=d(1-\delta_0)^{2m_0}\rightarrow 0$ as $n\rightarrow \infty$. So $\{q_1, \dots, q_{2m_0}\}=\{p_1, \dots, p_{d}\}$ with probability approaching to $1$ as $n\rightarrow \infty$.

(ii) For $j=j_1$, $\Pr(|a_{j_1}-q_1|\ge \delta)=\Pr(a_{j_1}-q_1\ge \delta)+\Pr(a_{j_1}-q_1\le \delta)$. 
We first find the upper bound of $\Pr(a_{j_1}-q_1\ge \delta)$. 
\begingroup
\allowdisplaybreaks
\begin{align*}
&\Pr(a_{j_1}-q_1\ge\delta)=\Pr(a_{j_1}\ge q_1+\delta)\\
&=\Pr\Bigg(\frac{\underset{i\in A_R^{(0)}}{\sum}\mathbb{1}(N_{i{j_1}}^{\Omega}=1)}{\underset{i\in A_R^{(0)}}{\sum}\mathbb{1}(N_{i{j_1}}^{\Omega}=1  \text{ or }-1)}\ge q_1+\delta\Bigg)
=\Pr\Bigg(\frac{\underset{i\in A_R^{(0)}\cap A_R}{\sum}\mathbb{P}_i\mathbb{Q}_{1,i}+\underset{i\in A_R^{(0)}\setminus A_R}{\sum}\mathbb{P}_i\mathbb{P}_{f,i}}{\underset{i\in A_R^{(0)}}{\sum}\mathbb{P}_i}\ge q_1+\delta\Bigg)\\
&~~~~\big(\mathbb{P}_i\overset{i.i.d.}{\sim}\text{Bern}(p), \mathbb{Q}_{j,i}\overset{i.i.d.}{\sim}\text{Bern}(q_1), \mathbb{Q}_{f,i}\overset{i.i.d.}{\sim}\text{Bern}(p_f) ~\text{where} \\
&~~~~~~~~q_1,p_f \text{ are ground-truth latent preference levels of $j_1$-th column of } A_R, B_R \text{ respectively}.\big)\\
&=\Pr\Big(\underset{i\in A_R^{(0)}\cap A_R}{\sum}\mathbb{P}_i(\mathbb{Q}_{1,i}-q_1-\delta)+\underset{i\in A_R^{(0)}\setminus A_R}{\sum}\mathbb{P}_i(\mathbb{P}_{f,i}-q_1-\delta)\ge 0\Big)\\
&=\Pr\big(Y \ge 0\big) \quad~~~~ \Big(Y:=\underset{i\in A_R^{(0)}\cap A_R}{\sum}\mathbb{P}_i(\mathbb{Q}_{1,i}-q_1-\delta)+\underset{i\in A_R^{(0)}\setminus A_R}{\sum}\mathbb{P}_i(\mathbb{P}_{f,i}-q_1-\delta)\Big)\\
&= \Pr\big(e^{tY} \ge 1\big) \quad~~ \big(\text{ for all~~ t$>$0}\big) \quad \le \quad \underset{t>0}{\inf} \E[e^{tY}] \quad~~ \big(\because \text{Markov's inequality}\big)\\
&\le \E[e^{t_0Y}] \quad~~ \Big(t_0:=\log \Big(\frac{(q_1+\delta)(1-q_1)}{q_1(1-q_1-\delta)}\Big)\Big)\\
&= \underset{1\le i\le L_1}{\Pi}e^{t_0\mathbb{P}_i(\mathbb{Q}_{1,i}-q_1-\delta)} \underset{1\le i\le L_2}{\Pi}e^{t_0 \mathbb{P}_i(\mathbb{P}_{f,i}-q_1-\delta)}\quad~~\big(L_1:=|A_R^{(0)}\cap A_R|=\big(\frac{1}{2}-\eta\big)n, L_2:=|A_R^{(0)}\setminus A_R|=\eta n \big)\\
&= \underset{1\le i\le L_1}{\Pi}\Big\{(1-p)+p\Big(\frac{q_1^{q_1+\delta}(1-q_1)^{1-q_1-\delta}}{(q_1+\delta)^{q_1+\delta}(1-q_1-\delta)^{1-q_1-\delta}}\Big)\Big\} \underset{1\le i\le L_2}{\Pi}e^{t_0 \mathbb{P}_i(\mathbb{P}_{f,i}-q_1-\delta)}\\
&= \underset{1\le i\le L_1}{\Pi}(1-a_{\delta}p) \underset{1\le i\le L_2}{\Pi}e^{t_0 \mathbb{P}_i(\mathbb{P}_{f,i}-q_1-\delta)} \quad \text{for some constant }a_{\delta}\in (0,1)  \\
&~~~~\big(\because \text{Let }  G(x):=x^{q_1+\delta}(1-x)^{1-q_1-\delta}, \text{then } G'(x)=x^{q_1+\delta}(1-x)^{-q_1-\delta}\big\{(1-q_1-\delta)+(q_1+\delta)\frac{1-x}{x}\big\}. \\
&~~~~~~~~~~\text{Note that } G'(x)>0\text{ for } x\in (0,1), \text{ so } G(q_1+\delta)>G(q_1) \text{ which means }a_{\delta}=1-\frac{G(q_1)}{G(q_1+\delta)}\in (0,1). \big)\\
&\le \underset{1\le i\le L_1}{\Pi}(1-a_{\delta}p) \underset{1\le i\le L_2}{\Pi}(1+b_{\delta}p) \quad \text{for some constant }b_{\delta}\ge 0  \\
&~~~~~\big(\text{by calculating } \underset{1\le i\le L_2}{\Pi}e^{t_0 \mathbb{P}_i(\mathbb{P}_{f,i}-q_1-\delta)} \text{ directly, it is clear that such $b_{\delta}$ exists.}\big) \\
&= e^{(\frac{1}{2}-\eta)n\log (1-a_{\delta}p)}e^{\eta n\log (1+b_{\delta}p)} \\
&= e^{(\frac{1}{2}-\eta)n \big(-\frac{a_{\delta}}{(d^{\min}_H)^2}+o(1)\big)I_r} e^{\eta n \big(\frac{b_{\delta}}{(d^{\min}_H)^2}+o(1)\big)I_r} \\
&~~~~\big(\because \log (1-a_{\delta}p)=-a_{\delta}p+O(p^2)=I_r\big(-\frac{a_{\delta}}{(d^{\min}_H)^2}+o(1)\big)~~ \text{ where }~~I_r=p(d^{\min}_H)^2.\big)\\
&= e^{-\frac{a_{\delta}}{(d^{\min}_H)^2}nI_r\{(\frac{1}{2}-\eta) (1+o(1))-\eta  (\frac{b_{\delta}}{a_{\delta}}+o(1))\}}  \\
&\le e^{-\frac{a_{\delta}}{(d^{\min}_H)^2}(1+\frac{\epsilon}{2})\log m}  \\
&~~~~\big(\because nI_r\big\{\big(\frac{1}{2}-\eta\big) (1+o(1))-\eta  \big(\frac{b_{\delta}}{a_{\delta}}+o(1)\big)\big\}\ge \big(1+\frac{\epsilon}{2}\big)\log m \text{ for sufficiently large }n \\
&~~~~~~~~~~\text{since}~~~ \frac{1}{2}nI_r\ge (1+\epsilon)\log m \text{ and } \eta\rightarrow 0 \text{ as }n\rightarrow \infty. \big)\\
&= m^{-\frac{a_{\delta}}{(d^{\min}_H)^2}(1+\frac{\epsilon}{2})} = o(m^{-A}) ~~~~~~~~\big( A:=\frac{a_{\delta}}{(d^{\min}_H)^2}>0\big)
\end{align*}
\endgroup

So we get $\Pr(a_{j_1}-q_1\ge \delta)=o(m^{-A})$ for some $A>0$, and similarly, $\Pr(a_{j_1}-q_1\le \delta)=o(m^{-B})$ for some $B>0.$ Hence $\Pr\big(|a_{j_1}-q_1|\ge\delta\big)=o\big(m^{-\min\{A,B\}}\big)$. Note that A,B depend only on $p_1,\dots,p_d, \delta$ ~~\big($\because A=\frac{a_{\delta}}{(d^{\min}_H)^2}, a_{\delta}=1-\frac{G(q_1)}{G(q_1+\delta)}$ where $q_1\in \{p_1,\dots, p_d\}$\big), which means we can find a constant $A_{\delta}>0$ such that $\Pr(|a_{j_i}-q_i|\ge\delta)=o(m^{-A_{\delta}})$ and $\Pr(|a'_{j_i}-q_{m_0+i}|\ge\delta)=o(m^{-A_{\delta}})$ for $i=1,\dots, m_0$. Then
\begin{eqnarray}
&\quad& 1-\Pr\big(\underset{1\le i\le m_0}{\cap}\big[|a_{j_i}-q_i|< \delta\big] \cap \underset{1\le i \le m_0}{\cap}\big[|a'_{j_i}-q_{m_0+i}|< \delta\big] \big) \nonumber\\
&\le& \underset{1\le i\le m_0}{\sum}\Pr(|a_{j_i}-q_i|\ge \delta)+\underset{1\le i\le m_0}{\sum}\Pr(|a'_{j_i}-q_{m_0+i}|\ge \delta) \quad (\because \text{Union bound}) \nonumber\\
&=& \underset{1\le i\le m_0}{\sum}o(m^{-A_{\delta}})+\underset{1\le i\le m_0}{\sum}o(m^{-A_{\delta}}) \nonumber\\
&=& 2m_0 \cdot o(m^{-A_{\delta}}) = 2d \lceil \log m \rceil o(m^{-A_{\delta}}) \rightarrow 0 \text{ as } n\rightarrow \infty \nonumber
\end{eqnarray}
So we can conclude that $\Pr\big(\underset{1\le i\le m_0}{\cap}\big[|a_{j_i}-q_i|< \delta\big] \cap \underset{1\le i \le m_0}{\cap}\big[|a'_{j_i}-q_{m_0+i}|< \delta\big] \big)\rightarrow 1$ as $n\rightarrow \infty$.
\end{proof}

Applying Lemma~\ref{lem:8} with $\delta=\frac{1}{l}$, we have the following with probability approaching to $1$ as $n\rightarrow \infty$ : for all $i=1, \dots, m_0$, $|a_{j_i}-q_i|< \frac{1}{l}$ and $|a'_{j_i}-q_i|< \frac{1}{l}$. If we choose large enough $l$ satisfying $\frac{1}{l}<\frac{1}{5}\text{min}\{p_{i+1}-p_i:i=1,2,\dots,d-1\}$, one can show that $\hat{p_k}$ is a correct estimation of $p_k$ for $k=1,\dots,d$ (see Algorithm 1 for the definition of $\hat{p_k}$). In explicit, $\hat{p_k}$ is indeed the average of numbers whose distance from $p_k$ is less than $\frac{1}{l}$, hence we get $|\hat{p_k}-p_k|<\frac{1}{l}$. As we can choose arbitrary large $l$ by Lemma~\ref{lem:8}, we can observe $|\hat{p_k}-p_k|=o(1)$. Moreover, as there are finite number of choices for k,
the following holds with probability approaching to $1$ as $n\rightarrow \infty$:
\begin{equation}
    |\hat{p_k}-p_k|=o(1) ~~\text{for all }k=1,\dots, d.
\end{equation}
Note that $\frac{y}{x}$ and $\frac{1-y}{1-x}$ are continuous functions on $\mathbb{R^2}\setminus \{(x,y)| x\neq 0, x\neq 1\}$. Together with the facts that $0<p_1,\dots,p_d<1$ and that there are finite number of choices for $(i,j)$ where $1\le i,j\le d$, the following holds with probability approaching to $1$ as $n\rightarrow \infty$:
\begin{equation}\label{eq:4}
\frac{\hat{p_j}}{\hat{p_i}}=\frac{p_j}{p_i}(1+o(1)) ~\text{ and }~\frac{1-\hat{p_j}}{1-\hat{p_i}}=\frac{1-p_j}{1-p_i}(1+o(1))~\text{for all } i,j=1,\dots, d.
\end{equation}

\begin{lemma}\label{lem:9}
Define $\hat{u_R}^{(j)}=\underset{\hat{p_k}:k\in [d]}{\arg\min}\big(\underset{i\in A_R^{(0)}}{\sum}\big\{\mathbb{1}(N^{\Omega}_{ij}=1)(-\log \hat{p_k}) + \mathbb{1}(N^{\Omega}_{ij}=-1) (-\log (1-\hat{p_k}))\big\}\big)$, and $\hat{v_R}^{(j)}=\underset{\hat{p_k}:k\in [d]}{\arg\min}\big(\underset{i\in B_R^{(0)}}{\sum}\big\{\mathbb{1}(N^{\Omega}_{ij}=1)(-\log \hat{p_k}) + \mathbb{1}(N^{\Omega}_{ij}=-1) (-\log (1-\hat{p_k}))\big\}\big)$ for $j=1,\dots,m$. Then the following holds with probability approaching to $1$ as $n\rightarrow \infty$ : for all $j=1,\dots, m$, $\hat{u_R}^{(j)}\rightarrow u_R^{(j)}, \hat{v_R}^{(j)}\rightarrow v_R^{(j)}$.
\end{lemma}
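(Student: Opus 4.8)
The plan is to reduce the claim to a per-column maximum-likelihood comparison and then apply a Chernoff/Hellinger bound in the spirit of Lemma~\ref{lem:3}, followed by a union bound over columns. Fix a column $j\in[m]$ and let $a,f\in[d]$ be the indices with $u_R^{(j)}=p_a$ and $v_R^{(j)}=p_f$. With $S_1:=\sum_{i\in A_R^{(0)}}\mathbb{1}(N^{\Omega}_{ij}=1)$ and $S_{-1}:=\sum_{i\in A_R^{(0)}}\mathbb{1}(N^{\Omega}_{ij}=-1)$, the objective in Stage~2-(i) is $\hat L(\hat{p_k};A_R^{(0)},j)=-S_1\log\hat{p_k}-S_{-1}\log(1-\hat{p_k})$, so for every $b\neq a$,
\[
\hat L(\hat{p_b};A_R^{(0)},j)-\hat L(\hat{p_a};A_R^{(0)},j)=S_1\log\tfrac{\hat{p_a}}{\hat{p_b}}+S_{-1}\log\tfrac{1-\hat{p_a}}{1-\hat{p_b}}.
\]
It suffices to show that, with probability tending to $1$, this difference is strictly positive for all $j\in[m]$ and all $b\neq a$, since then $\hat{u_R}^{(j)}=\hat{p_a}$, which together with $|\hat{p_a}-p_a|=o(1)$ forces $\hat{u_R}^{(j)}\to u_R^{(j)}$; the argument for $\hat{v_R}^{(j)}$, using the cluster $B_R^{(0)}$ and index $f$, is identical by symmetry.

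To bound the complementary event I would write it as $\Pr(Z\ge 0)$, where $Z:=\sum_{i\in A_R^{(0)}}\mathbb{P}_i\bigl[\mathbb{Q}_i\log\tfrac{\hat{p_b}}{\hat{p_a}}+(1-\mathbb{Q}_i)\log\tfrac{1-\hat{p_b}}{1-\hat{p_a}}\bigr]$ with $\mathbb{P}_i\overset{\text{iid}}{\sim}\text{Bern}(p)$, $\mathbb{Q}_i\sim\text{Bern}(p_a)$ for $i\in A_R^{(0)}\cap A_R$ and $\mathbb{Q}_i\sim\text{Bern}(p_f)$ for $i\in A_R^{(0)}\setminus A_R$. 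By Markov's inequality $\Pr(Z\ge 0)\le\E[e^{Z/2}]$, and this expectation factorizes over users. By the analysis of Stage~1, $|A_R^{(0)}\cap A_R|=(\tfrac12-\eta)n$ and $|A_R^{(0)}\setminus A_R|=\eta n$ with $\eta=o(1)$; a correctly clustered user contributes the factor $(1-p)+p\bigl(p_a\sqrt{\hat{p_b}/\hat{p_a}}+(1-p_a)\sqrt{(1-\hat{p_b})/(1-\hat{p_a})}\bigr)$, which by \eqref{eq:4} equals $1-p\,d^2_H(p_a,p_b)(1+o(1))\le 1-p(d^{\min}_H)^2(1+o(1))$, while a misclustered user contributes a factor at most $1+O(p)$. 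Hence $\E[e^{Z/2}]\le\bigl(1-p(d^{\min}_H)^2(1+o(1))\bigr)^{(\frac12-\eta)n}\,e^{O(\eta np)}\le e^{-\frac n2 p(d^{\min}_H)^2(1+o(1))}$, the contamination term $e^{O(\eta np)}$ being absorbed since $\eta=o(1)$; here it suffices to treat $p=o(1)$, as the regime of non-vanishing $p$ only strengthens the bound to $e^{-\Theta(n)}$.

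Finally, the hypothesis forces $\tfrac n2 p(d^{\min}_H)^2\ge(1+\epsilon)\log m$, so $\Pr(Z\ge 0)\le m^{-(1+\epsilon)(1+o(1))}\le m^{-(1+\epsilon/2)}$ for $n$ large; a union bound over the $m$ columns, the two clusters, and the at most $d(d-1)$ ordered pairs $(a,b)$ (with $d$ constant) bounds the total failure probability by $O(d^2 m^{-\epsilon/2})\to 0$. On the complementary good event, $\argmin_{k\in[d]}\hat L(\hat{p_k};A_R^{(0)},j)=\hat{p_a}$ and $\argmin_{k\in[d]}\hat L(\hat{p_k};B_R^{(0)},j)=\hat{p_f}$ for all $j$, and combining with $|\hat{p_k}-p_k|=o(1)$ from \eqref{eq:4} yields the lemma. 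I expect the one genuinely delicate point to be the contamination from the $\eta n$ users that Stage~1 places in the wrong cluster: one must check that their contribution $e^{O(\eta np)}$ to the moment generating function is swamped by the $\Theta(np)$ margin earned from the $(\tfrac12-o(1))n$ correctly clustered users, which is precisely what $\eta\to 0$ provides. A lesser nuisance is keeping the $(1+o(1))$ errors produced by the plug-in levels $\hat{p_k}$ uniform over the finitely many indices, which \eqref{eq:4} takes care of.
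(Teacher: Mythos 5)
Your proposal is correct and follows essentially the same route as the paper's proof: a per-column likelihood comparison, a Chernoff bound at parameter $\tfrac12$ whose per-user factors reduce to $1-p\,d_H^2(p_a,p_b)(1+o(1))$ via the consistency of the plug-in levels $\hat{p_k}$, absorption of the $\eta n$ misclustered users' $1+O(p)$ factors using $\eta\to 0$, the threshold $\tfrac n2 p(d^{\min}_H)^2\ge(1+\epsilon)\log m$ to get an $m^{-(1+\epsilon/2)}$ per-column failure probability, and a union bound over columns, clusters, and level pairs. Your handling of the contaminated users inside a single factorized MGF is in fact slightly cleaner than the paper's intermediate splitting of the probability, but the substance is identical.
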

\begin{proof}
Without loss of generality, assume $u_R^{(j)}=p_1$. Let $L_{\hat{u_R}^{(j)}}(\hat{p_k}):=\underset{i\in A_R^{(0)}}{\sum}\big\{\mathbb{1}(N^{\Omega}_{ij}=1)(-\log \hat{p_k}) + \mathbb{1}(N^{\Omega}_{ij}=-1) (-\log (1-\hat{p_k}))\big\})$. Then

\begingroup
\allowdisplaybreaks
\begin{align}
&\Pr(\hat{u_R}^{(j)}\neq \hat{p_1})\le \underset{2\le k\le d}{\sum}\Pr\big(L_{\hat{u_R}^{(j)}}(\hat{p_k})\le L_{\hat{u_R}^{(j)}}(\hat{p_1})\big)\quad~~~~\big(\because \text{ Union bound}\big)\nonumber\\
&= \underset{2\le k\le d}{\sum}\Big\{\Pr\big(\underset{i\in A_R^{(0)}\cap A_R}{\sum}\big\{\mathbb{1}(N^{\Omega}_{ij}=1)\log \frac{\hat{p_k}}{\hat{p_1}} + \mathbb{1}(N^{\Omega}_{ij}=-1) \log \frac{1-\hat{p_k}}{1-\hat{p_1}}\big\} \ge 0\big) \nonumber\\
&~~~~~~~~~~~~~~+ \Pr\big(\underset{i\in A_R^{(0)}\setminus A_R }{\sum}\big\{\mathbb{1}(N^{\Omega}_{ij}=1)\log \frac{\hat{p_k}}{\hat{p_1}} + \mathbb{1}(N^{\Omega}_{ij}=-1) \log \frac{1-\hat{p_k}}{1-\hat{p_1}}\big\} \ge 0\big)\Big\}\nonumber\\
&= \underset{2\le k\le d}{\sum}\Big\{\Pr\big(\underset{i\in A_R^{(0)}\cap A_R}{\sum}\big\{\mathbb{P}_i\mathbb{P}_{1,i}\log \frac{\hat{p_k}}{\hat{p_1}} + \mathbb{P}_i(1-\mathbb{P}_{1,i}) \log \frac{1-\hat{p_k}}{1-\hat{p_1}}\big\} \ge 0\big) \nonumber\\
&~~~~~~~~~~~~~~+ \Pr\big(\underset{i\in A_R^{(0)}\setminus A_R }{\sum}\big\{\mathbb{P}_i\mathbb{P}_{f,i}\log \frac{\hat{p_k}}{\hat{p_1}} + \mathbb{P}_i(1-\mathbb{P}_{f,i}) \log \frac{1-\hat{p_k}}{1-\hat{p_1}}\big\} \ge 0\big)\Big\}\nonumber\\
&~~~~~~~~\big(\mathbb{P}_i\overset{i.i.d.}{\sim}\text{Bern}(p), \mathbb{P}_{1,i}\overset{i.i.d.}{\sim}\text{Bern}(p_1), \mathbb{P}_{f,i}\overset{i.i.d.}{\sim}\text{Bern}(p_f)\nonumber\\
&~~~~~~~~~~\text{where $p_f$ is the ground-truth latent preference level corresponding to $j$-th column of $B_R$}\big)\nonumber\\
&= \underset{2\le k\le d}{\sum}\big\{\Pr(Z_{k,1}\ge 0)+\Pr(Z_{k,2}\ge 0)\big\} = \underset{2\le k\le d}{\sum}\big\{\Pr(e^{\frac{1}{2}Z_{k,1}}\ge 1)+\Pr(e^{\frac{1}{2}Z_{k,2}}\ge 1)\big\} \nonumber\\
&\le \underset{2\le k\le d}{\sum}\big(\E[e^{\frac{1}{2}Z_{k,1}}]+\E[e^{\frac{1}{2}Z_{k,2}}]\big) \quad~~ \big(\because \text{ Markov's inequality}\big)\nonumber\\
&= \underset{2\le k\le d}{\sum}\Big[\underset{i\in A_R^{(0)}\cap A_R}{\Pi}\Big\{pp_1\sqrt{\frac{\hat{p_k}}{\hat{p_1}}} + p(1-p_1)\sqrt{\frac{1-\hat{p_k}}{1-\hat{p_1}}} +(1-p)\Big\} \underset{i\in A_R^{(0)}\setminus A_R}{\Pi}\Big\{pp_f\sqrt{\frac{\hat{p_k}}{\hat{p_1}}} + p(1-p_f)\sqrt{\frac{1-\hat{p_k}}{1-\hat{p_1}}} +(1-p)\Big\} \Big]\nonumber\\
&=\underset{2\le k\le d}{\sum}\Big[\underset{i\in A_R^{(0)}\cap A_R}{\Pi}\Big\{pp_1\sqrt{\frac{p_k}{p_1}}(1+o(1)) + p(1-p_1)\sqrt{\frac{1-p_k}{1-p_1}}(1+o(1)) +(1-p)\Big\}\nonumber\\
&~~~~~~~~~~~~~~~\cdot \underset{i\in A_R^{(0)}\setminus A_R}{\Pi}\Big\{pp_f\sqrt{\frac{p_k}{p_1}}(1+o(1)) + p(1-p_f)\sqrt{\frac{1-p_k}{1-p_1}}(1+o(1)) +(1-p)\Big\}\Big] \quad~~ \big(\because (\ref{eq:4})\big)\nonumber\\
&=\underset{2\le k\le d}{\sum}\Big[\underset{i\in A_R^{(0)}\cap A_R}{\Pi}\Big\{p\sqrt{p_1 p_k}(1+o(1)) + p\sqrt{(1-p_1)(1-p_k)}(1+o(1)) +(1-p)\Big\}\nonumber\\
&~~~~~~~~~~~~~~~\cdot \underset{i\in A_R^{(0)}\setminus A_R}{\Pi}\Big\{pp_f\sqrt{\frac{p_k}{p_1}}(1+o(1)) + p(1-p_f)\sqrt{\frac{1-p_k}{1-p_1}}(1+o(1)) +(1-p)\Big\}\Big]\nonumber\\
&\le \underset{2\le k\le d}{\sum} \underset{i\in A_R^{(0)}\cap A_R}{\Pi}\big\{p (1-(d^{\min}_H)^2 ) (1+o(1))+(1-p)\big\}\cdot  \underset{i\in A_R^{(0)}\setminus A_R}{\Pi}(1+B_k p)\quad~~\big(\text{it is clear that such $B_k>0$ exists}\big)\nonumber\\
&= \underset{2\le k\le d}{\sum} e^{(\frac{1}{2}-\eta)n \log \{p(1-(d^{\min}_H)^2 )(1+o(1))+(1-p)\}}\cdot  e^{\eta n \log (1+B_k p)}\nonumber\\
&= \underset{2\le k\le d}{\sum}e^{(\frac{1}{2}-\eta)n (-1+o(1))I_r} e^{\eta n \big(\frac{B_{k}}{(d^{\min}_H)^2}+o(1)\big)I_r} \nonumber\\
&~~~~~~\big(\because \log \big(1-p\big\{1-(1-(d^{\min}_H)^2 )\big(1+o(1)\big)\big\}\big)=-p(d^{\min}_H)^2 (1+o(1))+O(p^2)=I_r(-1+o(1)) \text{ where }I_r=p(d^{\min}_H)^2\big)\nonumber\\
&= \underset{2\le k\le d}{\sum}e^{(\frac{1}{2}-\eta)n (-1+o(1))I_r+\eta n \big(\frac{B_{k}}{(d^{\min}_H)^2}+o(1)\big)I_r}=\underset{2\le k\le d}{\sum}e^{-nI_r\big\{(\frac{1}{2}-\eta) (1+o(1))-\eta  \big(\frac{B_{k}}{(d^{\min}_H)^2}+o(1)\big)\big\}} \nonumber\\
&\le\underset{2\le k\le d}{\sum}e^{-(1+\frac{\epsilon}{2})\log m}\quad~~ \big(\because -nI_r\big\{(\frac{1}{2}-\eta) (1+o(1))-\eta  \big(\frac{B_{k}}{(d^{\min}_H)^2}+o(1)\big)\big\} \ge (1+\frac{\epsilon}{2})\log m ~~ \text{ for sufficiently large $n$}\big)\nonumber\\
&=\underset{2\le k\le d}{\sum} O(m^{-1-\frac{\epsilon}{2}})=O(m^{-1-\frac{\epsilon}{2}})\nonumber
\end{align}
\endgroup
If we set $u_R^{(j)}=p_{a_j}, v_R^{(j)}=p_{b_j}$ for $j=1,\dots,m$, above result means $\Pr(\hat{u_R}^{(j)}\neq \hat{p}_{a_j})=O(m^{-1-\frac{\epsilon}{2}})$ for $j=1,\dots, m$. Similarly, $\Pr(\hat{v_R}^{(j)}\neq \hat{p}_{b_j})=O(m^{-1-\frac{\epsilon}{2}})$ for $j=1,\dots, m$. Then 
$$\underset{1\le j \le m}{\sum}\big\{\Pr(\hat{u_R}^{(j)}\neq \hat{p}_{a_j})+\Pr(\hat{v_R}^{(j)}\neq \hat{p}_{b_j})\big\}=2m\cdot O(m^{-1-\frac{\epsilon}{2}})=o(1)$$
which implies $\hat{u_R}^{(j)}= \hat{p}_{a_j}, \hat{v_R}^{(j)}= \hat{p}_{b_j}$ for all $j=1,\dots,m$ with probability approaching to $1$ as $n\rightarrow \infty$. As $\hat{p}_{a_j}\rightarrow p_{a_j}, \hat{p}_{b_j}\rightarrow p_{b_j}$ for all $j=1,\dots,m$ with probability approaching to $1$ as $n\rightarrow \infty$ by (6), we can conclude that $\hat{u_R}^{(j)}\rightarrow p_{a_j} (=u_R^{(j)}), \hat{v_R}^{(j)}\rightarrow p_{b_j} (=v_R^{(j)})$ for all $j=1,\dots,m$ with probability approaching to $1$ as $n\rightarrow \infty$.
\end{proof}
Lemma~\ref{lem:9} implies the success of Stage 2-(i).
\end{proof}

\textbf{Analysis of Stage 2-(ii).} With probability approaching to $1$ as $n\rightarrow \infty$, $T:=\lceil \log n\rceil$ iterations ensure that $\hat{A_R} = A_R$ and $\hat{B_R} = B_R$ will be recovered exactly.

\begin{proof}
Let $L(i;A,B):=-\log\big(\frac{{\alpha}(1-{\beta})}{{\beta}(1-{\alpha})}\big)e(\{i\},A)-\big[\underset{j:N^{\Omega}_{ij}=1}{\sum}\log(u_R^{(j)})+\underset{j:N^{\Omega}_{ij}=-1}{\sum}\log(1-u_R^{(j)})\big]+\log\big(\frac{{\alpha}(1-{\beta})}{{\beta}(1-{\alpha})}\big)e(\{i\},B)+\big[\underset{j:N^{\Omega}_{ij}=1}{\sum}\log(v_R^{(j)})+\underset{j:N^{\Omega}_{ij}=-1}{\sum}\log(1-v_R^{(j)})\big]$,\\
and $\hat{L}(i;A,B):=-\log\big(\frac{\hat{\alpha}(1-\hat{\beta})}{\hat{\beta}(1-\hat{\alpha})}\big)e(\{i\},A)-\big[\underset{j:N^{\Omega}_{ij}=1}{\sum}\log(\hat{u_R}^{(j)})+\underset{j:N^{\Omega}_{ij}=-1}{\sum}\log(1-\hat{u_R}^{(j)})\big]+\log\big(\frac{\hat{\alpha}(1-\hat{\beta})}{\hat{\beta}(1-\hat{\alpha})}\big)e(\{i\},B)+\big[\underset{j:N^{\Omega}_{ij}=1}{\sum}\log(\hat{v_R}^{(j)})+\underset{j:N^{\Omega}_{ij}=-1}{\sum}\log(1-\hat{v_R}^{(j)})\big]$.

There are $T$ iterations in Stage 2-(ii), and at $t$-th iteration, Algorithm 1 updates every user's affiliation by the following rule : put user $i$ to $A_R^{(t)}$ if $\hat{L}(i;A_R^{(t-1)},B_R^{(t-1)})<0$; put user $i$ to $B_R^{(t)}$ otherwise. In Lemma 11, we show the following holds with probability approaching to $1$ as $n\rightarrow \infty$: if we use $L(i;A,B)$ instead of $\hat{L}(i;A,B)$, $A_R,B_R$ can be recovered exactly within $1$ iteration of Stage 2-(ii). 

\begin{lemma}\label{lem:10}
Suppose $\frac{1}{2}nI_s+\gamma m p(d^{\min}_H)^2\ge (1+\epsilon)\log n$. Then there exist a constant $\tau >0$ such that $L(i;A_R,B_R)<-\tau \log n$ if $i\in A_R$; $L(i;A_R,B_R)>\tau \log n$ if $i\in B_R$ with probability $1-O(n^{-\frac{\epsilon}{2}})$
\end{lemma}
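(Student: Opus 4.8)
The plan is to prove the bound for a fixed $i\in A_R$; the case $i\in B_R$ then follows by swapping the roles of $(A_R,u_R)$ and $(B_R,v_R)$, and the whole argument is a single Chernoff bound of exactly the type used in Lemma~\ref{lem:3}, the only difference being that the genie-supplied \emph{true} parameters $\alpha,\beta,u_R,v_R$ appear in place of estimates. First I would rewrite $L:=L(i;A_R,B_R)$ as a sum of mutually independent terms. Collecting the graph and rating contributions as in Lemma~\ref{lem:1},
\[
L=\log\!\Big(\tfrac{\alpha(1-\beta)}{(1-\alpha)\beta}\Big)\big(e(\{i\},B_R)-e(\{i\},A_R)\big)+\!\!\!\sum_{j:\,u_R^{(j)}\ne v_R^{(j)}}\!\!\!\Big[\mathbb{1}(N^{\Omega}_{ij}=1)\log\tfrac{v_R^{(j)}}{u_R^{(j)}}+\mathbb{1}(N^{\Omega}_{ij}=-1)\log\tfrac{1-v_R^{(j)}}{1-u_R^{(j)}}\Big],
\]
so only the $\lceil\gamma m\rceil$ coordinates on which $u_R,v_R$ differ contribute. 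Since $i\in A_R$: $e(\{i\},A_R)$ is a sum of $\tfrac{n}{2}-1$ i.i.d.\ $\text{Bern}(\alpha)$ random variables, $e(\{i\},B_R)$ is a sum of $\tfrac{n}{2}$ i.i.d.\ $\text{Bern}(\beta)$ random variables, and on a differing coordinate $j$ with $u_R^{(j)}=p_a$, $v_R^{(j)}=p_b$ we have $\mathbb{1}(N^{\Omega}_{ij}=1)=\mathbb{P}_{ij}\mathbb{P}_{a,ij}$ and $\mathbb{1}(N^{\Omega}_{ij}=-1)=\mathbb{P}_{ij}(1-\mathbb{P}_{a,ij})$ with $\mathbb{P}_{ij}\sim\text{Bern}(p)$ and $\mathbb{P}_{a,ij}\sim\text{Bern}(p_a)$, all independent. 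Every summand has negative mean (the graph mean is $\log(\tfrac{\alpha(1-\beta)}{(1-\alpha)\beta})\cdot\tfrac{n}{2}(\beta-\alpha)\le 0$ and each differing rating coordinate contributes $-p\,D_{\mathrm{KL}}(\text{Bern}(p_a)\|\text{Bern}(p_b))<0$), so the claim is a deviation statement.

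Next I would apply Markov's inequality to $e^{\frac12(L+\tau\log n)}$, as in Lemma~\ref{lem:3}: for a constant $\tau>0$ to be fixed,
\[
\Pr\big(L\ge -\tau\log n\big)=\Pr\big(e^{\frac12(L+\tau\log n)}\ge 1\big)\le n^{\tau/2}\,\E\big[e^{\frac12 L}\big],
\]
and $\E[e^{\frac12 L}]$ factorizes over the independent summands. For the rating part, each differing coordinate contributes $(1-p)+p\big(\sqrt{p_ap_b}+\sqrt{(1-p_a)(1-p_b)}\big)=1-p\,d^2_H(p_a,p_b)\le 1-p(d^{\min}_H)^2$, so the rating factor is at most $\big(1-p(d^{\min}_H)^2\big)^{\lceil\gamma m\rceil}\le e^{-\gamma m p(d^{\min}_H)^2}$ using $\log(1-x)\le -x$. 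For the graph part, pairing each $\text{Bern}(\alpha)$ edge slot with a $\text{Bern}(\beta)$ edge slot reproduces the computation of Lemma~\ref{lem:3}: each pair contributes $(1-d^2_H(\alpha,\beta))^2$, so the $\tfrac{n}{2}-1$ pairs give $(1-d^2_H(\alpha,\beta))^{n-2}=e^{-(\frac{n}{2}-1)I_s}$, while the single leftover $\text{Bern}(\beta)$ slot contributes $\beta\sqrt{\tfrac{\alpha(1-\beta)}{(1-\alpha)\beta}}+(1-\beta)=1+o(1)$ because $\alpha=O(\tfrac{\log n}{n})$; since moreover $d^2_H(\alpha,\beta)\le\alpha=O(\tfrac{\log n}{n})$ gives $e^{I_s}=1+o(1)$, the graph factor is at most $e^{-\frac{n}{2}I_s+o(1)}$. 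Multiplying, $\E[e^{\frac12 L}]\le e^{-(\frac{n}{2}I_s+\gamma m p(d^{\min}_H)^2)+o(1)}\le e^{-(1+\epsilon)\log n+o(1)}\le n^{-1-\epsilon/2}$ for all sufficiently large $n$, by the hypothesis $\tfrac{n}{2}I_s+\gamma m p(d^{\min}_H)^2\ge(1+\epsilon)\log n$.

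Combining the two displays gives $\Pr(L(i;A_R,B_R)\ge -\tau\log n)\le n^{\tau/2-1-\epsilon/2}$, so any fixed $\tau\in(0,2]$ (e.g.\ $\tau=1$) yields $\Pr(L(i;A_R,B_R)\ge -\tau\log n)=O(n^{-1/2-\epsilon/2})=O(n^{-\epsilon/2})$, which is the $i\in A_R$ assertion. The $i\in B_R$ assertion follows because $-L(i;A_R,B_R)$ has exactly the law analyzed above with $(A_R,u_R)$ and $(B_R,v_R)$ interchanged, giving $\Pr(L(i;A_R,B_R)\le\tau\log n)=O(n^{-\epsilon/2})$. (If the local-refinement argument that invokes this lemma needs the dichotomy to hold simultaneously for all $n$ users, one instead picks any $\tau<\epsilon$ and unions the $n$ per-user bounds, since $n\cdot n^{\tau/2-1-\epsilon/2}=n^{\tau/2-\epsilon/2}\to 0$.)

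The step I expect to demand the most care is the graph-side MGF bookkeeping in the vanishing regime $\alpha,\beta\to 0$ forced by $\alpha=O(\tfrac{\log n}{n})$: one must verify that the leftover single-edge factor and the $e^{I_s}$ correction are both $1+o(1)$, so that the accumulated graph contribution is genuinely $e^{-\frac{n}{2}I_s(1+o(1))}$ and matches the $\tfrac{n}{2}I_s$ term of the hypothesis. Everything else is a direct $d$-level transcription of Lemma~\ref{lem:3} and of Lemmas~6--7 of~\cite{Ahn_2018}, with $d^{\min}_H$ serving as the uniform lower bound on $d_H(p_a,p_b)$ over $a\ne b$ and with the genie removing any estimation error in $\alpha,\beta,u_R,v_R$.
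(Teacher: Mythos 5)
Your proposal is correct and follows exactly the route the paper intends: the paper's own ``proof'' of this lemma is a one-line deferral to Lemma~9 of \cite{Ahn_2018}, whose argument is precisely the per-user Chernoff/MGF bound at exponent $\tfrac12$ that you carry out, with the Hellinger factors $1-p\,d_H^2(p_a,p_b)\le 1-p(d_H^{\min})^2$ and $(1-d_H^2(\alpha,\beta))^2=e^{-I_s}$ matching the computation already done in Lemma~3. You have in effect supplied the $d$-level details the paper omits, and the bookkeeping (the leftover $\mathrm{Bern}(\beta)$ slot and the $e^{I_s}=1+o(1)$ correction under $\alpha=O(\tfrac{\log n}{n})$, and the choice of $\tau$) all checks out.
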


\begin{proof}
This lemma can be proved similarly by applying the argument of Lemma 9 in~\cite{Ahn_2018}.
\end{proof}

Our goal is to show $A_R,B_R$ can be recovered exactly by using $\hat{L}(i;A,B)$ in Stage 3. Define $\mathcal{Z}_{\delta}:=\big\{(A,B) : A\cup B=[n], A\cap B=\emptyset, |A\bigtriangleup A_R|=|B\bigtriangleup B_R|<\delta n\big\}$ for $\delta \in [\frac{1}{n},\frac{1}{2})$. 

\begin{lemma}\label{lem:11}
Suppose $\alpha=\Theta (\frac{\log n}{n})$ For arbitrary $\tau>0$, there exists $\delta_0<\frac{1}{2}$ such that if $\delta<\delta_0$, the following holds with probability $1-O(n^{-1})$ : for all $(A,B)\in Z_{\delta}$, $|L(i;A,B)-L(i;A_R,B_R)|\le \frac{\tau}{2}\log n$, for all except $\frac{\delta}{2}$ many $i's$.
\end{lemma}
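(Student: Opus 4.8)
The plan is to reduce the statement to a purely graph-theoretic \emph{discrepancy} estimate. The first observation is that in $L(i;A,B)-L(i;A_R,B_R)$ the rating terms cancel: the quantities $\sum_{j:N^{\Omega}_{ij}=1}\log(u_R^{(j)})+\sum_{j:N^{\Omega}_{ij}=-1}\log(1-u_R^{(j)})$ and their $v_R$-analogue depend only on $i$ and $N^{\Omega}$, not on the partition, so
$$L(i;A,B)-L(i;A_R,B_R)=c_s\Big[\big(e(\{i\},B)-e(\{i\},B_R)\big)-\big(e(\{i\},A)-e(\{i\},A_R)\big)\Big],\qquad c_s:=\log\Big(\tfrac{\alpha(1-\beta)}{\beta(1-\alpha)}\Big).$$
In the regime of interest ($\alpha=\Theta(\log n/n)$, $\alpha\ge\beta$ of the same order) we have $|c_s|=O(1)$; set $h:=\tfrac{\tau}{4|c_s|}\log n=\Theta(\log n)$. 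Since $|e(\{i\},B)-e(\{i\},B_R)|\le e(\{i\},B\bigtriangleup B_R)$ and likewise for $A$, and since on $\mathcal{Z}_{\delta}$ one has $|A\bigtriangleup A_R|=|B\bigtriangleup B_R|<\delta n$, a vertex $i$ can violate $|L(i;A,B)-L(i;A_R,B_R)|\le\tfrac{\tau}{2}\log n$ only if $e(\{i\},A\bigtriangleup A_R)>h$ or $e(\{i\},B\bigtriangleup B_R)>h$. Hence it suffices to exhibit a single event, of probability $1-O(n^{-1})$, on which for \emph{every} $S\subseteq[n]$ with $|S|<\delta n$ the number of vertices $i$ with $e(\{i\},S)>h$ is at most $\tfrac{\delta n}{4}$; applying this to $S=A\bigtriangleup A_R$ and to $S=B\bigtriangleup B_R$ then bounds the exceptional set by $\tfrac{\delta n}{2}$ for every $(A,B)\in\mathcal{Z}_{\delta}$ simultaneously.

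To prove that event I would use a Feige--Ofek--type counting argument rather than a union bound over partitions (which is hopeless, since there are $2^{\Theta(n\log(1/\delta))}$ of them). If, for some $S$ with $|S|<\delta n$, there were a set $T$ of $\tfrac{\delta n}{4}$ vertices each with $e(\{i\},S)>h$, then $e(T,S)\ge|T|h=\Theta(\delta n\log n)$. On the other hand, for any fixed pair $(T,S)$ with $|T|=\tfrac{\delta n}{4}$, $|S|<\delta n$, the count $e(T,S)$ is dominated (up to a factor $2$ for edges inside $T\cap S$) by a sum of $|T||S|$ independent $\text{Bern}(\le\alpha)$ variables, so $\E[e(T,S)]\le\alpha|T||S|=\Theta(\delta^{2}n\log n)$. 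Since the target $\Theta(\delta n\log n)$ exceeds the mean by a factor $\Theta(1/\delta)\gg 1$ once $\delta$ is below an absolute constant, a Chernoff bound gives $\Pr\big(e(T,S)\ge|T|h\big)\le\exp(-\Omega(\delta n\log n))$. Taking a union bound over the at most $\binom{n}{\delta n/4}\binom{n}{\delta n}=\exp(O(\delta n\log(1/\delta)))$ such pairs and using that $\exp(-\Omega(\delta n\log n))\cdot\exp(O(\delta n\log(1/\delta)))=\exp(-\Omega(\delta n\log n))=o(n^{-1})$ for a fixed $\delta$ as $n\to\infty$ completes the argument. One picks $\delta_{0}$ so small that both ``$|T|h\gg\alpha|T||S|$'' (controlling the mean) and ``$h\gg\log(1/\delta)$'' (controlling the entropy) hold with a fixed margin for all $\delta<\delta_{0}$; this is exactly where the dependence of $\delta_{0}$ on $\tau$ (through $h$) enters.

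The main obstacle is the uniformity over the exponentially large family $\mathcal{Z}_{\delta}$: individual deviation probabilities $\Pr(|L(i;A,B)-L(i;A_R,B_R)|>\tfrac{\tau}{2}\log n)$ are only polynomially small, so no per-partition union bound can survive. The resolution is the reformulation above — ``few vertices are heavy toward the symmetric-difference set $S$'' — followed by passing to $e(T,S)$, whose concentration is strong enough to absorb a union bound over \emph{all} vertex subsets of the relevant sizes precisely because the deviation scale $\Theta(\delta n\log n)$ dominates the combinatorial entropy $\Theta(\delta n\log(1/\delta))$. The bookkeeping of constants (matching $h$, $\tau$, $\delta_{0}$, and the $|c_s|=O(1)$ bound coming from $\alpha=\Theta(\log n/n)$) is routine once this structure is in place, and the remaining steps parallel Lemma~9 in~\cite{Ahn_2018}.
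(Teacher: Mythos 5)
Your argument is correct and is essentially the same one the paper relies on: the paper offers no self-contained proof here, deferring entirely to Lemma 10 of \citet{Ahn_2018}, and that lemma is proved by exactly the reduction you describe — the rating terms cancel in $L(i;A,B)-L(i;A_R,B_R)$, leaving $c_s$ times edge counts into the symmetric differences, after which one bounds, uniformly over all small sets $S$, the number of vertices with more than $\Theta(\log n)$ edges into $S$ via a Chernoff bound on $e(T,S)$ whose deviation scale $\Theta(\delta n\log n)$ beats the subset entropy $\Theta(\delta n\log(1/\delta))$. One caveat worth recording: your step $|c_s|=O(1)$ (hence $h=\Theta(\log n)$) requires $\beta=\Theta(\alpha)$, which is not among the lemma's stated hypotheses ($\alpha=\Theta(\frac{\log n}{n})$ alone allows $\beta\ll\alpha$, in which case $c_s=\Theta(\log n)$, a single edge into $A\bigtriangleup A_R$ already shifts $L$ by $\Theta(\log n)$, and the conclusion can fail); this is an implicit assumption inherited from \citet{Ahn_2018} rather than a defect of your argument, but it should be stated explicitly.
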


\begin{proof}
This lemma can be proved similarly by applying the argument of Lemma 10 in~\cite{Ahn_2018}.
\end{proof}

\begin{lemma}\label{lem:12}
Suppose $\alpha=\Theta (\frac{\log n}{n}), p=\Theta(\frac{\log n}{m}+\frac{\log m}{n}), m=O(n)$. For arbitrary $\tau>0$, the following holds with probability approaching to 1 as $n\rightarrow \infty$: for all $A,B\subset [n]$, and $i\in [n]$, $|\hat{L}(i;A,B)-L(i;A,B)|\le \frac{\tau}{2}\log n$.
\end{lemma}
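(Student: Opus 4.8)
The plan is to expand the difference $\hat L(i;A,B)-L(i;A,B)$ as an affine functional of the graph edge counts and the observed-rating sums, and then bound it by a product of ``coefficient'' factors that I will show are $o(1)$ and ``count'' factors that I will show are $O(\log n)$ uniformly in $i$. Writing $\lambda:=\log\frac{\alpha(1-\beta)}{\beta(1-\alpha)}$ and $\hat\lambda:=\log\frac{\hat\alpha(1-\hat\beta)}{\hat\beta(1-\hat\alpha)}$, one has
\begin{align*}
\hat L(i;A,B)-L(i;A,B)
&=(\lambda-\hat\lambda)\big(e(\{i\},A)-e(\{i\},B)\big)\\
&\quad+\sum_{j:N^{\Omega}_{ij}=1}\Big(\log\frac{u_R^{(j)}}{\hat{u_R}^{(j)}}-\log\frac{v_R^{(j)}}{\hat{v_R}^{(j)}}\Big)
+\sum_{j:N^{\Omega}_{ij}=-1}\Big(\log\frac{1-u_R^{(j)}}{1-\hat{u_R}^{(j)}}-\log\frac{1-v_R^{(j)}}{1-\hat{v_R}^{(j)}}\Big).
\end{align*}
The point is that the coefficients $\lambda-\hat\lambda$ and $\log(u_R^{(j)}/\hat{u_R}^{(j)})$, $\log((1-u_R^{(j)})/(1-\hat{u_R}^{(j)}))$, $\ldots$ do not depend on $A,B,i$, so a single $o(1)$ bound on each, together with uniform $O(\log n)$ bounds on $|e(\{i\},A)-e(\{i\},B)|\le\deg_G(i)$ and on $|\{j:N^{\Omega}_{ij}\neq 0\}|$, gives the claim uniformly over all $A,B\subset[n]$ and $i\in[n]$.

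For the graph term I would first establish $\hat\alpha=(1+o(1))\alpha$ and $\hat\beta=(1+o(1))\beta$ with probability $\to 1$. This combines the Stage-1 guarantee $\eta:=|A_R^{(0)}\setminus A_R|/n\to 0$ --- relabelling at most $\eta n$ vertices perturbs $\sum_k e(A_k^{(0)},A_k^{(0)})$ and $\sum_{k_1\neq k_2}e(A_{k_1}^{(0)},A_{k_2}^{(0)})$ by only $O(\eta n^2\max(\alpha,\beta))=o(n\log n)$ --- with Chernoff concentration of the underlying SBM edge counts, whose means $\Theta(n^2\alpha)=\Theta(n\log n)$ diverge so that the relative fluctuation is $o(1)$; the normalizing counts $\sum_k\binom{|A_k^{(0)}|}{2}$ and $\sum_{k_1\neq k_2}|A_{k_1}^{(0)}||A_{k_2}^{(0)}|$ are $\Theta(n^2)$. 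Since $\alpha,\beta=\Theta(\log n/n)\to 0$, this yields $\lambda-\hat\lambda=\log\frac{\hat\alpha/\alpha}{\hat\beta/\beta}+O(\alpha+\beta)=o(1)$. A Chernoff bound on $\deg_G(i)$ (mean $\Theta(\log n)$) followed by a union bound over the $n$ vertices gives $\max_i\deg_G(i)=O(\log n)$ with probability $\to 1$, so the graph term is $o(\log n)$ uniformly in $A,B,i$.

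For the rating terms I would invoke Lemma~\ref{lem:9} and the analysis of Stage~2-(i): on an event of probability $\to 1$, for every $j$ we have $\hat{u_R}^{(j)}=\hat{p}_{a_j}$, $\hat{v_R}^{(j)}=\hat{p}_{b_j}$ with $u_R^{(j)}=p_{a_j}$, $v_R^{(j)}=p_{b_j}$, and $|\hat{p_k}-p_k|=o(1)$ for each of the finitely many $k\in[d]$; since each $p_k$ is a fixed constant in $(0,1)$, continuity of $x\mapsto\log x$ and $x\mapsto\log(1-x)$ gives $\log(u_R^{(j)}/\hat{u_R}^{(j)})=o(1)$, and likewise for the $(1-\cdot)$ ratios and for $v_R$, uniformly in $j$. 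The count factor $|\{j:N^{\Omega}_{ij}\neq 0\}|$ is $\mathrm{Bin}(m,p)$ with mean $mp=\Theta(\log n+m\log m/n)=O(\log n)$ because $m=O(n)$; a Chernoff bound plus a union bound over the $n$ users gives $\max_i|\{j:N^{\Omega}_{ij}\neq 0\}|=O(\log n)$ with probability $\to 1$. Hence each rating sum is $o(1)\cdot O(\log n)=o(\log n)$ uniformly.

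Finally, intersecting the finitely many probability-$(1-o(1))$ events and adding the three $o(\log n)$ bounds yields $\sup_{A,B\subset[n],\,i\in[n]}|\hat L(i;A,B)-L(i;A,B)|=o(\log n)\le\tfrac{\tau}{2}\log n$ for all sufficiently large $n$. I expect the main obstacle to be the first step: converting the \emph{misclustering} bound $\eta\to 0$ into a \emph{multiplicative} $(1+o(1))$ estimate of $\hat\alpha,\hat\beta$ in the sparse regime $\alpha,\beta=\Theta(\log n/n)$, i.e.\ checking simultaneously that the mislabelled vertices contribute only lower-order perturbations to the edge counts and that those counts concentrate tightly enough (they do, since their means diverge). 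The remaining pieces --- the degree and observation-count tail bounds, and the continuity argument turning $|\hat{p_k}-p_k|=o(1)$ into $o(1)$ log-ratios --- are routine and parallel the corresponding arguments used for Lemmas~\ref{lem:10} and~\ref{lem:11} and in~\cite{Ahn_2018}.
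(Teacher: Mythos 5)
Your argument is correct and is essentially the same one the paper relies on: the paper's proof of this lemma is a one-line deferral to the argument of Lemma~11 in \cite{Ahn_2018}, which proceeds exactly by your decomposition into $(\lambda-\hat\lambda)\bigl(e(\{i\},A)-e(\{i\},B)\bigr)$ plus the rating-log-ratio sums, with the coefficients shown to be $o(1)$ via consistency of $\hat\alpha,\hat\beta,\hat{p_k}$ and the counts $\deg_G(i)$ and $|\{j:N^{\Omega}_{ij}\neq 0\}|$ shown to be $O(\log n)$ uniformly by Chernoff plus a union bound. You have in effect supplied the details the paper omits by citation, including correctly identifying the only delicate step (turning the misclustering rate $\eta\to 0$ into multiplicative consistency of $\hat\alpha,\hat\beta$ in the $\Theta(\log n/n)$ regime).
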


\begin{proof}
This lemma can be proved similarly by applying the argument of Lemma 11 in~\cite{Ahn_2018}.
\end{proof}

By Lemma~\ref{lem:11}, there exists $\delta_0 < \frac{1}{2}$ such that if $\delta<\delta_0$, the following holds with probability $1-O(n^{-1})$ : for all $(A,B)\in Z_{\delta}$, $|L(i;A,B)-L(i;A_R,B_R)|\le \frac{\tau}{2}\log n$, for all except $\frac{\delta}{2}$ many $i's$. At the same time, by applying Lemma~\ref{lem:12} to $(A,B)\in \mathcal{Z}_{\delta}$, the following holds with probability approaching to 1 as $n\rightarrow \infty$: for all $(A,B)\in \mathcal{Z}_{\delta}$, and $i\in [n]$, $|\hat{L}(i;A,B)-L(i;A,B)|\le \frac{\tau}{2}\log n$. Combining these two results, the following holds with probability approaching to 1 as $n\rightarrow \infty$ : for all $(A,B)\in Z_{\delta}$, $|\hat{L}(i;A,B)-L(i;A_R,B_R)|\le|\hat{L}(i;A,B)-L(i;A,B)|+|L(i;A,B)-L(i;A_R,B_R)|\le \frac{\tau}{2}\log n+\frac{\tau}{2}\log n = \tau\log n$, for all except $\frac{\delta}{2}$ many $i's$. Then together with Lemma~\ref{lem:10}, we eventually get the following holds with probability approaching to 1 as $n\rightarrow \infty$ : for all $(A,B)\in Z_{\delta}$,
$$\hat{L}(i;A,B)=
\begin{cases}
\le L(i;A_R,B_R)+|\hat{L}(i;A,B)-L(i;A_R,B_R)|<-\tau \log n+\tau \log n=0 \quad \text{if } i\in A_R;\\
\ge L(i;A_R,B_R)-|\hat{L}(i;A,B)-L(i;A_R,B_R)|>\tau \log n-\tau \log n=0 \quad~~~ \text{if } i\in B_R;
\end{cases}$$
for all except $\frac{\delta}{2}$ many $i's$. This means that at each iteration of Stage 2-(ii), every user's affiliation will be updated to the correct one except for $\frac{\delta}{2}$ many $i's$. So the following holds with probability approaching to 1 as $n\rightarrow \infty$ : whenever $(A,B)$ belongs to $\mathcal{Z}_{\delta}$, the result of single iteration of Stage 2-(ii) belongs to $\mathcal{Z}_{\frac{\delta}{2}}$. Then $T=\lceil \frac{\log (\delta_0 n)}{\log 2}\rceil$ iterations guarantee the exact recovery of $A_R,B_R$.
\end{proof}

\section{An Alternative Algorithm}\label{sec:E} As we mentioned in Remark~\ref{rmk:12}, we suggest an alternative algorithm, which utilizes both rating and graph data at Stage~1.
Analyzing the performance of this new algorithm is an interesting open problem.

\begin{algorithm}
   \caption{}
   \label{alg:2}

\begin{enumerate}[topsep=0pt,leftmargin=15mm,label=(\textbf{Stage \arabic*}),wide, labelindent=0pt]
    \item[\textbf{Input:}] $N^{\Omega}\in \{-1, 0, +1\}^{n\times m}$, $G=([n], E)$, $K$, $d$
    \item[\textbf{Output:}] Clusters of users $A_1^{(1)},\dots, A_K^{(1)}$, latent preference vectors $\hat{u_1}^{(1)},\dots, \hat{u_K}^{(1)}$
    
    \item[\textbf{Preprocessing:}] We first concatenate $G$ and $N^{\Omega}$ to get a new matrix $[G|N^{\Omega}] \in \{-1, 0, +1\}^{n\times (n+m)}$. We denote $[G|N^{\Omega}]$ by $I_0$. To make \textbf{Stage 1} and \textbf{Stage 2-(i)} independent, we split the information of $I_0$ by using the technique used in~\cite{Abbe_2016}. In specific, we generate a matrix $M_1\in \{0, +1\}^{n\times (n+m)}$ where each entry is drawn independently from the Bernoulli distribution with parameter $\frac{1}{\sqrt{\log n}}$. A matrix $M_2\in \{0, +1\}^{n\times (n+m)}$ is defined as $(\mathbf{1}_{n, (n+m)}-M_1)$. Then, we let $I_1 = I_0 \circ M_1$ and $I_2 = I_0 \circ M_2$, where $\circ$ is the Hadamard product.
    
    \item[\textbf{Stage 1. Partial recovery of clusters}] We apply \textbf{Part I} of the algorithm proposed in~\cite{Awasthi_2012} to $I_1$: (i) we project $I_1$ onto the subspace spanned by the top $K$ singular vectors, and we denote the projected matrix by $\hat{I_1}$; (ii) run a 10-approximate $k$-means algorithm on $\hat{I_1}$, and obtain an initial clustering result $A_1^{(0)},\dots, A_K^{(0)}$.
    
    \item[\textbf{Stage 2-(i)}] Run \textbf{Stage 2-(i)} of Alg.~\ref{alg:1} on $I_2$.
    
    \item[\textbf{Stage 2-(ii)}] Run \textbf{Stage 2-(ii)} of Alg.~\ref{alg:1} on $I_0$.
\end{enumerate}
\end{algorithm}

\end{document}